\newtheorem{assumption}{Assumption}
\newenvironment{assumption*}
 {\ifnum\value{subassumption}=0 \stepcounter{assumption}\fi\subassumption}
 {\endsubassumption}
\newenvironment{assumption+}[1]
 {\subassumption}
 {\endsubassumption}
\newtheorem{claim}{Proposition}
\newtheorem{lemma}{Lemma}
\newtheorem{theorem}{Theorem}
\newtheorem{definition}{Definition}
\theoremstyle{definition}
\newtheorem{example}{Example}
\newcommand{\R}{\ensuremath{\mathbb{R}}}
\newcommand{\bbone}{\ensuremath{\mathbbm{1}}}
\newcommand{\E}{\ensuremath{\mathbb{E}}}
\newcommand{\argmin}{\text{argmin} \;\; }
\newcommand{\sep}{\text{sep}}
\newcommand{\cat}{\text{cat}}
\newcommand{\avg}{\text{avg}}
\def\b1{\boldsymbol{1}}
\newcommand{\com}{\text{com}}
\title{Using Multiple Outcomes to Improve the Synthetic Control Method\thanks{We thank Alberto Abadie, Anish Agarwal, David Bruns-Smith, Juanjo Dolado, Bruno Ferman, Brian Jacob, Daniel Lewis, Jesse Rothstein, Kaspar W{\"u}thrich, and Yiqing Xu for helpful discussions on this project. The paper also benefited from the comments of seminar and conference audiences at the ASSA 2024, University of Exeter Business School Advances in DiD Workshop, SED 2024, Polmeth 2024, IMS 2024, JSM 2024. We would like to thank Brian Jacob for his assistance in analyzing data relating to the Flint water crisis. The data used by Jacob and co-authors in \citet{trejo_psychosocial_2021} was provided by the Michigan Education Data Center under restricted use license. For this paper, Jacob ran code we provided to him, which generated the results presented in this paper. This allowed us to examine the value of our proposed approach in a realistic context without having to access restricted data ourselves. Zhenghao Chen provided expert research assistance. Avi Feller and Liyang Sun gratefully acknowledge support from the Institute of Education Sciences, U.S. Department of Education, through
Grant R305D200010. Liyang Sun also acknowledges support from Ayudas Juan de la Cierva Formaci\'on and Grant PID2022-143184NA-I00 funded by MCIU/AEI/ 10.13039/501100011033, and by FEDER, UE. See \href{https://github.com/ebenmichael/augsynth}{\texttt{augsynth}} for the associated \texttt{R} library.
}}
\begin{document}

\date{\today}

\author{Liyang Sun\footnote{Department of Economics, University College London and CEMFI, Email: liyang.sun@ucl.ac.uk}, Eli Ben-Michael\footnote{Department of Statistics \& Data Science and Heinz College of Information Systems \& Public Policy, Carnegie Mellon University.}, and Avi Feller\footnote{Goldman School of Public Policy \& Department of Statistics, University of California, Berkeley.}} 

\maketitle
\centerline{\bf Abstract}%
\smallskip%
\noindent{%
\singlespacing
\small{When there are multiple outcome series of interest, Synthetic Control analyses typically proceed by estimating separate weights for each outcome. In this paper, we instead propose estimating a common set of weights across outcomes, by balancing either a vector of all outcomes or an index or average of them. Under a low-rank factor model, we show that these approaches lead to lower bias bounds than separate weights, and that averaging leads to further gains when the number of outcomes grows. We illustrate this via a re-analysis of the impact of the Flint water crisis on educational outcomes.
}
\smallskip

\noindent\texttt{Key words:}  panel data, synthetic control method, linear factor model 

\smallskip

\noindent\texttt{JEL classification codes:} C13, C21, C23.}
\clearpage 
\section{Introduction}
The synthetic control method (SCM) estimates a treated unit's counterfactual untreated outcome via a weighted average of observed outcomes for untreated units, with weights chosen to match the treated unit's pre-treatment outcomes as closely as possible \citep{AbadieAlbertoDiamond2010}.
In many applications, researchers are interested in multiple outcome series at once, such as both reading and math scores in educational applications \citep[e.g.,][]{trejo_psychosocial_2021} or both low-wage employment and earnings when studying minimum wage changes \citep[e.g.,][]{jardin2017minimum}. Other recent empirical examples with multiple series include \cite{billmeier_assessing_2013,kleven_taxation_2013,bohn_did_2014,pinotti_economic_2015,acemoglu_value_2016,dustmann_labor_2017,cunningham_decriminalizing_2018,kasy2022employing}. There is limited practical guidance for using SCM in this common setting, however, and researchers generally default to estimating separate weights for each outcome.

Like other single-outcome SCM analyses, this separate SCM approach can run into two main challenges for a given number of pre-treatment periods.
In shorter panels, SCM weights can sometimes achieve perfect or near-perfect pre-treatment fit --- but can overfit to idiosyncratic errors, rather than find weights that balance latent factors 
\citep{AbadieAlbertoDiamond2010}. 
At the other extreme, longer panels can mitigate overfitting but are more likely to result in poor pre-treatment fit, which can also introduce bias \citep{ferman2018revisiting, benmichael2021_ascm}.
In addition to these statistical concerns, separate weights for each outcome series can be difficult to interpret, since SCM weights --- including which ``donor units'' have non-zero weight --- typically differ across separate SCM fits.

In this paper, we show that estimating a single set of weights common to multiple outcome series can help address these challenges in cases where the outcomes share a common factor structure, while also offering greater interpretability compared to using separate weights. 
We consider two approaches.
First, following \citet{tian2023synthetic} as well as several recent empirical studies, we find a single set of \emph{concatenated} weights: SCM weights that minimize the imbalance in the concatenated pre-treatment series for all outcomes.
Second, we find a single set of \emph{average} weights: SCM weights that minimize the imbalance in 
a linear combination of pre-treatment outcomes; as the leading case, we focus on imbalance in the average of the pre-treatment outcome series.

Under the assumption that the $K$ different outcome series share a similar factor structure, we derive finite-sample bounds on the bias for these two approaches, as well as bounds when finding separate SCM weights for each outcome series. Unlike previous literature, our bounds apply even if the pre-treatment fit is not perfect.
We show that both the concatenated and averaging approaches reduce potential bias due to overfitting to noise by a factor of $\frac{1}{\sqrt{K}}$ relative to the analysis that considers each outcome separately. We also show that the averaging approach further reduces potential bias due to poor pre-treatment fit by a factor of $\frac{1}{\sqrt{K}}$ relative to \emph{both} the separate and concatenated approaches.
In particular, averaging reduces noise, which both improves pre-treatment fit and reduces bias due to overfitting. 

We next outline considerations for practice, including diagnostics for assessing the important assumption of a common factor structure and for sensitivity to hyperparameters. We conduct a re-analysis of \citet{trejo_psychosocial_2021}, who study the impact of the Flint water crisis on student outcomes in Flint, Michigan. In the Online Appendix, we conduct a Monte Carlo analysis to assess how the common factor structure affects the performance of the concatenation and averaging approaches, and to illustrate that our proposed diagnostics effectively detect the absence of common factors.
Taken together, we argue that --- when multiple outcomes share a common factor structure --- SCM based on averaged outcomes is a reasonable, intrepretable procedure that effectively leverages multiple outcomes for bias reduction.

\paragraph{Related literature.} 
Despite the many empirical examples of SCM with multiple outcomes, there is relatively limited methodological guidance for this setting. 
\citet{Robbins2017} consider this problem in the context of SCM with high-dimensional, granular data and consider different aggregation approaches. 
\citet{amjad_mrsc_2019} introduce the Multi-Dimensional Robust Synthetic Control (mRSC) method, which fits a linear regression using a de-noised matrix of all outcomes concatenated together.

The closest paper to ours is independent work from \citet{tian2023synthetic}, who  consider weights based on concatenated outcomes in the same setting as ours where different outcome series share a similar factor structure \citep[see also][]{tian_three_2021}. The authors derive a bias bound that holds conditional on achieving perfect pre-treatment fit for all outcome series simultaneously. In contrast, we derive bias bounds  that allow for imperfect pre-treatment fit, a more general scenario in applied research. Additionally, our novel analysis demonstrates how averaging can reduce finite sample error relative to concatenated weights.

Finally, we build on an expansive literature on the Synthetic Control Method for single outcomes; see \citet{abadie_using_2021} for a recent review.
In particular, several recent papers propose modifications to SCM to mitigate bias both due to imperfect pre-treatment fit \citep[e.g.,][]{ferman2018revisiting, benmichael2021_ascm} and bias due to overfitting to noise \citep[e.g.,][]{kellogg2021combining}. We complement these papers by highlighting how researchers can also incorporate multiple outcomes to mitigate both sources of bias.

\section{Preliminaries}\label{sec:prelim}

We consider an aggregate panel data setting of $N$ units and $T$ time periods. For each unit $i=1,\dots,N$ and at each time period $t=1,\dots,T$,  we observe $K$ outcomes $Y_{itk}$ where $k=1,\dots,K$. We denote the exposure to a binary treatment by $W_{i}\in\{0,1\}$. We restrict our attention to the case where a single unit receives treatment, and follow the convention that this is the first one, $W_1 = 1$. The remaining $N_0 \equiv N-1$ units are possible controls, often referred to as ``donor units.'' To simplify notation, we limit to one post-treatment observation, $T = T_0 + 1$, though our results are easily extended to larger $T$.

We follow the potential outcomes framework \citep{neyman1923} and denote the potential outcome under treatment $w$ with $Y_{itk}(w)$.
Implicit in our notation is the assumption that there is no interference between units and no anticipation. 
Under this setup, we can write the observed outcomes as $Y_{itk} = (1 - W_{i})Y_{itk}(0) + W_i \bbone\{t \leq T_0\} Y_{itk}(0) + W_i\bbone\{t > T_0\} Y_{itk}(1)$.
The treatment effects of interest are the effects on the $K$ outcomes for the treated unit in the post-treatment period, $\tau_k = Y_{1Tk}(1) - Y_{1Tk}(0)$.
We collect the treatment effects into a vector $\bm{\tau}= (\tau_1,\ldots,\tau_K) \in \R^K$. Since we directly observe $Y_{1Tk}(1) = Y_{1Tk}$ for the treated unit, we focus on imputing the missing counterfactual outcome under control, $Y_{1Tk}(0)$.

Throughout, we will focus on \emph{de-meaned} or \emph{intercept-shifted} weighting estimators, which were introduced in the single outcome setting \citep{Doudchenko2017, ferman2018revisiting} and were adapted to multiple outcomes by \citet{tian2023synthetic}, who argue that outcome-specific demeaning is useful for comparing across outcomes. 
We denote $\bar Y_{i\cdot k} \equiv \frac{1}{T_0}\sum_{t=1}^{T_0} Y_{itk}$ as the pre-treatment average for the $k$\textsuperscript{th} outcome for unit $i$, and $\dot Y_{itk} = Y_{itk} - \bar{Y}_{i\cdot k}$ as the corresponding de-meaned outcome. 
We therefore consider estimators of the form:
\begin{equation}
\label{eq:generic_estimator}
\widehat{Y}_{1Tk}(0) \equiv \bar{Y}_{1\cdot k} + \sum_{W_{i}=0} \gamma_i \dot{Y}_{iTk},
\end{equation}
where $\gamma \in \R^{N-1}$ is a set of weights.
Our paper centers on how to choose the weights $\gamma$.

\section{Leveraging Multiple Outcomes for 
 SCM: Identification}
\label{sec:identification}

In this section we outline assumptions on the data generating process that enable sharing information across multiple outcomes.
We describe necessary and sufficient conditions for there to exist a single set of weights that achieves zero bias across all outcomes simultaneously, and give intuition and examples in terms of linear factor models.

Throughout, we make the following structural assumption on the potential outcomes under control, similar to \cite{athey_matrix_2021}. As in the classic SCM literature \citep{AbadieAlbertoDiamond2010}, the uncertainty arises from the idiosyncratic errors, which are assumed to be strictly exogenous; we condition on deterministic but unknown model components. We assume that correlation across outcomes is captured by model components described later, with idiosyncratic errors independent across outcomes. 
\begin{assumption}\label{assumption:noiseless}
The potential outcome under control is generated as
$$Y_{itk}(0) = \alpha_{ik} + 
\beta_{tk} + L_{itk} + \varepsilon_{itk}$$
where  the  deterministic model component includes unit and time fixed effects $\alpha_{ik}$ and $\beta_{tk}$, with $\sum_{t=1}^T \beta_{tk} = 0$ for all $k$. After incorporating the additive two-way fixed effects, the model component retains a term $L_{itk}$ with $\sum_{i=1}^N L_{itk} = 0$ for all $t, k$ and $\sum_{t=1}^T L_{itk} = 0$ for all $i, k$.
The idiosyncratic errors $\varepsilon_{itk}$  are mean zero, independent of the treatment status
$W_{i}$, and independent across units and outcomes.
\end{assumption}

This setup allows the model component to include $\alpha_{ik}$,  a unit fixed effect specific to outcome $k$.
 We explicitly account for the presence of these fixed effects by de-meaning across pre-treatment periods within each unit's outcome series.
\subsection{Existence of common weights shared across outcomes}
\label{subsec:factor}

To begin, we first characterize the bias of a de-meaned weighting estimator under Assumption~\ref{assumption:noiseless}. For a set of weights $\gamma$ that is independent of the idiosyncratic errors in period $T$, $\widehat{Y}_{1Tk}(0)$ has bias:
\begin{equation}
 \E_{\varepsilon_{T}}\left[Y_{1Tk}(0)  - \widehat{Y}_{1Tk}(0) \right] = \beta_{Tk} \left(1 - \sum_{W_{i}=0} \gamma_i\right) + L_{1Tk} - \sum_{W_{i}=0} \gamma_i L_{iTk},\label{eq:SCM_demean_bias}   
\end{equation}
\noindent where $Y_{1Tk}(0)$ is the $k$\textsuperscript{th} control potential outcome for the treated unit at time $T$. Here the expectation is taken over the idiosyncratic errors in period $T$.

From this we see that weights $\gamma$ will lead to an unbiased estimator for time $t$ and outcome $k$ if (i) the weights sum to one and (ii) the weighted average of the latent $L_{itk}$ for the donor units equals $L_{1tk}$ for the treated unit.
Weights that satisfy these conditions for all time period/outcome pairs would yield an unbiased estimator for every $Y_{1tk}(0)$ simultaneously.
We refer to such weights as \emph{oracle weights} $\gamma^{\ast}$, since they remove the bias due to the presence of the unobserved model components $L_{itk}$. 

\begin{definition}[Oracle Weights]\label{def:oracle}
The oracle weights $\gamma^{\ast}$ solve the following system of $(TK) + 1$ equations
\begin{equation}
\left(\begin{array}{cc}L &\mathbf{1}_N\end{array}\right)'\left(\begin{array}{c}
-1\\
\gamma^{\ast}
\end{array}\right)=\mathbf{0}_{TK},\label{eq:oracle}
\end{equation} 
 where the first row of $L\in\mathbb{R}^{N\times(TK)}$ contains $L_{itk}$ 
 for the treated unit and the remaining rows correspond to control units. 
\end{definition}

We show in Section~\ref{sec:alternative methods} that if such oracle weights exist, we can pool information across outcomes by finding a single set of synthetic control weights that are common to all $K$ outcomes.
Such weights will exist if and only if the underlying matrix of model components $L$ is low rank. We formalize this in the following assumption and proposition.

\begin{assumption*}[Low-rank $L$]\label{claim:low rank}

 The $N\times (TK)$ matrix of model components has reduced rank, and its rank is equal to the $(N-1)\times (TK)$ matrix of model components of the control units $L_{-1}$:
\[
rank(L_{-1})=rank(L) < N-1.
\]
\end{assumption*}

\begin{claim}[Low-rank is sufficient and necessary] \label{claim:factor}
The unconstrained oracle weights $\gamma^{\ast}$ exist iff Assumption~\ref{claim:low rank} holds.

\end{claim}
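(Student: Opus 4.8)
The plan is to recast the existence of $\gamma^\ast$ as the solvability of a linear system, apply the standard rank criterion for linear systems (Rouch\'e--Capelli), and then use the normalization $\sum_{i=1}^N L_{itk}=0$ from Assumption~\ref{assumption:noiseless} to rephrase that criterion in terms of the rank of $L$. Write $M\equiv\left(L\ \mathbf 1_N\right)\in\R^{N\times(TK+1)}$, let $m_1\in\R^{TK+1}$ be its first row (the treated unit, with a $1$ appended), and let $M_{-1}\in\R^{(N-1)\times(TK+1)}$ collect the remaining rows. Equation~\eqref{eq:oracle} asks precisely for a $\gamma\in\R^{N-1}$ solving $M_{-1}^\top\gamma=m_1^\top$, and by Rouch\'e--Capelli such a $\gamma$ exists iff $\text{rank}(M_{-1})=\text{rank}(M)$ --- i.e., restoring the treated row $m_1$ to $M_{-1}$ does not raise the rank.

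I would then pass from ranks of $M$ to ranks of $L$. The normalization $\sum_i L_{itk}=0$ is the statement $\mathbf 1_N^\top L=0$, which gives two things: $\mathbf 1_N$ is a nonzero vector orthogonal to the column space of $L$, so $\text{rank}(M)=\text{rank}(L)+1$; and $L_{1\cdot}=-\sum_{i\neq 1}L_{i\cdot}$ lies in the row space of $L_{-1}$, so $\text{rank}(L)=\text{rank}(L_{-1})$ automatically --- hence in Assumption~\ref{claim:low rank} only the strict inequality $\text{rank}(L)<N-1$ is substantive. Since $\text{rank}(M_{-1})$ equals $\text{rank}(L_{-1})+1$ exactly when $\mathbf 1_{N-1}$ lies outside the column span of $L_{-1}$ (and equals $\text{rank}(L_{-1})$ otherwise), combining these facts reduces the proposition to: $\gamma^\ast$ exists iff $\mathbf 1_{N-1}$ is \emph{not} in the column span of $L_{-1}$, and this in turn should be shown equivalent to $\text{rank}(L_{-1})<N-1$.

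Necessity is the clean half and I would argue it directly: from $\sum_{i\neq 1}\gamma^\ast_i L_{i\cdot}=L_{1\cdot}$ together with $L_{1\cdot}=-\sum_{i\neq 1}L_{i\cdot}$ one gets $\sum_{i\neq 1}(\gamma^\ast_i+1)L_{i\cdot}=0$, whose coefficients sum to $1+(N-1)=N\neq0$ and so are not all zero; this nontrivial linear dependence among the $N-1$ control rows forces $\text{rank}(L_{-1})=\text{rank}(L)<N-1$. Sufficiency is the step I expect to be the main obstacle. Knowing $\text{rank}(L_{-1})<N-1$ only makes the column span of $L_{-1}$ a \emph{proper} subspace of $\R^{N-1}$, whereas one needs it to actually miss $\mathbf 1_{N-1}$. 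To close this gap I would invoke the low-rank factor structure: writing $L=\Phi F$ with $\Phi\in\R^{N\times r}$ and $F$ of full row rank $r<N-1$, the column span of $L_{-1}$ coincides with that of the donor loading matrix $\Phi_{-1}$, and provided the donor loadings are in general position --- the generic situation once the $N-1$ donors outnumber the $r$ factors, and which excludes degeneracies such as a factor loading identically on all donors --- the all-ones vector $\mathbf 1_{N-1}$ is not in that span, so the under-determined system for $\gamma^\ast$ has a solution that can be normalized to sum to one. Making this general-position condition precise (or, equivalently, restricting to generic $L$ of the stated rank) is the part of the argument that requires care.
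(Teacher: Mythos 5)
Your approach is the same as the paper's: recast Equation~\eqref{eq:oracle} as a linear system, invoke Rouch\'e--Capelli, and use the centering $\mathbf{1}_N^\top L = 0$ to translate ranks of the augmented matrices into ranks of $L$ and $L_{-1}$. Your necessity argument matches the paper's (a nonzero null vector of $(L\ \mathbf{1}_N)^\top$ forces $\mathrm{rank}(L\ \mathbf{1}_N)<N$, hence $\mathrm{rank}(L)<N-1$ because $\mathbf{1}_N$ is orthogonal to the columns of $L$), and your observation that $\mathrm{rank}(L_{-1})=\mathrm{rank}(L)$ holds automatically --- since $L_{1\cdot}=-\sum_{i\neq 1}L_{i\cdot}$ lies in the row space of $L_{-1}$ --- is correct and sharper than the paper's phrasing.

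The substantive issue is sufficiency, and here your instinct is right. The paper closes this direction by asserting that appending the ones vector to \emph{both} $L$ and $L_{-1}$ raises each rank by exactly one, so that $\mathrm{rank}(L_{-1}\ \mathbf{1}_{N-1})=\mathrm{rank}(L\ \mathbf{1}_N)$ and Rouch\'e--Capelli yields a solution. For $L$ this follows from $\mathbf{1}_N^\top L=0$; for $L_{-1}$ it is precisely the claim you isolated, namely $\mathbf{1}_{N-1}\notin\mathrm{col}(L_{-1})$, and the paper gives no justification for it. It can in fact fail under Assumptions~\ref{assumption:noiseless} and~\ref{claim:low rank}: take $N=3$, $T=2$, $K=1$ and
\[
L=\begin{pmatrix}-2&2\\ 1&-1\\ 1&-1\end{pmatrix},
\]
which satisfies all the centering constraints and has $\mathrm{rank}(L_{-1})=\mathrm{rank}(L)=1<N-1=2$, yet any $\gamma$ with $L_{-1}^\top\gamma=L_{1\cdot}$ must have $\gamma_2+\gamma_3=-2$, incompatible with the adding-up equation $\gamma_2+\gamma_3=1$ in system~\eqref{eq:oracle}. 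So low rank alone does not deliver existence of weights satisfying the full oracle system; some nondegeneracy of the donor rows (your ``general position'' condition, or directly $\mathbf{1}_{N-1}\notin\mathrm{col}(L_{-1})$) is genuinely needed. In short: your proof is incomplete for sufficiency exactly where the paper's is, you correctly diagnosed the missing step, and your proposed repair is the right kind --- but note that it amounts to strengthening Assumption~\ref{claim:low rank}, so what you would end up proving is a slightly different statement from Proposition~\ref{claim:factor} as written.
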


Finally, even if oracle weights that balance model components across all $K$ outcomes exist, estimating weights can be challenging without further restrictions.
For example, there may be infinitely many solutions to Equation~\eqref{eq:oracle}.
We therefore introduce the following regularity condition %
 that oracle weights \emph{with a bounded norm} exist.

\begin{assumption*}
\label{assumption:oracle}
In addition to Assumption~\ref{claim:low rank},  assume there is a known $C$ such that some oracle weights exist in a set $\mathcal{C}$ where $\|x\|_1 \leq C$ for all $x \in \mathcal{C}$. Denote $\gamma^\ast$ as a solution to Equation~\eqref{eq:oracle} in $\mathcal{C}$.
\end{assumption*}

Below, we will estimate synthetic control weights that are constrained to be in $\mathcal{C}$; Assumption~\ref{assumption:oracle} ensures that this set contains at least some oracle weights, allowing us to compare the synthetic control and oracle weights.
This assumption further ensures that these oracle weights are not too extreme, as measured by the sum of their absolute values.
While we keep the constraint set $\mathcal{C}$ general in our formal development, in practice---and in our empirical analysis below---this constraint set is often the simplex $\mathcal{C} = \Delta^{N_0-1}$, where $C = 1$.
This adds the stronger assumption that there exist oracle weights that are non-negative, and so the model component for the treated unit $L_{1\cdot} \in \R^{TK}$ is contained in the convex hull of the model components for the donor units, $\text{conv}\{L_{2\cdot}, \ldots, L_{N \cdot}\}$.

\subsection{Interpretation for linear factor models}
\label{sec:factor_model_interpretation}

Proposition~\ref{claim:factor} shows that determining whether oracle weights exist is equivalent to determining whether the model component matrix $L$ is low rank.
We now discuss when this assumption is plausible and
how it relates to the more familiar low rank assumptions used in the panel data literature.

To further interpret these restrictions, it is useful to express the model components  $L$ in terms of a linear factor model.
Under Assumption~\ref{claim:low rank}, for $r=rank(L)$ the deterministic model component can be written as a linear factor model,
\begin{equation}
\label{eq:factor-deterministic}
L_{itk}=\boldsymbol{\phi}_{i}\cdot\boldsymbol{\mu}_{tk},
\end{equation}
where $\boldsymbol{\mu}_{tk}\in\mathbb{R}^{r}$ are latent time- and outcome-specific factors and each unit has a vector of time- and outcome-invariant factor loadings
$\boldsymbol{\phi}_{i}\in\mathbb{R}^{r}$.\footnote{This factor structure can be based on a singular value decomposition  $L=UDV'$. Define   $\Upsilon=VD$ . Then we can write $L=U\Upsilon'$ where for $r=rank(L)$,
$\Upsilon\in\mathbb{R}^{(TK)\times r}$ are the latent time-outcome factors
and $U\in\mathbb{R}^{N\times r}$ are the loadings.} Linear factor models like this have been used extensively in the literature to capture a common set of unobserved predictors across different outcomes, e.g., \citet{farias_learning_2019, amjad_mrsc_2019,tian2023synthetic}. 

Proposition~\ref{claim:factor} guarantees that oracle weights exist and that there exists a linear combination of control units' factor loadings that can recover the treated unit's factor loading:  $$\boldsymbol{\phi}_{1}=\sum_{W_{i}=0}\gamma^\ast_i \phi_{i}.$$ 
 
To interpret this factor structure, note that a special case that satisfies Assumption~\ref{claim:low rank}  is where the model component $L_{itk}$ can be decomposed into a common component that is shared across outcomes and an idiosyncratic, outcome-specific component:
\begin{equation}
L_{itk}=\sum_{f=1}^{r_{0}}\phi_{icf}\mu_{tkf}+\sum_{f'=r_{0}+1}^{r_{k}}\phi_{ikf'}\mu_{tkf'} \label{eq:common and idiosyncratic},
\end{equation}
where all loading vectors $\phi_{cf}$ and $\phi_{kf'}$ are orthogonal to each other. 
Let $r_{0}$
denote the  dimension of the factor loadings that are shared across the outcomes. Then we can calculate
 $rank(L)=r_{0}+\sum_{k=1}^{K}(r_{k}-r_{0})$, where there are $r_{0}$  common  factor loadings and $(r_{k}-r_{0})$ idiosyncratic factor loadings for outcome $k$. The factor loadings can be seen as latent feature vectors associated with each unit, which may vary with the outcomes of interest.
The low-rank Assumption~\ref{claim:low rank} then states that 
$r_{0}+\sum_{k=1}^{K}(r_{k}-r_{0})<N-1$.
This can happen when either the number of outcomes $K$ is relatively small or $r_{0}$
is large compared to $r_{k}$ so that there is a high degree of shared information across outcomes.

Importantly, the assumption that the model components are low rank is inherently a substantive one and should be driven by economic theory when possible; we offer some examples next. In Section \ref{subsec: bias discussion}, we suggest some diagnostics for practice, recognizing that the power of these tests is necessarily limited.\footnote{Formal tests for low rank typically require large-sample approximations and stricter assumptions on the errors, which are not imposed here. Moreover, such formal tests can sometimes be misleading. For instance, recent findings show the factor structure of excess bond returns --- and therefore whether these are low rank --- is hard to estimate \citep[see, for example,][]{crump2022factor}.}
\begin{example}[Repeated measurements of the same outcome]
An extreme case is where $Y_{it1}, \ldots Y_{itK}$ are $K$ repeated measurements of the same outcome. In this case 
$\mu_{tk}=\mu_{t}$ for
$k=1,\dots,K$, there are no idiosyncratic terms, and the rank of $L$ is $r_0$. This situation was discussed in \cite{sun2024temporal} in the context of high-frequency measurements of the same outcome.
\end{example}
\begin{example}[Multiple test scores]
    Even with different outcomes, in many empirical settings, such as standardized test scores, there are only a few factors that explain most of the variation across outcomes, so $\sum_{k=1}^K(r_k - r_{0})$ is small and the low-rank assumption is plausible.
     For example, across seven test scores collected by \citet{duflo_dupas_kremer_aer2011}, ``average verbal" and
    ``average math" explain 72\% of the total variation.

\end{example}

\section{Leveraging Multiple Outcomes for 
 SCM: Estimation}\label{sec:alternative methods}

\subsection{Measures of imbalance}\label{subsec:imbalance}

In principle, we would like to find oracle weights that
can recover $L_{1Tk}$ from a weighted average of $L_{2Tk},\ldots,L_{NTk}$ for all $k$.
Since the underlying model components are unobserved, however, we must instead use observed outcomes $Y$ to construct feasible balance measures. 
In the classic (de-meaned) synthetic control method applied separately to each series,  weights are chosen to optimize the pre-treatment fit for a single de-meaned outcome $k$:
\begin{eqnarray*}
\hat{\gamma}^{sep}_k \equiv \underset{{\gamma\in  \mathcal{C}}}{\argmin} q_k^{sep}(\gamma)^2,\ q_k^{sep}(\gamma) & \equiv & \sqrt{\frac{1}{T_{0}}\sum_{t=1}^{T_{0}}\left(\dot{Y}_{1tk} -\sum_{W_{i}=0}\gamma_{i}\dot{Y}_{itk}\right)^{2}} .
\end{eqnarray*}

\noindent We refer to these as \emph{separate weights}, because there is a distinct set of weights to separately estimate the effect for each outcome.

Motivated by the common factor structure, we now consider two alternative balance measures that use information from multiple outcome series. 
First, we consider the \emph{concatenated objective}, which simply concatenates the different outcome series together. This is the pre-treatment fit achieved across all  outcomes and pre-treatment time periods simultaneously. We refer to the set of weights that minimize this objective
as the \emph{concatenated weights}:
\[
\hat{\gamma}^{cat}  \equiv  \underset{\gamma\in  \mathcal{C}}{\argmin} q^{cat}(\gamma)^2,\ q^{cat}(\gamma) \equiv  \sqrt{\frac{1}{T_{0}}\frac{1}{K}\sum_{k=1}^{K}\sum_{t=1}^{T_{0}}\left(\dot Y_{1tk} -\sum_{W_{i}=0}\gamma_{i} \dot Y_{itk}\right)^{2}}.
\]

This objective coincides with the ``multiple-outcome SC estimator'' proposed by \cite{tian2023synthetic}, who also derive a novel bound on the estimation error with perfect pre-treatment fit, i.e., $q^{cat}(\hat{\gamma}^{cat} )=0$. Our focus, however, is on bounding the estimation error when pre-treatment fit is imperfect, and on whether we can reduce this error with an alternative estimator. This alternative is the \emph{averaged objective}, the pre-treatment fit for the average of the  outcomes. We refer to the set of weights that minimize this objective as the \emph{average} weights:
\[
\hat{\gamma}^{avg} \equiv  \underset{\gamma\in  \mathcal{C}}{\argmin} q^{avg}(\gamma)^2,\ q^{avg}(\gamma) \equiv  \sqrt{\frac{1}{T_{0}}\sum_{t=1}^{T_{0}}\left(\frac{1}{K}\sum_{k=1}^{K}\dot Y_{1tk} -\sum_{W_{i}=0}\gamma_{i}\dot Y_{itk}\right)^{2}}.
\]

Note that, for any realization of the data, the pre-treatment fit will be better for the averaged objective than for the concatenated objective,  $q^{avg}(\hat{\gamma}^{avg})\leq q^{cat}(\hat{\gamma}^{cat})$.
This finite-sample improvement in the fit also translates to a smaller upper bound on the bias, as we discuss next.

\subsection{Estimation error}

For any estimated weights $\hat{\gamma}$, the estimation error is
\begin{align*}
 \tau_k - \hat{\tau}_k(\hat{\gamma}) &= \dot{Y}_{1Tk}(0)-  \sum_{W_{i}=0} \hat{\gamma}_i \dot{Y}_{itk} &= \underbrace{L_{1Tk}- \sum_{W_{i}=0} \hat{\gamma}_i L_{iTk}}_{\text{bias = imbalance + overfitting}}\;\;+\;\;\underbrace{\dot \varepsilon_{1Tk}-\sum_{W_{i}=0} \hat{\gamma}_i \dot \varepsilon_{iTk}}_{\text{noise}}. 
\end{align*}
The second term in the decomposition is due to post-treatment idiosyncratic errors and is common across the different approaches for choosing weights.
In Online Appendix~\ref{sec:noise_bound} we show that this term has mean zero and can be controlled if the weights are not extreme.

Our main focus will be the first term, the bias due to inadequately balancing model components, which we denote
$\text{Bias}(\hat{\gamma})$. 
We can decompose this 
into two terms using the linear factor model in~\eqref{eq:factor-deterministic}:
\begin{align}
  \text{Bias}(\hat{\gamma})\equiv L_{1Tk}-\sum_{W_{i}=0}\hat{\gamma}_{i}L_{iTk}
 & =\sum_{t=1}^{T_{0}}\sum_{j=1}^{K}{\omega}_{tj}\left(\dot{Y}_{1tj}-\sum_{W_{i}=0}\hat{\gamma}_{i}\dot{Y}_{itj}\right)\ \ (R_{0}) \label{eq:bias_r0}\\
 & -\sum_{t=1}^{T_{0}}\sum_{j=1}^{K}{\omega}_{tj}\left(\dot{\varepsilon}_{1tj}-\sum_{W_{i}=0}\hat{\gamma}_{i}\dot{\varepsilon}_{itj}\right)\ \ (R_{1}) \label{eq:bias_r1}
\end{align}
where the time and outcome specific terms $\omega_{tj}$ are transformations of the factor values that depend on the specific estimator.

The first term, $R_0$, is bias due to imperfect pre-treatment fit in the pre-treatment outcomes, $\dot{Y}_{itj}$.
The second term, $R_1$, is bias due to overfitting to noise, also known as the \emph{approximation error}.
This arises because the optimization problems minimize imbalance in \emph{observed} pre-treatment outcomes --- noisy realizations of latent factors --- rather than minimizing imbalance in the latent factors themselves.

\subsection{Main result: Bias bounds}
We now turn to our main results. 
Our analysis differs from the existing literature in two key ways. First, the results for synthetic controls with a single outcome from \cite{AbadieAlbertoDiamond2010} and multiple outcomes from \cite{tian2023synthetic} assume perfect pre-treatment fit ($R_0=0$) and provide an upper bound on $R_1$. Instead we derive explicit finite sample upper bounds for $R_0$, accommodating  more general settings with imperfect pre-treatment fit.
Second, we quantify the impact of demeaning with a finite number of pre-treatment time periods $T_0$; this contributes to additional bias \citep[often known as ``Nickell bias'' due to][]{nickell1981biases} but vanishes as $T_0$ grows large. 
\label{subsec: bias bounds}

\subsubsection{Additional assumptions}
To derive finite sample bias bounds, we first place structure on the idiosyncratic errors, assuming they are independent across time and do not have heavy tails. 
\begin{assumption}\label{assumption:SG}
The idiosyncratic errors $\varepsilon_{itk}$ are mean-zero sub-Gaussian random variables with scale parameter $\sigma$, i.e., they satisfy the tail bound $P(|\varepsilon_{itk}| \geq t) \leq 2 \exp\left(-\frac{t^2}{2\sigma^2}\right)$.
\end{assumption}
\noindent This assumption encompasses the setting where the idiosyncratic errors have a larger variance for certain outcomes; in this case the common scale parameter $\sigma$ is the maximum of the outcome-specific scale parameters.

Second, we assume an adequate signal to noise ratio for each outcome separately, for all outcomes jointly, and for the average across outcomes.

\begin{assumption}\label{assumption:singularity}
    Denote $\boldsymbol{\mu}_{tk}\in\mathbb{R}^{r}$ as the time-outcome factors from Equation~\eqref{eq:factor-deterministic} and assume that they are bounded above by $M$.
    Furthermore, denoting $\sigma_\text{min}(A)$ as the smallest singular value of a matrix $A$, assume that (i) $\sigma_{min}\left(\frac{1}{T_{0}}\sum_{t}\mu_{tk}\mu_{tk}'\right) \geq \underbar{\ensuremath{\xi}}^{sep} > 0$ for all outcomes $k=1,\dots,K$; (ii) $\sigma_{min}\left(\frac{1}{T_{0}K}\sum_{tk}\mu_{tk}\mu_{tk}'\right) \geq \underbar{\ensuremath{\xi}}^{cat} > 0$; and (iii) $\sigma_{min}\left(\frac{1}{T_{0}}\sum_{t}\left(\bar{\mu}_{t}\right)\left(\bar{\mu}_{t}\right)'\right) \geq \underbar{\ensuremath{\xi}}^{avg} > 0$ where $\bar{\mu}_{t}=\frac{1}{K}\sum_{k=1}^{K}\mu_{tk}$.
\end{assumption}
\noindent Previous literature introduces similar assumptions to avoid issues of weak identification \citep{AbadieAlbertoDiamond2010}. This additional  assumption precludes settings where averaging removes substantial variation in the latent model components over time. Consider, for example, a setting where the model components for different outcomes vary over time in exactly opposite directions. Here averaging would cancel out any signal from their latent model components, and, as a result, our theoretical guarantees for the average weights would no longer hold. However, we can generally rule out these edge cases by economic reasoning or visual inspection of the co-movement across outcomes.

\subsubsection{Bias bounds}
We now formally state the high-probability bounds on the bias for the three weighting approaches.\footnote{While we would ideally characterize the entire distribution of the bias term, upper bounds provide a clear indication of where the bias is most likely to concentrate. They are also widely used in the extant SCM literature \citep[e.g.,][]{AbadieAlbertoDiamond2010,benmichael2021_ascm}.} These bounds hold with high probability over the noise in all time periods and all outcomes, $\varepsilon_{itk}$, and are derived under the assumption that $\varepsilon_{itk}$ are independent across outcomes.\footnote{With independent errors, additional outcomes provide new but noisy measurement for the latent common factors. With correlated noise, we conjecture that some stationarity condition is necessary for the order of the bounds to hold. In the extreme case of perfectly correlated errors, all approaches should lead to the same result.} We can compare these high-probability bounds for fixed $N$ as the number of time periods $T$ and/or the number of outcomes $K$ grow.

\begin{theorem}
\label{thm:error_bounds} Suppose Assumptions~\ref{assumption:noiseless},~\ref{assumption:oracle}, ~\ref{assumption:SG} and ~\ref{assumption:singularity} hold.  Recall that by construction, the estimated weights satisfy $\|\hat \gamma\|_1 \leq C$ and  Assumption~\ref{assumption:oracle} implies $\|\gamma^\ast\|_1 \leq C$. Let $\tilde \sigma=(1+1/\sqrt{T_0})\sigma$ and $\tilde C = 4(1+\|\gamma^\ast\|_2)$. With high probability, the absolute bias for estimating the treatment effect satisfies the bound
\begin{align*}
    \left| \text{Bias}(\hat{\gamma}_k^{\sep}) \right| &\leq \frac{r_kM^{2}}{\underbar{\ensuremath{\xi}}_k^{\sep}}\left[ \quad\quad \tilde C {\sigma}   +   
    \frac{1}{\sqrt{T_0}}\left(2C\tilde \sigma\sqrt{\log2N_0}\right)
    \right], \\[2em]
  \left| \text{Bias}(\hat{\gamma}^{\cat}) \right| &\leq \frac{rM^{2}}{\underbar{\ensuremath{\xi}}^{\cat}}~\left[  \quad\quad {\tilde C \tilde \sigma}  +    \frac{1}{\sqrt{T_0K}}\left(2C\tilde \sigma\sqrt{\log2N_0}\right)\right], \\[2em]
 \left| \text{Bias}(\hat{\gamma}^{\avg}) \right| &\leq \frac{rM^{2}}{\underbar{\ensuremath{\xi}}^{\avg}}~\left[ \frac{1}{\sqrt{K}}~\tilde C\sigma  + \frac{1}{\sqrt{T_0K}}\left(2C\tilde \sigma\sqrt{\log2N_0}\right)  \right].
\end{align*}

\end{theorem}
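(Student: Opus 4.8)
The plan is to bound each of the two bias components $R_0$ and $R_1$ separately for each estimator, then combine. Recall from~\eqref{eq:bias_r0}--\eqref{eq:bias_r1} that $\text{Bias}(\hat\gamma) = R_0 - R_1$, where $R_0 = \sum_{t,j}\omega_{tj}\big(\dot Y_{1tj} - \sum_i \hat\gamma_i \dot Y_{itj}\big)$ is a weighted combination of pre-treatment imbalances and $R_1$ is the corresponding combination of de-meaned noise terms. The first task is to pin down what the weights $\omega_{tj}$ actually are for each estimator. Since $L_{itk} = \boldsymbol\phi_i\cdot\boldsymbol\mu_{tk}$ and the oracle weights satisfy $\boldsymbol\phi_1 = \sum_i\gamma^\ast_i\boldsymbol\phi_i$, one writes $L_{1Tk} - \sum_i\hat\gamma_i L_{iTk} = \big(\sum_i(\gamma^\ast_i-\hat\gamma_i)\boldsymbol\phi_i\big)\cdot\boldsymbol\mu_{Tk}$, expresses $\boldsymbol\mu_{Tk}$ via the (pseudo-)inverse of the appropriate Gram matrix $\frac{1}{T_0}\sum_t\mu_{tk}\mu_{tk}'$ (or its concatenated/averaged analogue) acting on the pre-treatment factors, and reads off $\omega_{tj}$. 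The norm of the vector $\omega$ is then controlled by $\sigma_{\min}$ of that Gram matrix — this is exactly where Assumption~\ref{assumption:singularity} enters — giving the prefactors $r_k M^2/\underline\xi^{\sep}_k$, $rM^2/\underline\xi^{\cat}$, $rM^2/\underline\xi^{\avg}$. The $1/\sqrt K$ improvement in the $R_0$ prefactor for the averaged estimator comes from the fact that $\|\bar{\boldsymbol\mu}_t\| = \|\frac1K\sum_k\boldsymbol\mu_{tk}\|$ carries an extra $1/\sqrt K$ relative to the stacked vector $(\boldsymbol\mu_{t1},\ldots,\boldsymbol\mu_{tK})$ whose norm is $O(\sqrt K)$.

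Next I would bound $R_0$ itself. By Cauchy–Schwarz, $|R_0| \le \|\omega\|\cdot\big(\text{pre-treatment imbalance achieved by }\hat\gamma\big)$ in the relevant norm, and since $\hat\gamma$ is the \emph{minimizer} of that imbalance over $\mathcal C$ while $\gamma^\ast\in\mathcal C$ (Assumption~\ref{assumption:oracle}), the achieved imbalance is at most the imbalance attained by $\gamma^\ast$ — which, because $\gamma^\ast$ is oracle, equals the imbalance in pure noise, $\dot\varepsilon_{1tj} - \sum_i\gamma^\ast_i\dot\varepsilon_{itj}$. So $R_0$ reduces to a sub-Gaussian quadratic/linear form in the $\varepsilon$'s with coefficients bounded by $\|\gamma^\ast\|_2$; a standard concentration bound (Hanson–Wright or a simple union bound over the $r$-dimensional factor directions) gives the $\tilde C\sigma$ term, with $\tilde C = 4(1+\|\gamma^\ast\|_2)$. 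For the averaged estimator the relevant noise is $\frac1K\sum_k\dot\varepsilon_{1tk} - \sum_i\gamma^\ast_i(\cdots)$, and here independence across outcomes kicks in: averaging $K$ independent mean-zero sub-Gaussian variables shrinks the scale by $1/\sqrt K$, yielding the $\frac1{\sqrt K}\tilde C\sigma$ term. The de-meaning (Nickell) contribution is what turns $\sigma$ into $\tilde\sigma = (1+1/\sqrt{T_0})\sigma$: the de-meaned error $\dot\varepsilon_{itj} = \varepsilon_{itj} - \bar\varepsilon_{i\cdot j}$ has the extra $\bar\varepsilon_{i\cdot j}$ term, itself sub-Gaussian with scale $\sigma/\sqrt{T_0}$.

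For $R_1$ — overfitting to noise — I would again use Cauchy–Schwarz: $|R_1| \le \|\omega\|\cdot\sqrt{\sum_{t,j}(\dot\varepsilon_{1tj} - \sum_i\hat\gamma_i\dot\varepsilon_{itj})^2/(T_0\cdot\#\text{outcomes})}$, but now the weights $\hat\gamma$ are \emph{data-dependent}, so I cannot treat this as a fixed linear form. The standard device is a uniform bound over $\gamma\in\mathcal C$: since $\|\gamma\|_1\le C$, the process $\gamma\mapsto \frac1{\sqrt{T_0}}\sum_t(\sum_i\gamma_i\dot\varepsilon_{itj})$ is sub-Gaussian and its supremum over the $\ell_1$-ball is $O(C\tilde\sigma\sqrt{\log N_0})$ by a maximal inequality over the $N_0$ vertices (the dual norm of $\ell_1$ is $\ell_\infty$, which is a max of $N_0$ sub-Gaussians). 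This produces the $\frac1{\sqrt{T_0}}(2C\tilde\sigma\sqrt{\log 2N_0})$ term; for the concatenated and averaged estimators the sum runs over $T_0K$ terms, giving $\frac1{\sqrt{T_0K}}(\cdots)$ — the $1/\sqrt K$ gain shared by both multi-outcome approaches. Finally, combining the $R_0$ and $R_1$ bounds via $|\text{Bias}|\le|R_0|+|R_1|$ and a union bound over the (constantly many) high-probability events gives the stated inequalities.

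The main obstacle I anticipate is the $R_1$ step for the data-dependent weights: making the uniform (over $\mathcal C$) concentration argument rigorous while keeping the constant at exactly $2C$ and the log term at $\log 2N_0$ requires care with the covering/vertex argument and with how the de-meaning interacts with the maximal inequality. A secondary subtlety is verifying that the $\omega_{tj}$ derived from the Gram-matrix inversion genuinely have $\ell_2$ norm bounded by $\sqrt r\,M/\underline\xi$ times the unit vector along $\boldsymbol\mu_{Tk}$ — i.e., that the factor-model bookkeeping (especially reconciling the rank-$r$ SVD representation with the per-outcome ranks $r_k$) goes through cleanly, since the separate-weights bound carries $r_k$ while the pooled bounds carry the full $r$.
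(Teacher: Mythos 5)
Your proposal follows essentially the same route as the paper: the same $R_0$/$R_1$ decomposition with $\omega_{tj}$ read off from the Gram-matrix inversion controlled by Assumption~\ref{assumption:singularity}, the same Cauchy--Schwarz-plus-optimality step reducing $R_0$ to the oracle imbalance (which is pure sub-Gaussian noise, shrinking by $1/\sqrt{K}$ under averaging), the same Hölder/maximal-inequality bound over the $\ell_1$-ball for the data-dependent $R_1$ term (this is exactly Lemma~\ref{lem:approx_bdd}), and the same union-bound assembly, including the de-meaning correction that produces $\tilde\sigma$. The argument is correct and matches the paper's proof in all essentials.
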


The proof for Theorem~\ref{thm:error_bounds} relies on the sharp bound $|\text{Bias}(\hat{\gamma})|=|R_0 - R_1|\leq |R_0|+|R_1|$, which leads us to derive bounds on terms in Equations \eqref{eq:bias_r0} and \eqref{eq:bias_r1} respectively. Different from the previous literature that assumes $R_0=0$, we bound $|R_0|$ by the discrepancy in the objectives between estimated and oracle weights. Table~\ref{tab:bound_rates} gives a high-level overview of these results and shows the leading terms in the bounds, removing terms that do not change with $K$ and $T_0$.

\begin{table}[t]
    \centering
    \begin{tabular}{c c c}
    \toprule
     & Bias due to imperfect fit & Bias due to overfitting\\
    \midrule
     $\hat{\gamma}^{sep}$ & $O(1)$ & $O\left(\frac{1}{\sqrt{T_0}}\right)$\\
     $\hat{\gamma}^{cat}$ & $O(1)$ & $O\left(\frac{1}{\sqrt{T_0K}}\right)$\\
     $\hat{\gamma}^{avg}$ & $O\left(\frac{1}{\sqrt{K}}\right)$ & $O\left(\frac{1}{\sqrt{T_0K}}\right)$\\
     \bottomrule
    \end{tabular}
    \caption{Leading terms in high probability bounds on the bias due to imperfect fit and overfitting in Theorem~\ref{thm:error_bounds}, with $N$ fixed.}
    \label{tab:bound_rates}
\end{table}

For both the separate weights $\hat{\gamma}_k^{sep}$ and the concatenated weights $\hat{\gamma}^{cat}$,
imperfect pre-treatment fit---on outcome $k$ alone for the separate weights, and on all outcomes for the concatenated weights---contributes to bias, regardless of the number of pre-treatment periods or outcomes.
This result is consistent with \cite{ferman2018revisiting} who  show that as $T_0\rightarrow\infty$, 
the separate objective function $q_k^{sep}(\gamma)$
does not converge to the objective minimized by the oracle weights, and therefore remains biased. 
In contrast, the bias due to pre-treatment fit for the average weights will decrease with the number of outcomes $K$.
This is because averaging across outcomes reduces the level of noise in the objective.
With many outcomes, the average will be a good proxy for the underlying model components that themselves can be exactly balanced by the oracle weights.
Averaging therefore allows us to get close to an oracle solution, with low bias due to pre-treatment fit.
This result is also consistent with \cite{ferman2018revisiting} since the variance of the noise decreases to zero as both $K$ and $T_0$ grow.
 
The second component of the bias is the contribution of overfitting to noise.
Mirroring prior results \citep[e.g.,][]{AbadieAlbertoDiamond2010}, we find that the threat of overfitting to noise with separate synthetic control weights will decrease as the number of pre-treatment periods $T_0$ increases --- but remains unchanged as $K$ increases. 
In contrast, the bias from overfitting to noise for both the concatenated and the averaged weights will decrease as the product $T_0 K$ increases, albeit for different reasons.
For the concatenated weights, \citet[Chapter 2]{tian_three_2021} and \cite{tian2023synthetic} also show that the bias is inversely proportional to $T_0K$ and argue that the extra outcomes essentially function as additional time periods. Each time period-outcome pair gives another noisy projection of the underlying latent factors, and finding a single good synthetic control for all of these together limits the threat of overfitting to any particular one.
For averaged weights, averaging across outcomes directly reduces the noise of the objective, as we discuss above. The $T_0$ averaged outcomes will therefore have a standard deviation that is smaller by $\approx \frac{1}{\sqrt{K}}$ than the original outcome series, leading to less noise and less potential for overfitting.\footnote{While we focus on how the bias bound scales with $T_0$ and $K$, it also depends on the number of control units $N_0$ through $\|\gamma^\ast\|_2$. In well-behaved settings, this norm can be bounded by $O(1/\sqrt{N_0})$ \citep{Arkhangelsky2021}. \citet{Ferman_Properties_2021} further shows that SCM bias vanishes when both $N_0,T_0\to\infty$. In such asymptotic settings, we expect limited gains from averaging over concatenation and leave a detailed analysis for future work.}

\subsection{Inference}\label{sec:inference main}
Our main results concern estimation. 
There is no consensus on the best inference procedures for SCM with a single treated unit; below we describe a few procedures and highlight their key assumptions. A popular method is the finite population permutation approach of \cite{AbadieAlbertoDiamond2010} that permutes the treatment assignment and relies on permutation distributions for inference. Since the permutation is across control units, this approach can be applied directly when there are multiple outcomes and any of the aforementioned weights can be  used. This will yield a valid test in some settings, such as if the treatment assignment is  random across units. An alternative approach without random treatment assignment is the conformal inference framework proposed by \cite{chernozhukov2021exact}. In Online Appendix~\ref{sec:inference}, we adapt this to the multiple outcome setting and illustrate it using our empirical application.
We note that this mode of inference requires stronger assumptions to ensure consistent counterfactual estimation and that validity of the resulting hypothesis tests relies on an asymptotic framework with many control units. We discuss these departures from our main setting in detail in Online Appendix~\ref{sec:inference}.

\subsection{Recommendations for Empirical Practice}
\label{subsec: bias discussion}
 
We conclude this section by discussing the practical implications of our results and providing guidance for empirical researchers conducting synthetic control analyses with multiple outcomes. We demonstrate these points in our application.

\paragraph{Assess the low-rank assumption.} 
The key assumption underlying our results is that there are shared common factors across outcomes among the different outcomes (Assumption~\ref{claim:low rank}).
If the outcomes share few common factors and have many idiosyncratic ones, neither the averaged nor the concatenated weights may improve on separate weights.
We therefore recommend evaluating this empirically by examining whether a few singular vectors capture the majority of the total variation across outcomes. 
In addition, under a shared factor structure, weights fit on different combinations of outcomes should yield a reasonable level of fit for any particular outcome series, even if that outcome is excluded from the combination.
Therefore, as a rough diagnostic, we suggest holding out each outcome and inspecting the level of fit of the combined weights estimated using the remaining outcome series. 
Finally, as we discuss in Section \ref{sec:factor_model_interpretation} above, we suggest that applied researchers combine this empirical evidence with substantive knowledge about the outcomes.

\paragraph{Standardize each outcome series.}
In practice, outcomes often differ in scale. While the bias bounds in Theorem~\ref{thm:error_bounds} make no restrictions across outcomes --- the bounds depend only on the maximum variance --- there are practical benefits to standardizing outcomes. First, this tightens the bounds by replacing the maximum scale with the standardized variance. Averaging standardized outcomes prevents outcomes with larger scales from dominating. Standardization also aids in interpretation, for instance allowing us to change the sign of each outcome to follow the convention that ``positive'' has the same semantic meaning for all outcomes (e.g., higher test scores are more desirable). Therefore, we recommend standardizing each outcome series using its pre-treatment standard deviation.\footnote{Standardizing by the \emph{estimated} standard deviation rather than the true, unknown standard deviation may induce a small degree of additional dependence across outcomes at different times. We leave a more thorough analysis to future work.}

\paragraph{Conduct sensitivity analysis based on combined weights.}
In some cases, averaging can remove substantial variation, thereby violating Assumption~\ref{assumption:singularity}. To guard against this, we recommend conducting a sensitivity analysis based on finding \emph{combined weights}: SCM weights that control a linear combination of the two objectives, with weight $\nu$ on the concatenated objective and weight $(1-\nu)$ on the averaged objective. In particular, for any $\nu\in[0,1]$, consider $\hat\gamma^{\com} \in\arg\min_{\gamma\in \mathcal C} \nu^\ast q^{avg}(\gamma) + (1-\nu^\ast) q^{cat}(\gamma)$.
This creates 
an imbalance ``frontier," similar to the approach in \cite{benmichael2022_stag},
where $\nu = 0$ corresponds to the concatenated objective and $\nu = 1$ to the average objective. 
As formalized in Lemma~\ref{lem:combined imbalance} of Online Appendix~\ref{sec:proofs}, if the SCM weights yield good pre-treatment fit on \emph{both} the disaggregated and aggregated outcomes, these weights will also achieve the minimum of the two bounds, and the conformal inference approach described in Online Appendix \ref{sec:inference} will also be valid.
In general, finding the optimal $\nu^\ast$ involves model-derived parameters and is infeasible.  
As a heuristic, we consider $\nu = \sqrt{q^{avg}(\hat\gamma^{cat})}/\sqrt{q^{cat}(\hat\gamma^{cat})}$, which is a ratio that is guaranteed to be in $[0,1]$.

\section{Application: Flint Water Crisis Study}\label{sec:flint_synth}

On April 25, 2014, Flint's residents began receiving drinking water from the Flint River, where the water was both corrosive and improperly treated, causing lead to leach from pipes and exposing about 100,000 residents to contaminated water for over 18 months. Concerns persist a decade later, especially regarding the impact on children, who are highly vulnerable to lead exposure.

To assess this impact, \citet{trejo_psychosocial_2021} conduct several different analyses both across school districts and within Flint. We revisit their cross-district SCM analysis, based on a district-level panel data set for Flint and 54 possible comparison districts in Michigan, viewing the April 2014 change in drinking water as the ``treatment.''
The authors focus on four key educational outcomes: math achievement, reading achievement,
special needs status, and daily attendance; all are aggregated to the annual level from 2007 to 2019. We focus on estimation in the main text and defer additional details and inference to Online Appendix \ref{appendix:flint}.

\citet{trejo_psychosocial_2021} argue that these four outcomes are indicative of (aggregate) student psycho-social outcomes  at the district level, and, consistent with our results in Theorem \ref{thm:error_bounds}, 
fit a common set of (de-meaned) SCM weights based on concatenating these outcome series. 
Here we first inspect the weights for the synthetic controls fit separately on each outcome in Figure~\ref{fig:weights}. This illustrates a challenge in interpreting separate weights, as the ``synthetic Flint'' for each outcome series is a composite of different donor districts with limited overlap in the selected donor units across fits. For example, the synthetic control for math achievement places over 30\% of the weight on the Van Buren Public Schools district, a district that receives little to no weight when fitting a synthetic control for the other three outcomes or their average.

\begin{figure}[tb]
  \begin{centering}
       \includegraphics[width=0.8\textwidth]{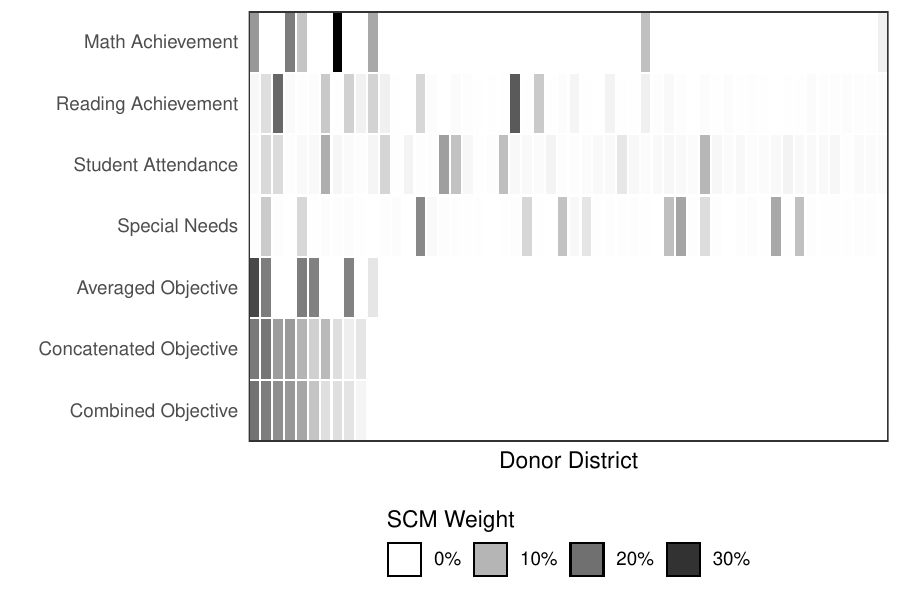}
  \par\end{centering}
  \caption{Separate SCM weight placed on each donor unit using (i) Student Attendance, (ii) Special Needs, (iii) Reading Achievement, and (iv) Math Achievement separately as outcomes, along with the weights solving the concatenated and averaged objectives. %
  The top 5 districts accounting for over 75\% of the weights solving the combined objective are: (i) Dowagiac Union; (ii) Oak Park School; (iii) Lincoln Consolidated; (iv) Hamtramck; and (v) Houghton Lake. These are similar to the weights found by \cite{trejo_psychosocial_2021} (see Online Appendix Table~\ref{tab:combined_weights} for the full list of weights).}\label{fig:weights}
  \end{figure}

Next, we assess whether the observed data are consistent with the low-rank factor model discussed in Section \ref{subsec:factor}.
To do so, we examine the $N \times T_0K$ matrix of (de-meaned and standardized) pre-treatment outcomes, where $N = 54$, $T_0 = 8$, and $K = 4$. The top 10 singular vectors capture over 80\% of the total variation, which is consistent with a low-rank model component and the existence of corresponding oracle weights.
As a second check, we evaluate the ``held out'' pre-treatment fit on each outcome series when we remove that outcome from the combined concatenated and averaging objectives.
Online Appendix Figure~\ref{fig:pre_fit} shows the mean square prediction error relative to uniform weighting. We find that the combined approach yields reasonable held-out fit for math, reading, and special needs outcomes. However, we find worse held-out fit for student attendance; further inspection finds that this is greatly improved after excluding special needs as an outcome.
Taken together, this suggests that while a common factor structure could be reasonable for math and reading achievement, there may be idiosyncrasies between the outcome series for student attendance and special needs.\footnote{
In Online Appendix Figure~\ref{fig:flint_nosped}, we consider analyzing the impact on special needs separately from the other three outcomes, consistent with the robustness checks in \citet{trejo_psychosocial_2021} and this diagnostic check. The results are broadly similar.
}

In addition to the concatenated SCM weights in the original application, we now also consider separate and average SCM weights and the combined approach that solves a weighted average of the concatenated and averaged objectives, with a heuristic weight of $\nu = 0.47$ as shown in Online Appendix Figure~\ref{fig:frontier}. Figure~\ref{fig:gap_flint_fixedeff} shows the SCM gap plots --- i.e. the differences between the observed outcomes for Flint and the counterfactual outcomes imputed by the synthetic control --- for these four sets of weights.
The separate SCM weights achieve close to perfect fit in the pretreatment period, suggesting potential bias due to overfitting to noise, as we discuss in Section \ref{subsec: bias discussion}.
By contrast, the concatenated and average SCM weights and the combined approach do not lead to near-perfect pre-treatment fit, though the fit is still reasonably good. 

The results largely replicate those in \citet{trejo_psychosocial_2021}. Both the averaged and concatenated weights as well as the combined approach estimate a deterioration of math test scores following the Flint water crisis, with little change in reading test scores and student attendance. All three sets of weights also find an increase in the proportion of students with special needs, though the magnitude is smaller for the averaged weights.

\begin{figure}[t]
\begin{centering}
     \includegraphics[width=0.7\textwidth]{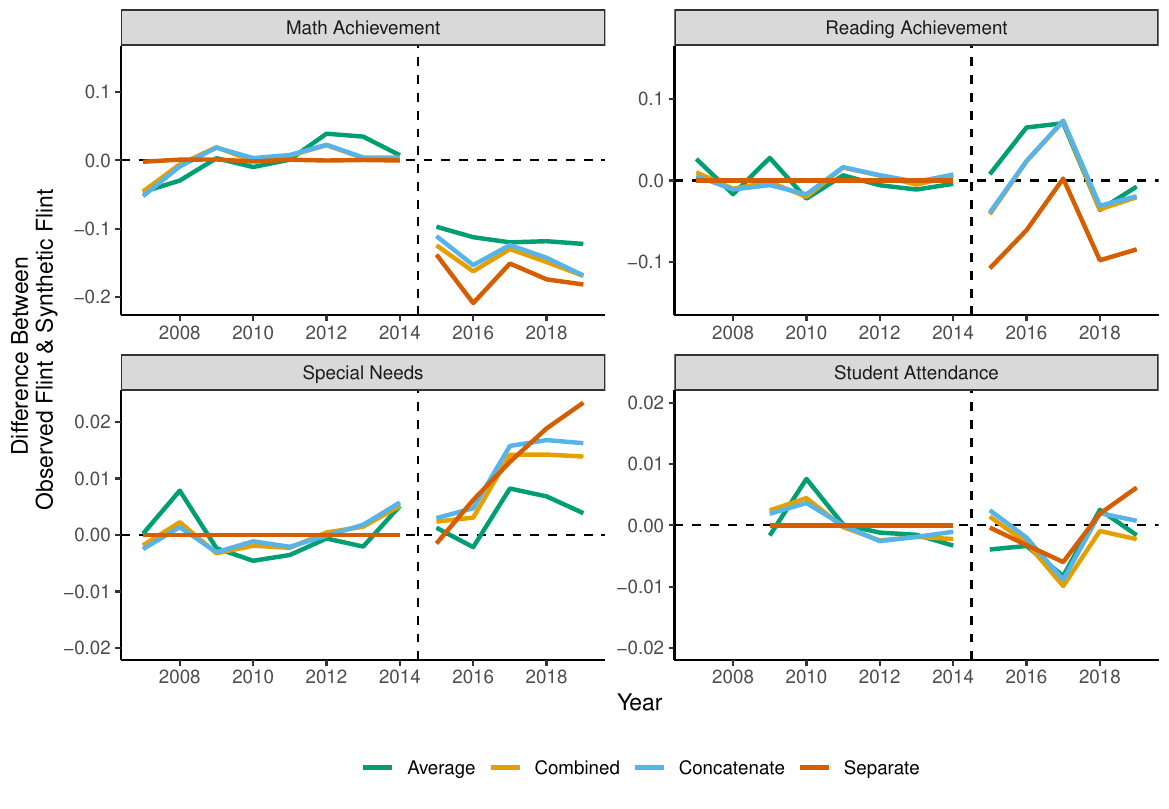}
\par\end{centering}
\caption{Point estimates for the effect of the Flint water crisis using separate weights, concatenated weights, and average weights, as well as the combined weights (setting $\nu = 0.47$). The separate SCM weights yield essentially perfect pre-treatment fit for all four outcomes.}\label{fig:gap_flint_fixedeff}
\end{figure}

\section{Conclusion}
\label{sec:conclusion}
SCM is a popular approach for estimating policy impacts at the aggregate level, such as the school district- or state-level. This approach, however, can be susceptible to bias due to poor pre-treatment fit or to overfitting to idiosyncratic errors. By incorporating multiple outcome series into the SCM framework and under the assumption that the multiple outcomes share a similar factor structure, this paper proposes approaches that address these challenges and that yield more interpretable weights.

There are several directions for future work. The most immediate is alternatives for SCM with multiple outcomes when the common factor structure might not in fact hold.  We could consider approaches that average or borrow strength across multiple model types, including from hierarchical Bayesian models \citep[see, for example,][]{ben2023estimating}, from tensor completion following \citet{agarwal2023synthetic} or from an instrumental variable approach following \citet{shi_proximal_2021} and \citet{fry_method_2023}.
In addition, even when the common factor structure does hold, there may be more efficient ways to create an index of outcomes beyond a simple average, including dimension reduction approaches such as PCA or reduced rank regression.
Finally, leveraging multiple outcomes alone might not be enough to mitigate SCM bias. Following \citet{benmichael2021_ascm} and \citet{Arkhangelsky2021}, we could consider augmenting common SCM weights with either a common outcome model or separate models for each outcome series. 

 \clearpage
\bibliography{refs}
\bibliographystyle{aer}

\clearpage
\appendix
\section*{Online Appendix}
\counterwithin{figure}{section}
\counterwithin{table}{section}

\section{Inference}
\label{sec:inference}

\subsection{Overview}
There is a large and growing literature on inference for the synthetic control method and variants. Here we adapt the conformal inference approach of
\cite{chernozhukov2021exact} to the setting of multiple outcomes.
To do so, we focus on a sharp null hypothesis about the effects on the $K$ different outcomes simultaneously, $H_0:\boldsymbol{\tau}=\boldsymbol{\tau}_0$, with $\boldsymbol{\tau} \in \R^K$. For example, if $\boldsymbol{\tau}_0=\bold{0}_K$ we are interested in testing whether the treatment effect is zero for all outcomes. 

The conformal inference approach proceeds as follows: 

    \begin{enumerate}
        \item Enforce the null hypothesis by creating adjusted post-treatment outcomes for the treated unit $\tilde Y_{1Tk} = Y_{1Tk}-\tau_{0k}$.
         \item  Augment the original data set to include the post-treatment time period $T$, with the adjusted outcomes $\tilde Y_{1Tk}$; use the concatenated, averaged, or combined objective function to obtain weights $\hat \gamma(\boldsymbol{\tau}_0)$
         \item Compute the adjusted residual $\hat u_{tk} =  Y_{1tk} - \sum _{W_i=0}\hat \gamma_i(\boldsymbol{\tau}_0) Y_{itk} $  and $\hat u_{Tk} = \tilde Y_{1Tk} - \sum _{W_i=0}\hat \gamma_i(\boldsymbol{\tau}_0) Y_{iTk} $ and form the test statistic:
         \begin{equation}
             S_q(\hat u_t) = \left (\frac{1}{\sqrt{K}} \sum_{k=1}^{K} |\hat u_{tk}|^q \right)^{1/q}\label{eq:test stat}
         \end{equation}
         where the choice of the norm $q$ maps to power against different alternatives. For instance, if the treatment has a large effect for only few outcomes, choosing $q=\infty$ yields high power. On the other hand, if the treatment effect has similar magnitude across all outcomes, then setting $q=1$ or $q=2$ yields good power.  In practice, we set $q=1$.
         \item Compute a  $p$-value  by assessing whether the test statistic associated with the post-treatment period ``conforms'' with the distribution of the test statistic associated with pre-treatment periods:
        \begin{equation}
             \hat{p}(\boldsymbol{\tau}_0) = \frac{1}{T} \sum_{t=1}^{T_0} \b1 \left\{S_q(\hat u_T) \leq S_q(\hat u_t) \right\} + \frac{1}{T}.  \label{eq:p value}
         \end{equation}
    \end{enumerate}

\citet{chernozhukov2021exact} show that in an asymptotic setting with $T$ (and $N$) growing, this conformal inference procedure will be valid for estimation methods that are consistent.
In particular, they show that the test~\eqref{eq:p value} has approximately correct size; the difference between actual size and nominal size vanishes as $T_0,N \rightarrow \infty$.
In  the next section, we discuss technical sufficient conditions for consistency, closely following \citet{chernozhukov2021exact} and departing from the finite-sample analysis that is our main focus here.

To construct the confidence set for the treatment effect of different outcomes, we collect the values of $\boldsymbol{\tau}_0$ for which test~\eqref{eq:p value} does not reject.  We can then project the confidence set onto each outcome to form a conservative confidence interval.

Finally, an alternative approach is to focus on testing the average effect across the $K$ outcomes, $\frac{1}{K}\sum_{k=1}^{K}\tau_k$, with outcomes appropriately scaled so that positive and negative effects have the similar semantic meanings across outcomes.
This setting returns to the scalar setting considered by \cite{chernozhukov2021exact}, where the estimates are based on the average weights $\hat \gamma^{avg}$,
and so for inference on the average we can follow their procedure exactly.

\subsection{Technical details regarding inference}
\label{sec:technical_inf}
In this section we provide additional technical details for the approximate validity of the conformal inference procedure proposed by \citet{chernozhukov2021exact} with averaged weights.
To do so, we will consider an asymptotic setting with both $N$ and $T$ growing, and make a variation of the structural Assumption~\ref{assumption:noiseless}
and Assumption~\ref{assumption:oracle} that constrained oracle weights exist.

\begin{assumption}
\label{a:conformal}
The de-meaned potential outcome under control for the treated unit's $k$th outcome at time $t$ is
\[
\dot{Y}_{itk}(0) = \sum_{W_i = 0} \gamma_i^\ast \dot{Y}_{itk} + u_{tk},
\]
for some set of oracle weights $\gamma^\ast \in \mathcal{C}$, where for a given $k$ the noise terms $u_{1k},\ldots,u_{Tk}$ are stationary, strongly mixing, with a bounded sum of mixing coefficients bounded, and satisfy $\E[u_{tk} Y_{itk}] = 0$ for all $W_i = 0$.
\end{assumption}
As in the previous assumptions, Assumption~\ref{a:conformal} also assumes the existence of oracle weights $\gamma^\ast$ shared across all outcomes, though they are defined slightly differently.
Directly applying Theorem 1 in \citet{chernozhukov2021exact}, the conformal inference procedure in Section~\ref{sec:inference} using a set of weights $\hat{\gamma}$, will be asymptotically valid if $\sum_{W_i = 0}\hat{\gamma}_i \dot{Y}_{itk}$ is a consistent estimator for $\sum_{W_i = 0}\gamma^\ast_i \dot{Y}_{itk}$, when we include the post-treatment period $T$ when estimating the weights.

Next, we list sufficient assumptions for this type of consistency using the average weights $\hat{\gamma}^\text{avg}$,
consistency with the concatenated weights $\hat{\gamma}^\text{cat}$ can be established in an analogous matter. 
In these assumptions, we define $\bar{u}_t = \frac{1}{K} \sum_{k=1}^K u_{tk}$ and $\dot{\bar{Y}}_{it\cdot} = \frac{1}{K}\sum_{k=1}^K \dot{Y}_{itk}$.

\begin{assumption}
    \label{a:conformal_avg}
    \mbox{}\\
    \begin{enumerate}[label = (\roman*)]
        \item There exist constants $c_1,c_2 > 0$ such that $\E[\left(\dot{\bar{Y}}_{it\cdot} \bar{u}_t\right)^2 \geq c_1$ and $\E[\left|\bar{Y}_{it\cdot} \bar{u}_t\right|^3] \leq c_2$ for any $i$ such that $W_i = 0$ and $t = 1,\ldots,T$.
        \item For each $i$ such that $W_i = 0$, the sequence $\{\dot{\bar{Y}}_{it\cdot} \bar{u}_t\}$ is $\beta$-mixing and the $\beta$-mixing coefficient satisfies $\beta(t) \leq a_1 \exp\left(-a_2 t^\tau\right)$, where $a_1,a_2,\tau > 0$.
        \item There exists a constant $c_3 > 0$ such that $\max_{i: W_i = 0}\sum_{t=1}^T \dot{\bar{Y}}_{it\cdot}^2\bar{u}_t^2 \leq c_3 T$ with probability $1 - o(1)$.
        \item $\log N = o\left(T^\frac{4\tau}{3\tau + 4}\right)$
        \item \label{a:new} There exists a sequence $\ell_T > 0$ such that $\frac{\ell_T M [\log(\min\{T, N-1\})]^\frac{1+\tau}{2\tau}}{\sqrt{T}} \to 0$, 
        \[
            \left(\sum_{W_i = 0} \dot{Y}_{iTk} \delta_i \right)^2 \leq \ell_T \frac{1}{T}\sum_{t=1}^T \left(\sum_{W_i = 0} \dot{\bar{Y}}_{it\cdot} \delta_i\right)^2,
        \]
        and
        \[
          \frac{1}{T}\sum_{t=1}^T\left(\dot{Y}_{itk}\delta_i\right)^2 \leq \ell_T \frac{1}{T}\sum_{t=1}^T\left(\dot{\bar{Y}}_{it\cdot}\delta_i\right)^2
        \]
        for all $\gamma^\ast + \delta \in \mathcal{C}$, all $k =1,\ldots,K$  with probability $1 - o(1)$.
    \end{enumerate}
\end{assumption}

Assumption~\ref{a:conformal_avg} follows the technical assumptions in the proof of Lemma 1 in \citet{chernozhukov2021exact} with two modifications. First, we place assumptions on the noise values averaged across outcomes, $\bar{u}_1,\ldots\bar{u}_T$ rather than the outcome-specific noise values because we are working with the averaged estimator.
Second, Assumption~\ref{a:conformal_avg}\ref{a:new} modifies Assumption (6) in the proof of Lemma 1 in \citet{chernozhukov2021exact} to link consistent prediction of the average of the de-meaned outcomes to consistent prediction for any individual outcome. This assumption is related to Assumption~\ref{assumption:oracle}. If there is a common factor structure across outcomes, then we have the link
\begin{align*}
\sum_{W_i = 0} \dot{Y}_{itk}\delta_i & = \mu_{tk} \cdot \sum_{W_i = 0} \phi_i \delta_i + \sum_{W_i = 0} \dot{\varepsilon}_{itk} \delta_i\\
& =  \mu_{Tk}\cdot\left(\sum_{t}\left(\bar{\mu}_{t}\right)\left(\bar{\mu}_{t}\right)'\right)^{-1} \bar{\mu}_{t} \sum_{W_i = 0} \dot{\bar{Y}}{it \cdot} \delta_i +  \mu_{Tk}\cdot\left(\sum_{t}\left(\bar{\mu}_{t}\right)\left(\bar{\mu}_{t}\right)'\right)^{-1} \bar{\mu}_{t} \sum_{W_i = 0} \dot{\bar{\varepsilon}}_{it \cdot} \delta_i + \sum_{W_i = 0} \dot{\varepsilon}_{itk} \delta_i.
\end{align*}
So, if common oracle weights exist, Assumption~\ref{a:conformal_avg}\ref{a:new} amounts to an assumption on the noise terms.

Under these assumptions, we have a direct analog to Lemma 1 in \citet{chernozhukov2021exact} that is a direct consequence. We state it here for completeness.
\begin{lemma}
    \label{lem:conformal}
    Let $\hat{\gamma}^\text{avg}$ solve $\min_{\gamma \in \mathcal{C}} q^\text{avg}(\gamma)^2$, including the post treatment outcome $T$. Under Assumptions~\ref{a:conformal} and \ref{a:conformal_avg}, $\hat{\gamma}^\text{avg}$ satisfies the consistency properties required for Theorem 1 in \citet{chernozhukov2021exact}, namely,
    \[
        \frac{1}{T}\sum_{t = 1}^T \left(\dot{Y}_{itk} \left(\hat{\gamma}_i - \gamma_i^\ast\right)\right)^2 = o_p(1)
    \]
    and 
    \[
        \dot{Y}_{iTk} \left(\hat{\gamma}_i - \gamma_i^\ast\right) = o_p(1).
    \]
\end{lemma}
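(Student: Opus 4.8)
\textbf{Proof proposal for Lemma~\ref{lem:conformal}.}

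The plan is to invoke Lemma 1 of \citet{chernozhukov2021exact} as a black box for the averaged outcome series and then transfer the resulting consistency guarantee from the average of the de-meaned outcomes to each individual outcome using Assumption~\ref{a:conformal_avg}\ref{a:new}. The starting point is the observation that, with $\bar u_t = \frac1K\sum_k u_{tk}$ and $\dot{\bar Y}_{it\cdot} = \frac1K\sum_k \dot Y_{itk}$, Assumption~\ref{a:conformal} (averaged across $k$) gives $\dot{\bar Y}_{1t\cdot}(0) = \sum_{W_i=0}\gamma_i^\ast \dot{\bar Y}_{it\cdot} + \bar u_t$ with $\E[\bar u_t \dot{\bar Y}_{it\cdot}]=0$ by linearity of expectation, and the mixing and moment conditions needed by \citet{chernozhukov2021exact} are exactly items (i)--(iv) of Assumption~\ref{a:conformal_avg}. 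Since $\hat\gamma^\text{avg}$ minimizes $q^\text{avg}(\gamma)^2 = \frac{1}{T_0}\sum_t (\dot{\bar Y}_{1t\cdot} - \sum_{W_i=0}\gamma_i \dot{\bar Y}_{it\cdot})^2$ over $\gamma \in \mathcal{C}$ — now including the post-treatment period $T$ — it is precisely the constrained least-squares estimator that Lemma 1 of \citet{chernozhukov2021exact} analyzes, applied to the single ``pseudo-outcome'' series $\dot{\bar Y}_{it\cdot}$. Their Lemma 1 therefore yields
\[
\frac{1}{T}\sum_{t=1}^T \Bigl(\sum_{W_i=0}\dot{\bar Y}_{it\cdot}(\hat\gamma_i - \gamma_i^\ast)\Bigr)^2 = o_p(1), \qquad \sum_{W_i=0}\dot{\bar Y}_{iT\cdot}(\hat\gamma_i-\gamma_i^\ast) = o_p(1).
\]

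Next I would write $\delta = \hat\gamma - \gamma^\ast$, which satisfies $\gamma^\ast + \delta = \hat\gamma \in \mathcal{C}$, so the two inequalities in Assumption~\ref{a:conformal_avg}\ref{a:new} apply with $\delta_i = \hat\gamma_i - \gamma_i^\ast$. The second inequality directly gives $\frac1T\sum_t (\sum_{W_i=0}\dot Y_{itk}\delta_i)^2 \leq \ell_T \cdot \frac1T\sum_t(\sum_{W_i=0}\dot{\bar Y}_{it\cdot}\delta_i)^2$, and combining this with the first display above plus the requirement $\ell_T M[\log(\min\{T,N-1\})]^{(1+\tau)/(2\tau)}/\sqrt{T}\to 0$ (which forces $\ell_T = o(\sqrt{T}/M)$ and in particular does not blow up the $o_p(1)$ rate — here I would need to check that the $o_p(1)$ rate from \citet{chernozhukov2021exact} is fast enough that multiplying by $\ell_T$ still gives $o_p(1)$, which is the role played by the rate condition in \ref{a:new}) yields $\frac1T\sum_t(\dot Y_{itk}(\hat\gamma_i-\gamma_i^\ast))^2 = o_p(1)$, the first claimed conclusion. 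Similarly, the first inequality in \ref{a:new} gives $(\sum_{W_i=0}\dot Y_{iTk}\delta_i)^2 \leq \ell_T \cdot \frac1T\sum_t(\sum_{W_i=0}\dot{\bar Y}_{it\cdot}\delta_i)^2$; the right-hand side is $\ell_T \cdot o_p(1)$, and again the rate condition on $\ell_T$ ensures this is $o_p(1)$, delivering $\dot Y_{iTk}(\hat\gamma_i-\gamma_i^\ast) = o_p(1)$, the second conclusion. Both conclusions hold for every $k = 1,\dots,K$ because the inequalities in \ref{a:new} are assumed uniformly over $k$.

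The main obstacle — and really the only non-bookkeeping step — is verifying that the hypotheses of Lemma 1 in \citet{chernozhukov2021exact} are genuinely met by the averaged series rather than just superficially matched. Concretely, I would need to confirm: (a) that $q^\text{avg}$ is of the exact least-squares form their lemma assumes, with $\mathcal{C}$ playing the role of their constraint set and satisfying whatever geometric/compactness conditions they impose (the paper states $\mathcal{C}$ is bounded in $\ell_1$-norm, typically the simplex, which should suffice); (b) that the orthogonality condition $\E[\bar u_t \dot{\bar Y}_{it\cdot}]=0$ from Assumption~\ref{a:conformal} together with the stationary strongly-mixing structure of the $u_{tk}$ transfers to $\bar u_t$ — linearity handles the orthogonality, and a finite sum of strongly mixing sequences is strongly mixing with summable coefficients, but one should be careful that the averaging does not degrade the mixing rate below what \citet{chernozhukov2021exact} require; and (c) that items (i)--(iv) of Assumption~\ref{a:conformal_avg} are stated to match their conditions (6)-equivalents term for term. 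Since the paper explicitly says Assumption~\ref{a:conformal_avg} ``follows the technical assumptions in the proof of Lemma 1 in \citet{chernozhukov2021exact}'' with only the two stated modifications, this verification is largely a matter of careful cross-referencing, and the lemma then follows ``as a direct consequence,'' exactly as the statement claims. I would therefore keep the written proof short: state the black-box application, then the two-line transfer via Assumption~\ref{a:conformal_avg}\ref{a:new}, noting the rate condition is what makes the $\ell_T$ multiplication harmless.
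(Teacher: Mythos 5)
Your proposal is correct and follows essentially the same route as the paper: apply the consistency result for the averaged pseudo-outcome series from Lemma 1 (and Lemma H.8) of \citet{chernozhukov2021exact}, then transfer to each individual outcome via the two inequalities in Assumption~\ref{a:conformal_avg}\ref{a:new}. The one subtlety you flag --- that a bare $o_p(1)$ is not enough and one needs the explicit rate $M[\log(\min\{T,N-1\})]^{(1+\tau)/(2\tau)}/\sqrt{T}$ so that multiplication by $\ell_T$ remains $o(1)$ --- is exactly how the paper closes the argument, so no gap remains.
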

\begin{proof}[Proof of Lemma~\ref{lem:conformal}]
    First, we can directly apply the claim from the proof of Lemma 1 and Lemma H.8 in \citet{chernozhukov2021exact} to state that there exists a constant $M >0$ such that
    \[
        \frac{1}{T}\sum_{t=1}^T\left(\dot{\bar{Y}}_{it\cdot}(\hat{\gamma}_i^\text{avg} -\gamma^\ast_i)\right)^2 \leq \frac{M [\log(\min\{T, N-1\})]^\frac{1+\tau}{2\tau}}{\sqrt{T}}
    \]
    with probability $1 - o(1)$. Now from Assumption~\ref{a:conformal_avg}\ref{a:new}, $\frac{\ell_T M [\log(\min\{T, N-1\})]^\frac{1+\tau}{2\tau}}{\sqrt{T}} = o(1)$, which completes the proof.
\end{proof}

\section{Auxillary lemmas and proofs}
\subsection{Error bounds for the oracle imbalance}

The bias due to imbalance in observed demeaned outcomes depends crucially on the measure of imbalance we choose to minimize. We upper bound the imbalance using the estimated weights with the imbalance when using oracle weights, which we refer to as oracle imbalance. For example, we argue the oracle imbalance for the objective function of the SCM satisfies a form of concentration inequality:
\begin{eqnarray*}
q^{sep}(\gamma^{\ast}) & = & \sqrt{\frac{1}{T_{0}} \sum_{t=1}^{T_{0}}\left(\dot{\varepsilon}_{1tj} -\sum_{W_{i}=0}\gamma^{\ast}_{i}\dot{\varepsilon}_{itj}\right)^{2}}
\end{eqnarray*}
At first glance, the imbalance is the L2 norm of the vector of demeaned errors.  The challenge is that the demeaned errors $\dot{\varepsilon}_{itj}$ are correlated over time due to demeaning.

We prove a general upper bound on the oracle imbalance in Lemma~\ref{claim:L2_demeaned_errors} that allow us to decompose the imbalance into the L2 norm of errors and the L2 norm of the average of errors. Lemma~\ref{claim:better_qavg_errors} presents the intermediate concentration inequality for the L2 norm of errors. 
 Finally, building on Lemma~\ref{claim:L2_demeaned_errors} and~\ref{claim:better_qavg_errors}, Lemma~\ref{claim:better_qavg} inspects the numerical properties for the pre-treatment fits achievable by the oracle weights.  Unless otherwise noted, all results hold under Assumptions~\ref{assumption:noiseless},~\ref{assumption:oracle},~\ref{assumption:SG}.

\begin{lemma}[L2 norm of demeaned errors]\label{claim:L2_demeaned_errors}
Under the oracle weights, we have the following upper bounds for the oracle imbalance
\begin{eqnarray*}
q^{cat}(\gamma^{\ast}) & \leq & \sqrt{2\cdot \frac{1}{T_{0}}\frac{1}{K}\sum_{k=1}^{K}\sum_{t=1}^{T_{0}}\left(\varepsilon_{1tk} -\sum_{W_{i}=0}\gamma^{\ast}_{i}\varepsilon_{itk}\right)^{2}} + \sqrt{\frac{2}{K}\sum_{k=1}^{K}\left(\bar \varepsilon_{1\cdot k} -\sum_{W_{i}=0}\gamma^{\ast}_{i}\bar \varepsilon_{i\cdot k}\right)^{2}} \\
q^{avg}(\gamma^{\ast}) & \leq & \sqrt{\frac{2}{T_{0}}\sum_{t=1}^{T_{0}}\left(\frac{1}{K}\sum_{k=1}^{K}\varepsilon_{1tk} -\sum_{W_{i}=0}\gamma^{\ast}_{i}\varepsilon_{itk}\right)^{2}}+\sqrt{2\left(\frac{1}{K}\sum_{k=1}^{K}\bar \varepsilon_{1\cdot k} -\sum_{W_{i}=0}\gamma^{\ast}_{i}\bar \varepsilon_{i\cdot k}\right)^{2}}\\
q^{sep}(\gamma^{\ast}) & \leq & \sqrt{\frac{2}{T_{0}} \sum_{t=1}^{T_{0}}\left({\varepsilon}_{1tj} -\sum_{W_{i}=0}\gamma^{\ast}_{i}{\varepsilon}_{itj}\right)^{2}} + \sqrt{ 2\left(\bar{\varepsilon}_{1\cdot j} -\sum_{W_{i}=0}\gamma^{\ast}_{i}\bar{\varepsilon}_{i\cdot j}\right)^{2}}.
\end{eqnarray*}
\end{lemma}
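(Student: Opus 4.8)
\textbf{Proof plan for Lemma~\ref{claim:L2_demeaned_errors}.}

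The plan is to handle all three bounds through a single mechanism: each de-meaned-error imbalance quantity can be written as a difference of a ``raw-error'' imbalance and an ``averaged-error'' imbalance, after which the elementary inequality $(a-b)^2 \leq 2a^2 + 2b^2$ (equivalently, the triangle inequality for the $\ell_2$ norm combined with $\sqrt{x+y}\leq\sqrt{x}+\sqrt{y}$) yields the stated form. Concretely, I would start from the observation that, by Assumption~\ref{assumption:noiseless}, when the oracle weights $\gamma^\ast$ are plugged into any of the three objectives, the deterministic parts $\alpha_{ik}$, $\beta_{tk}$, $L_{itk}$ all cancel exactly: the fixed effects $\alpha_{ik}$ are removed by de-meaning, the time effects $\beta_{tk}$ and the $L_{itk}$ terms are annihilated because $\gamma^\ast$ satisfies Equation~\eqref{eq:oracle} (sums to one and balances $L$), so only the de-meaned idiosyncratic errors $\dot\varepsilon_{itk} = \varepsilon_{itk} - \bar\varepsilon_{i\cdot k}$ survive. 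Hence, e.g., $q^{sep}(\gamma^\ast)$ is exactly the $\ell_2$-type quantity in $\dot\varepsilon$ displayed in the auxiliary-lemma discussion above.

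Next, for each objective I would substitute $\dot\varepsilon_{itk} = \varepsilon_{itk} - \bar\varepsilon_{i\cdot k}$ inside the squared term and split the weighted residual into a raw-error piece and a mean-error piece. For $q^{sep}$: the summand $\dot\varepsilon_{1tj} - \sum_{W_i=0}\gamma^\ast_i \dot\varepsilon_{itj}$ equals $\big(\varepsilon_{1tj} - \sum\gamma^\ast_i\varepsilon_{itj}\big) - \big(\bar\varepsilon_{1\cdot j} - \sum\gamma^\ast_i\bar\varepsilon_{i\cdot j}\big)$, where the second piece does not depend on $t$. Applying $(a_t - b)^2 \leq 2a_t^2 + 2b^2$ termwise, summing over $t$, dividing by $T_0$, and taking the square root, then using $\sqrt{u+v}\leq\sqrt u + \sqrt v$, gives exactly the claimed bound for $q^{sep}(\gamma^\ast)$. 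For $q^{avg}$ the decomposition is the same but with $\varepsilon_{1tk} - \sum\gamma^\ast_i\varepsilon_{itk}$ first averaged over $k$ and then the same $t$-independent mean-error offset $\frac1K\sum_k(\bar\varepsilon_{1\cdot k} - \sum\gamma^\ast_i\bar\varepsilon_{i\cdot k})$ pulled out; note the offset term has no $t$-sum so the factor $\frac1{T_0}\sum_t$ just contributes a $1$ in front of it after averaging, matching the displayed second summand. For $q^{cat}$ the averaging is over the double index $(t,k)$ with weight $\frac1{T_0 K}$; after splitting, the raw-error part keeps the $\frac1{T_0 K}\sum_{t,k}$ normalization, while the mean-error part, since $\bar\varepsilon_{i\cdot k}$ is $t$-independent, reduces to $\frac1K\sum_k(\cdots)^2$ — again matching the stated right-hand side. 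In every case the constant $2$ and the two square roots come from one application each of $(a-b)^2\leq 2a^2+2b^2$ and subadditivity of $\sqrt{\cdot}$.

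I do not expect a genuine obstacle here; this is a decomposition-plus-elementary-inequality lemma whose only subtlety is bookkeeping: keeping straight which normalization ($\frac1{T_0}$, $\frac1K$, or $\frac1{T_0K}$) attaches to the raw-error term versus the mean-error term, and verifying that the $t$-independence of $\bar\varepsilon_{i\cdot k}$ is what collapses the time-average on the mean-error term. The one conceptual point worth stating cleanly at the outset — and the thing a careful reader will want justified — is the exact cancellation of all deterministic components under the oracle weights; that is where Assumption~\ref{assumption:noiseless} (two-way fixed effects plus $L$) and Definition~\ref{def:oracle} (oracle weights sum to one and kill $L$) get used, and it is worth a single explicit line. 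Everything after that is the mechanical split described above.
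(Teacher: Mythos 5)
Your plan is correct and matches the paper's proof essentially line for line: both rest on writing $\dot\varepsilon_{itk}=\varepsilon_{itk}-\bar\varepsilon_{i\cdot k}$, applying $(a-b)^2\leq 2a^2+2b^2$ termwise, and then using subadditivity of the square root, with the $t$-independence of $\bar\varepsilon_{i\cdot k}$ collapsing the time average on the second term. The only difference is that you make explicit the cancellation of the deterministic components under the oracle weights, a step the paper states without proof at the start of its auxiliary section, so your write-up is, if anything, slightly more complete.
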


\begin{proof}[Proof of Lemma~\ref{claim:L2_demeaned_errors}]
Note the following algebraic inequality 
\begin{eqnarray*}
\left(\dot{\varepsilon}_{1tj} -\sum_{W_{i}=0}\gamma^{\ast}_{i}\dot{\varepsilon}_{itj}\right)^{2} & = & \left(\varepsilon_{1tj} -\sum_{W_{i}=0}\gamma^{\ast}_{i}\varepsilon_{itj} - \left(\bar{\varepsilon}_{1\cdot j} -\sum_{W_{i}=0}\gamma^{\ast}_{i}\bar{\varepsilon}_{i\cdot j}\right) \right)^{2} \\
& \leq &  2\left(\varepsilon_{1tj} -\sum_{W_{i}=0}\gamma^{\ast}_{i}\varepsilon_{itj} \right)^{2} +
2\left(\bar{\varepsilon}_{1\cdot j} -\sum_{W_{i}=0}\gamma^{\ast}_{i}\bar{\varepsilon}_{i\cdot j} \right)^{2}.
\end{eqnarray*}
For brevity, we only prove the upper bound for $q^{sep}(\gamma^{\ast})$ as the other two upper bounds can be shown similarly. 
\begin{eqnarray*}
q^{sep}(\gamma^{\ast}) & \leq & \sqrt{\frac{2}{T_{0}} \sum_{t=1}^{T_{0}}\left({\varepsilon}_{1tj} -\sum_{W_{i}=0}\gamma^{\ast}_{i}{\varepsilon}_{itj}\right)^{2} +  2\left(\bar{\varepsilon}_{1\cdot j} -\sum_{W_{i}=0}\gamma^{\ast}_{i}\bar{\varepsilon}_{i\cdot j}\right)^{2}}\\
& \leq & \sqrt{\frac{2}{T_{0}} \sum_{t=1}^{T_{0}}\left({\varepsilon}_{1tj} -\sum_{W_{i}=0}\gamma^{\ast}_{i}{\varepsilon}_{itj}\right)^{2}} + \sqrt{ 2\left(\bar{\varepsilon}_{1\cdot j} -\sum_{W_{i}=0}\gamma^{\ast}_{i}\bar{\varepsilon}_{i\cdot j}\right)^{2}}.
\end{eqnarray*}
\end{proof}

\begin{lemma}[L2 norm of errors]\label{claim:better_qavg_errors}
 Suppose Assumptions~\ref{assumption:noiseless},~\ref{assumption:oracle}  and~\ref{assumption:SG} hold. For any $\delta>0$, we have the following bounds for the imbalance achieved by the oracle weights $\gamma^{\ast}$  
\begin{align}
\sqrt{\frac{1}{T_{0}}\frac{1}{K}\sum_{k=1}^{K}\sum_{t=1}^{T_{0}}\left(\varepsilon_{1tk} -\sum_{W_{i}=0}\gamma^{\ast}_{i}\varepsilon_{itk}\right)^{2}} & \leq 4\sigma\sqrt{1+\left\Vert \gamma^\ast\right\Vert _{2}^{2}}+\delta \label{eq:imbalance_errors_cat}\\
\sqrt{\frac{1}{T_{0}}\sum_{t=1}^{T_{0}}\left(\frac{1}{K}\sum_{k=1}^{K}\varepsilon_{1tk} -\sum_{W_{i}=0}\gamma^{\ast}_{i}\varepsilon_{itk}\right)^{2}} & \leq \frac{4\sigma\sqrt{1+\left\Vert \gamma^\ast\right\Vert _{2}^{2}}}{\sqrt{K}}+\delta \label{eq:imbalance_errors_avg}
\end{align}
with probability at least $1-2\exp\left(-\frac{T_0K\delta^2}{2\sigma^2(1+\left\Vert \gamma^\ast\right\Vert _{2}^{2})}\right)$.

Similarly, with probability at least $1-2\exp\left(-\frac{T_0\delta^2}{2\sigma^2(1+\left\Vert \gamma^\ast\right\Vert _{2}^{2})}\right)$, we have the following bounds for the separate imbalance achieved by the oracle weights $\gamma^{\ast}$  
\begin{align}
\sqrt{\frac{1}{T_{0}} \sum_{t=1}^{T_{0}}\left(\varepsilon_{1tj} -\sum_{W_{i}=0}\gamma^{\ast}_{i}\varepsilon_{itj}\right)^{2}} & \leq 4\sigma\sqrt{1+\left\Vert \gamma^\ast\right\Vert _{2}^{2}}+\delta \label{eq:imbalance_errors_sep}
\end{align}

\end{lemma}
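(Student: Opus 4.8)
The plan is to recognize each quantity under the square root as $\|Z\|_2^2/n$ for a random vector $Z$ with $n$ mutually independent, mean-zero, sub-Gaussian coordinates, and then to invoke a single concentration-of-norm inequality. Fix the oracle weights $\gamma^\ast$, which are deterministic under Assumption~\ref{assumption:noiseless}. For \eqref{eq:imbalance_errors_cat} take $Z_{tk} = \varepsilon_{1tk} - \sum_{W_i=0}\gamma^\ast_i\varepsilon_{itk}$ over $t=1,\ldots,T_0$ and $k=1,\ldots,K$; for \eqref{eq:imbalance_errors_avg} take $Z_t = \bar\varepsilon_{1t\cdot} - \sum_{W_i=0}\gamma^\ast_i\bar\varepsilon_{it\cdot}$ over $t=1,\ldots,T_0$, where $\bar\varepsilon_{it\cdot} = \frac{1}{K}\sum_{k}\varepsilon_{itk}$; and for \eqref{eq:imbalance_errors_sep} take $Z_t = \varepsilon_{1tj} - \sum_{W_i=0}\gamma^\ast_i\varepsilon_{itj}$ over $t=1,\ldots,T_0$. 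In each case the left-hand side equals $\|Z\|_2/\sqrt n$ with $n = T_0K$, $n=T_0$, and $n=T_0$ respectively.

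Two routine facts identify the relevant scale. First, a linear combination $\sum_j a_j X_j$ of independent mean-zero sub-Gaussians with common scale $\sigma$ is sub-Gaussian with scale $\sigma\|a\|_2$; hence, writing $\nu = \sigma\sqrt{1+\|\gamma^\ast\|_2^2}$, each coordinate of $Z$ is sub-Gaussian with scale $\rho$, where $\rho = \nu$ in the concatenated and separate cases (the coefficient vector is $(1,-\gamma^\ast_2,\ldots,-\gamma^\ast_N)$), while in the averaged case every coefficient carries an extra factor $1/K$, so the squared coefficient norm is $(1+\|\gamma^\ast\|_2^2)/K$ and $\rho = \nu/\sqrt K$. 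Second, distinct coordinates of $Z$ are functions of disjoint collections of the $\varepsilon_{itk}$ (disjoint in $t$, and also in $k$ for the concatenated case), which are independent by Assumptions~\ref{assumption:noiseless} and~\ref{assumption:SG}; so the coordinates of $Z$ are themselves independent.

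It then remains to establish the key inequality: if $Z$ has $n$ independent mean-zero sub-Gaussian coordinates with common scale $\rho$, then $\|Z\|_2/\sqrt n \le 4\rho + \delta$ with probability at least $1 - 2\exp(-n\delta^2/(2\rho^2))$. Feeding in the three pairs $(n,\rho)$ above gives \eqref{eq:imbalance_errors_cat}--\eqref{eq:imbalance_errors_sep} with the stated probabilities, since $n/\rho^2 = T_0K/\nu^2$ in the concatenated and averaged cases and $n/\rho^2 = T_0/\nu^2$ in the separate case. I would prove this inequality by a covering argument: take a $\tfrac{1}{2}$-net $\mathcal N$ of the unit sphere in $\R^n$ with $|\mathcal N| \le 5^n$, use $\|Z\|_2 \le 2\max_{u\in\mathcal N}\langle u,Z\rangle$, bound each $\langle u,Z\rangle$ via the sub-Gaussian tail with scale $\rho\|u\|_2 = \rho$, and union-bound over $\mathcal N$, absorbing the entropy $n\log 5$ into the deterministic part of the bound (this is where the leading constant near $4$ comes from, and it works precisely because $\log 5 < 2$); an alternative is Bernstein-type concentration of $\|Z\|_2^2 = \sum_i Z_i^2$ about $\E\|Z\|_2^2 \le c\,n\rho^2$ followed by a square root. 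This concentration-of-norm step is the only substantive part of the argument; the delicate point is routing the constants so that the leading term is a clean multiple of $\rho$ while the deviation $\delta$ enters at the Gaussian rate $\exp(-n\delta^2/(2\rho^2))$ rather than a slower sub-exponential one. Everything else is bookkeeping.
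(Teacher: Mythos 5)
Your proposal is correct and follows essentially the same route as the paper: identify each left-hand side as $\frac{1}{\sqrt{n}}\|Z\|_2$ for a vector of $n$ independent mean-zero sub-Gaussian coordinates (with $n=T_0K$, $T_0$, $T_0$ and scales $\nu$, $\nu/\sqrt{K}$, $\nu$ respectively, where $\nu=\sigma\sqrt{1+\|\gamma^\ast\|_2^2}$), then apply the discretization/covering bound on the norm of a sub-Gaussian vector (the paper cites exactly this argument from Wainwright, Ch.~5) and absorb the $n\log 5$ entropy term into the constant $4$. The constant bookkeeping and the rescaling of $\delta$ to obtain the stated tail probabilities match the paper's proof as well.
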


\begin{proof}[Proof of Lemma~\ref{claim:better_qavg_errors}]

For the bound in~\eqref{eq:imbalance_errors_cat}, note that $\varepsilon_{1tk}-\sum_{W_{i}=0}\gamma_{i}^{\ast}\varepsilon_{1ik}$
is independent
across $t$ and $k$, and sub-Gaussian with scale parameter  $\sigma\sqrt{1+\left\Vert \gamma^\ast\right\Vert _{2}^{2}}$. 
Via a discretization argument from \cite{wainright2018high}[Ch.5], we can bound the LHS of ~\eqref{eq:imbalance_errors_cat}, a scaled  $L^{2}$ norm of a $(T_{0}K)\times1$ sub-Gaussian vector.  With probability at least $1-2\exp\left(-\frac{\delta^{2}}{2\sigma^2(1+\left\Vert \gamma^\ast\right\Vert _{2}^{2})}\right)$,
we have 
\begin{align*}
\sqrt{\frac{1}{T_{0}}\frac{1}{K}\sum_{k=1}^{K}\sum_{t=1}^{T_{0}}\left(\varepsilon_{1tk} -\sum_{W_{i}=0}\gamma^{\ast}_{i}\varepsilon_{itk}\right)^{2}} & \leq\frac{1}{\sqrt{T_{0}K}}\left(2\sigma\sqrt{1+\left\Vert \gamma^\ast\right\Vert _{2}^{2}}\sqrt{\log2+T_{0}K\log5}+\delta\right)\\
 & \leq 4\sigma\sqrt{1+\left\Vert \gamma^\ast\right\Vert _{2}^{2}}+\frac{1}{\sqrt{T_{0}K}}\delta
\end{align*} 
where we use the inequality $\log2+N\log5\leq 4N$ for positive $N$. 

For the bound in~\eqref{eq:imbalance_errors_avg}, note that  each $\bar{\varepsilon}_{1t}-\sum_{W_{i}=0}\gamma_{i}^{\ast}\bar{\varepsilon}_{1i}$
is independent
across $t$, and sub-Gaussian with scale parameter $\sigma/\sqrt{K}\sqrt{1+\left\Vert \gamma^\ast\right\Vert _{2}^{2}}$. we can similarly bound the LHS of ~\eqref{eq:imbalance_errors_avg}, a scaled  $L^{2}$ norm of a $(T_{0})\times1$ sub-Gaussian vector. With probability at least $1-2\exp\left(-\frac{K\delta^{2}}{2\sigma^2(1+\left\Vert \gamma^\ast\right\Vert _{2}^{2})}\right)$,
\begin{align*}
\sqrt{\frac{1}{T_{0}}\sum_{t=1}^{T_{0}}\left(\frac{1}{K}\sum_{k=1}^{K}\varepsilon_{1tk} -\sum_{W_{i}=0}\gamma^{\ast}_{i}\varepsilon_{itk}\right)^{2}} & \leq\frac{1}{\sqrt{T_{0}}}\left(2\frac{\sigma\sqrt{1+\left\Vert \gamma^\ast\right\Vert _{2}^{2}}}{\sqrt{K}}\sqrt{\log2+T_{0}\log5}+\delta\right)\\
 & \leq \frac{4\sigma\sqrt{1+\left\Vert \gamma^\ast\right\Vert _{2}^{2}}}{\sqrt{K}}+\frac{1}{\sqrt{T_{0}}}\delta
\end{align*}

Setting $\delta=\delta\sqrt{T_0K}$ for the tail bound of ~\eqref{eq:imbalance_errors_cat}, and $\delta=\delta\sqrt{T_0}$ for the tail bound of  ~\eqref{eq:imbalance_errors_avg}, we have the claimed result.

Finally for~\eqref{eq:imbalance_errors_sep}, we have a scaled  $L^{2}$ norm of a $(T_{0})\times1$ sub-Gaussian vector, each with a scale parameter  $\sigma\sqrt{1+\left\Vert \gamma^\ast\right\Vert _{2}^{2}}$.  Following a similar argument as above, we have with probability at least $1-2\exp\left(-\frac{\delta^{2}}{2\sigma^2(1+\left\Vert \gamma^\ast\right\Vert _{2}^{2})}\right)$,
we have 
\begin{align*}
\sqrt{\frac{1}{T_{0}} \sum_{t=1}^{T_{0}}\left(\varepsilon_{1tj} -\sum_{W_{i}=0}\gamma^{\ast}_{i}\varepsilon_{itj}\right)^{2}} & \leq\frac{1}{\sqrt{T_{0}}}\left(2\sigma\sqrt{1+\left\Vert \gamma^\ast\right\Vert _{2}^{2}}\sqrt{\log2+T_{0}\log5}+\delta\right)\\
 & \leq 4\sigma\sqrt{1+\left\Vert \gamma^\ast\right\Vert _{2}^{2}}+\frac{1}{\sqrt{T_{0}}}\delta
\end{align*} 
Setting $\delta=\delta\sqrt{T_0}$ for the tail bound of ~\eqref{eq:imbalance_errors_sep}, we have the claimed result.

\end{proof}

\begin{lemma}[Oracle imbalance]\label{claim:better_qavg}
Suppose Assumptions~\ref{assumption:noiseless},~\ref{assumption:oracle}  and~\ref{assumption:SG} hold. For any $\delta>0$, we have the following bounds for the imbalance achieved by the oracle weights $\gamma^{\ast}$:  \begin{enumerate}[label = (\roman*)]

    \item  if analyzing the separate imbalance 
\begin{align}
q^{sep}_k(\gamma^{\ast}) & \leq 4\sigma\sqrt{1+\left\Vert \gamma^\ast\right\Vert _{2}^{2}}+2\delta \label{eq:oracle_qsep}
\end{align}
with probability at least $1-4\exp\left(-\frac{T_0\delta^2}{2\sigma^2(1+\left\Vert \gamma^\ast\right\Vert _{2}^{2})}\right)$.
\item if analyzing the concatenated imbalance 
\begin{align}
q^{cat}(\gamma^{\ast}) & \leq 4\sigma\sqrt{1+\left\Vert \gamma^\ast\right\Vert _{2}^{2}}+2\delta +\frac{4\sigma\sqrt{1+\left\Vert \gamma^\ast\right\Vert _{2}^{2}}}{\sqrt{T_0}}\label{eq:oracle_qcat}
\end{align}
 with probability at least $1-4\exp\left(-\frac{T_0K\delta^2}{2\sigma^2(1+\left\Vert \gamma^\ast\right\Vert _{2}^{2})}\right)$.

\item if analyzing the average imbalance
\begin{align}
q^{avg}(\gamma^{\ast}) & \leq \frac{4\sigma\sqrt{1+\left\Vert \gamma^\ast\right\Vert _{2}^{2}}}{\sqrt{K}}+2\delta \label{eq:oracle_qavg}
\end{align}
 with probability at least $1-4\exp\left(-\frac{T_0K\delta^2}{2\sigma^2(1+\left\Vert \gamma^\ast\right\Vert _{2}^{2})}\right)$.
\end{enumerate}

\end{lemma}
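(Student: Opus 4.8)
The plan is to reduce everything to the two preceding lemmas. Lemma~\ref{claim:L2_demeaned_errors} already writes each oracle imbalance $q(\gamma^\ast)$ --- for the separate, concatenated, and averaged objectives --- as a sum of two nonnegative pieces: (i) a scaled $L^2$ norm of the \emph{raw} oracle residuals $\varepsilon_{1tk}-\sum_{W_i=0}\gamma^\ast_i\varepsilon_{itk}$, and (ii) a scaled norm of those same residuals after averaging over the $T_0$ pre-treatment periods. Piece (i) is exactly what Lemma~\ref{claim:better_qavg_errors} controls: for the separate and concatenated objectives it is the $L^2$ norm of a $T_0$-dimensional (resp.\ $T_0K$-dimensional) vector of independent sub-Gaussian coordinates of scale $\sigma\sqrt{1+\|\gamma^\ast\|_2^2}$, hence at most $4\sigma\sqrt{1+\|\gamma^\ast\|_2^2}+\delta$ on the stated event; for the averaged objective the coordinates have the shrunken scale $\sigma\sqrt{1+\|\gamma^\ast\|_2^2}/\sqrt{K}$, which is where the $1/\sqrt{K}$ gain in \eqref{eq:oracle_qavg} originates. (An even cleaner route for piece (i) is to note that time-demeaning is an orthogonal projection, hence non-expansive, so $q(\gamma^\ast)$ is dominated directly by the corresponding scaled norm of the \emph{raw} residual vector, to which Lemma~\ref{claim:better_qavg_errors} applies; the mildly looser triangle-inequality split via Lemma~\ref{claim:L2_demeaned_errors} is what the stated bounds --- in particular the explicit $1/\sqrt{T_0}$ term in \eqref{eq:oracle_qcat} and the $2\delta$ slack --- are calibrated to.)

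So the only genuinely new work is piece (ii). Here I would observe that the time-averaged oracle residual $\bar\varepsilon_{1\cdot k}-\sum_{W_i=0}\gamma^\ast_i\bar\varepsilon_{i\cdot k}=\frac{1}{T_0}\sum_{t=1}^{T_0}\bigl(\varepsilon_{1tk}-\sum_{W_i=0}\gamma^\ast_i\varepsilon_{itk}\bigr)$ is, by independence across $t$, mean-zero sub-Gaussian with scale $\sigma\sqrt{1+\|\gamma^\ast\|_2^2}/\sqrt{T_0}$. For the separate objective this is a single scalar, so a scalar sub-Gaussian tail bound gives $|\cdot|\le\delta$ with probability at least $1-2\exp\bigl(-T_0\delta^2/(2\sigma^2(1+\|\gamma^\ast\|_2^2))\bigr)$. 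For the averaged objective the relevant quantity is additionally averaged over the $K$ outcomes, so its scale is $\sigma\sqrt{1+\|\gamma^\ast\|_2^2}/\sqrt{T_0K}$ and it is absorbed into the $\delta$-slack. For the concatenated objective, piece (ii) is the scaled $L^2$ norm of a $K$-vector whose coordinates are independent sub-Gaussians of scale $\sigma\sqrt{1+\|\gamma^\ast\|_2^2}/\sqrt{T_0}$; re-running the same covering-number/discretization argument used in Lemma~\ref{claim:better_qavg_errors} bounds it by $4\sigma\sqrt{1+\|\gamma^\ast\|_2^2}/\sqrt{T_0}+\delta$ with probability at least $1-2\exp\bigl(-T_0K\delta^2/(2\sigma^2(1+\|\gamma^\ast\|_2^2))\bigr)$, which is precisely the extra term appearing in \eqref{eq:oracle_qcat}.

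Finally I would assemble the three claims: add the piece-(i) and piece-(ii) estimates, take a union bound over their two high-probability events (this is why $2\exp$ becomes $4\exp$), and choose the free tail parameters so the two $\delta$-slacks combine to the stated $2\delta$. I expect the only real obstacle to be constant and probability bookkeeping in this last step: the elementary $(a-b)^2\le 2a^2+2b^2$ split in Lemma~\ref{claim:L2_demeaned_errors} introduces $\sqrt{2}$ factors that must be absorbed into the round constant $4$ (for $T_0$ not too small the covering-number constant is comfortably below $4$, so $\sqrt{2}$ times the sharper constant still fits), while the tail exponents have to be lined up so the union bound reads cleanly. None of the probabilistic content goes beyond the sub-Gaussian concentration already in Lemma~\ref{claim:better_qavg_errors}, so there is no conceptual hurdle --- only careful accounting.
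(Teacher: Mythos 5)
Your proposal is correct and follows essentially the same route as the paper's proof: decompose via Lemma~\ref{claim:L2_demeaned_errors}, control the raw-error piece with Lemma~\ref{claim:better_qavg_errors}, handle the time-averaged piece with a scalar sub-Gaussian tail bound (separate and average cases) or a $K$-dimensional discretization bound (concatenated case), and union-bound the two events to obtain the $4\exp(\cdot)$ tail and the $2\delta$ slack. Your explicit accounting for the $\sqrt{2}$ factors introduced by the $(a-b)^2\le 2a^2+2b^2$ split is in fact more careful than the paper, which silently absorbs them into the round constant $4$.
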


\begin{proof}[Proof of Lemma~\ref{claim:better_qavg}]
First we apply Lemma~\ref{claim:L2_demeaned_errors} to derive a general upper bound.

For $q^{sep}_k(\gamma^{\ast})$, note that  each $\bar{\varepsilon}_{1\cdot k}-\sum_{W_{i}=0}\gamma_{i}^{\ast}\bar{\varepsilon}_{i\cdot k}$
is independent
across $k$, and sub-Gaussian with scale parameter $\frac{\sigma}{\sqrt{T_0}}\sqrt{1+\left\Vert \gamma^\ast\right\Vert _{2}^{2}}$.  Setting $\delta = \delta \slash \left(\frac{\sigma}{\sqrt{T_0}}\sqrt{1+\left\Vert \gamma^\ast\right\Vert _{2}^{2}}\right)$ in Lemma~\ref{lemma:post-treat}, we have that $\left|\bar{\varepsilon}_{1\cdot k} -\sum_{W_{i}=0}\gamma^{\ast}_{i}\bar{\varepsilon}_{i\cdot k} \right|$ is upper bounded by $\delta$ with probability at least $1-2\exp\left(-\frac{\delta^{2}T_0}{2\sigma^2(1+\left\Vert \gamma^*\right\Vert ^{2}_{2} )}\right)$.
 Applying the union bound, together with the bound in~\eqref{eq:imbalance_errors_sep} of Lemma~\ref{claim:better_qavg_errors}, we have the claimed  bound in~\eqref{eq:oracle_qsep}.
 
For $q^{cat}(\gamma^{cat})$, note that  each $\bar{\varepsilon}_{1\cdot k}-\sum_{W_{i}=0}\gamma_{i}^{\ast}\bar{\varepsilon}_{i\cdot k}$
is independent
across $k$, and sub-Gaussian with scale parameter $\frac{\sigma}{\sqrt{T_0}}\sqrt{1+\left\Vert \gamma^\ast\right\Vert _{2}^{2}}$. Using similar argument for the bound in ~\eqref{eq:imbalance_errors_sep} of Lemma~\ref{claim:better_qavg_errors}, we can bound the following scaled  $L^{2}$ norm of a $K\times1$ sub-Gaussian vector with probability at least $1-2\exp\left(-\frac{T_0K\delta^{2}}{2\sigma^2(1+\left\Vert \gamma^\ast\right\Vert _{2}^{2})}\right)$,
\begin{align*}
\sqrt{\frac{1}{K}\sum_{k=1}^{K}\left(\bar \varepsilon_{1\cdot k} -\sum_{W_{i}=0}\gamma^{\ast}_{i}\bar \varepsilon_{i\cdot k}\right)^{2}} & \leq \frac{4\sigma\sqrt{1+\left\Vert \gamma^\ast\right\Vert _{2}^{2}}}{\sqrt{T_{0}}}+\delta
\end{align*}
Applying the union bound, together with the bound in ~\eqref{eq:imbalance_errors_cat}, we have the claimed  bound in~\eqref{eq:oracle_qcat}.

For $q^{avg}(\gamma^{\ast})$, note that ${\frac{1}{K}\sum_{k=1}^K \bar \varepsilon}_{1\cdot k} -\sum_{W_{i}=0}\gamma^{\ast}_{i}\bar{\varepsilon}_{i\cdot k} $
is sub-Gaussian with scale parameter $\frac{\sigma}{\sqrt{KT_0}}\sqrt{1+\left\Vert \gamma^\ast\right\Vert _{2}^{2}}$.  Setting $\delta = \delta \slash \left(\frac{\sigma}{\sqrt{KT_0}}\sqrt{1+\left\Vert \gamma^\ast\right\Vert _{2}^{2}}\right)$ in 
 Lemma~\ref{lemma:post-treat}, we have that $\left|{\frac{1}{K}\sum_{k=1}^K \bar \varepsilon}_{1\cdot k} -\sum_{W_{i}=0}\gamma^{\ast}_{i}\bar{\varepsilon}_{i\cdot k} \right|$ is upper bounded by $\delta$ with probability at least $1-2\exp\left(-\frac{\delta^{2}KT_0}{2\sigma^2(1+\left\Vert \gamma^*\right\Vert ^{2}_{2})}\right)$. 
 Applying the union bound, together with the bound in ~\eqref{eq:imbalance_errors_avg} of Lemma~\ref{claim:better_qavg_errors}, we have the claimed  bound in~\eqref{eq:oracle_qavg}.

\end{proof}
\subsection{Error bounds for the approximation errors}

\begin{lemma}[Lemma A.4. of \cite{benmichael2021_ascm}]
\label{lem:approx_bdd}
 If $\xi_{i}$ are mean-zero sub-Gaussian random variables
with scale parameter $\bar{\omega}$, then for weights $\hat{\gamma}$
and any $\delta>0$, with probability at least $1-4\exp\left(-\frac{\delta^{2}}{2}\right)$,
we have 
\[
\left|\xi_{1}-\sum_{W_{i}=0}\hat{\gamma}_{i}\xi_{i}\right|\leq\delta\bar{\omega}+2\left\Vert \hat{\gamma}\right\Vert _{1}\bar{\omega}\left(\sqrt{\log2N_{0}}+\frac{\delta}{2}\right)=\bar{\omega}\left(2\left\Vert \hat{\gamma}\right\Vert _{1}\sqrt{\log2N_{0}}+\delta(1+\left\Vert \hat{\gamma}\right\Vert _{1})\right).
\]
\end{lemma}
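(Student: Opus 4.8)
The plan is to avoid any concentration argument for $\sum_{W_i=0}\hat\gamma_i\xi_i$ as a single object, since $\hat\gamma$ may depend on the data and this sum need not be sub-Gaussian with a predictable scale. Instead, the first step is a distribution-free split via the triangle inequality and Hölder's inequality,
\[
\left|\xi_1 - \sum_{W_i=0}\hat\gamma_i\xi_i\right| \;\le\; |\xi_1| \;+\; \|\hat\gamma\|_1\,\max_{W_i=0}|\xi_i|,
\]
which holds pointwise for every realization and every choice of $\hat\gamma$. It then suffices to control the two scalar quantities on the right, neither of which depends on $\hat\gamma$.

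Second, I would bound $|\xi_1|$ directly from the sub-Gaussian tail: $P(|\xi_1| \ge \delta\bar\omega) \le 2\exp(-\delta^2/2)$. Third, bound $\max_{W_i=0}|\xi_i|$ by a union bound over the $N_0$ control units (equivalently over the $2N_0$ one-sided tail events), giving $P(\max_{W_i=0}|\xi_i| \ge t) \le 2N_0\exp(-t^2/(2\bar\omega^2))$; choosing $t$ so that the right-hand side equals $2\exp(-\delta^2/2)$ yields $t = \bar\omega\sqrt{2\log N_0 + \delta^2}$, and the elementary inequalities $\sqrt{a+b}\le\sqrt a+\sqrt b$, $\sqrt 2 \le 2$, and $\log N_0 \le \log 2N_0$ then give the tidier form $\max_{W_i=0}|\xi_i| \le \bar\omega(2\sqrt{\log 2N_0}+\delta) = 2\bar\omega(\sqrt{\log 2N_0}+\delta/2)$. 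A final union bound over the two bad events (total probability at most $4\exp(-\delta^2/2)$), substituted into the pointwise inequality, gives
\[
\left|\xi_1 - \sum_{W_i=0}\hat\gamma_i\xi_i\right| \;\le\; \delta\bar\omega + 2\|\hat\gamma\|_1\bar\omega\left(\sqrt{\log 2N_0}+\frac{\delta}{2}\right),
\]
and the stated equality is a rearrangement.

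The only real obstacle is the data-dependence of the weights, and the $\ell_1$--$\ell_\infty$ split handles it at the cost of sharpness: the resulting bound scales like $\|\hat\gamma\|_1\sqrt{\log N_0}$ rather than the $\|\hat\gamma\|_2$ scaling one would obtain by treating $\sum_{W_i=0}\hat\gamma_i\xi_i$ as a single sub-Gaussian variable, which is only legitimate when $\hat\gamma$ is independent of $(\xi_i)_{W_i=0}$. The remaining ingredients — the sub-Gaussian tail bound, the maximal-inequality union bound, and the juggling of constants to land exactly on the stated form — are all routine.
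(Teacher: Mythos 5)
Your proof is correct, and it is essentially the canonical argument: the paper itself gives no proof of this lemma, importing it verbatim from the cited reference, where the bound is obtained by exactly the same $\ell_1$--$\ell_\infty$ H\"older split to neutralize the data-dependence of $\hat{\gamma}$, followed by a sub-Gaussian tail bound for $\xi_1$ and a union-bound maximal inequality for $\max_{W_i=0}|\xi_i|$. Your constant-tracking ($t=\bar{\omega}\sqrt{2\log N_0+\delta^2}\leq\bar{\omega}(2\sqrt{\log 2N_0}+\delta)$ and the $4\exp(-\delta^2/2)$ failure probability from the two bad events) lands precisely on the stated form, so there is nothing to add.
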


\subsection{Error bounds for the post-treatment noise}
\label{sec:noise_bound}
\begin{lemma}\label{lemma:post-treat}
For weights independent of $\varepsilon_{iTj}$, under Assumption~\ref{assumption:noiseless} and~\ref{assumption:SG}, for any $\delta>0$ with probability at least $1-2\exp\left(-\frac{\delta^{2}}{2}\right)$,
we have 
\[
\left|\varepsilon_{1Tj}-\sum_{W_{i}=0}\hat{\gamma}_{i}\varepsilon_{iTj}\right|\leq\delta\sigma(1+\left\Vert \hat{\gamma}\right\Vert _{2}).
\]
\end{lemma}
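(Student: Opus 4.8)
The plan is to condition on the estimated weights $\hat\gamma$ and treat the period-$T$ idiosyncratic errors $\{\varepsilon_{iTj}\}_{i=1}^N$ as the only source of randomness. By hypothesis $\hat\gamma$ is independent of these errors, and under Assumption~\ref{assumption:noiseless} the $\varepsilon_{iTj}$ are mean zero and independent across units, while Assumption~\ref{assumption:SG} makes each one sub-Gaussian with scale parameter $\sigma$. Hence, conditionally on $\hat\gamma$, the quantity $Z \equiv \varepsilon_{1Tj}-\sum_{W_i=0}\hat\gamma_i\varepsilon_{iTj}$ is a fixed linear combination $\sum_i a_i \varepsilon_{iTj}$ of independent mean-zero sub-Gaussians, with coefficient vector $a=(1,-\hat\gamma_2,\ldots,-\hat\gamma_N)$ (the treated unit enters with coefficient $1$).

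The key step is the standard fact that a linear combination of independent mean-zero sub-Gaussian random variables, each with scale $\sigma$, is itself mean-zero sub-Gaussian with scale $\sigma\|a\|_2$; here $\|a\|_2 = \sqrt{1+\|\hat\gamma\|_2^2}$. Applying the sub-Gaussian tail bound gives, for any $u>0$,
\[
P\!\left(|Z|\ge u \,\middle|\, \hat\gamma\right) \;\le\; 2\exp\!\left(-\frac{u^2}{2\sigma^2\bigl(1+\|\hat\gamma\|_2^2\bigr)}\right).
\]
Choosing $u = \delta\sigma\sqrt{1+\|\hat\gamma\|_2^2}$ makes the right-hand side equal to $2\exp(-\delta^2/2)$, and the elementary inequality $\sqrt{1+\|\hat\gamma\|_2^2}\le 1+\|\hat\gamma\|_2$ (which follows from $(1+\|\hat\gamma\|_2)^2 = 1+2\|\hat\gamma\|_2+\|\hat\gamma\|_2^2 \ge 1+\|\hat\gamma\|_2^2$) upgrades the threshold to $\delta\sigma(1+\|\hat\gamma\|_2)$, only weakening the event. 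Since the resulting probability bound $2\exp(-\delta^2/2)$ does not depend on the realized value of $\hat\gamma$, taking expectation over $\hat\gamma$ removes the conditioning and yields the unconditional statement.

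There is no real obstacle here: the argument is a one-line sub-Gaussian concentration bound once the conditioning is set up. The only points requiring minor care are (i) remembering that the treated unit's own error $\varepsilon_{1Tj}$ contributes a coefficient of $1$, so the coefficient vector has squared norm $1+\|\hat\gamma\|_2^2$ rather than $\|\hat\gamma\|_2^2$, and (ii) the de-conditioning step, which is valid precisely because the tail bound is uniform in $\hat\gamma$ and $\hat\gamma$ is independent of the period-$T$ noise.
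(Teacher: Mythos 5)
Your proposal is correct and follows essentially the same route as the paper: both treat $\varepsilon_{1Tj}-\sum_{W_i=0}\hat{\gamma}_i\varepsilon_{iTj}$ as a linear combination of independent mean-zero sub-Gaussians (valid because $\hat{\gamma}$ is independent of the period-$T$ errors), conclude it is sub-Gaussian with scale $\sigma\sqrt{1+\|\hat{\gamma}\|_2^2}\leq\sigma(1+\|\hat{\gamma}\|_2)$, and apply the Hoeffding-type tail bound. Your write-up is simply more explicit about the conditioning and de-conditioning steps, which the paper leaves implicit.
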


\begin{proof}[Proof of Lemma \ref{lemma:post-treat}]
Since the weights are independent of $\varepsilon_{iTj}$, by sub-Gaussianity
and independence of $\varepsilon_{iTj}$, we see that $\varepsilon_{1Tj}-\sum_{W_{i}=0}\hat{\gamma}_{i}\varepsilon_{iTj}$
is sub-Gaussian with scale parameter $\sigma\sqrt{1+\left\Vert \hat{\gamma}\right\Vert _{2}^{2}}\leq\sigma(1+\left\Vert \hat{\gamma}\right\Vert _{2})$. Applying the Hoeffding's inequality, we obtained the claimed bound.
\end{proof}

\begin{lemma}\label{lemma:post-treat-demeaned}
For weights $\hat{\gamma}$
and any $\delta>0$, with probability at least $1-6\exp\left(-\frac{\delta^{2}}{2}\right)$,
we have 
\begin{align*}
\left|\dot \varepsilon_{1Tj}-\sum_{W_{i}=0}\hat{\gamma}_{i}\dot \varepsilon_{iTj}\right| & \leq\delta\sigma(1+\left\Vert \hat{\gamma}\right\Vert _{2})+\delta\frac{\sigma}{\sqrt{T_0}}+2\left\Vert \hat{\gamma}\right\Vert _{1}\frac{\sigma}{\sqrt{T_0}}\left(\sqrt{\log2N_{0}}+\frac{\delta}{2}\right)\\
& \leq (1+C)\delta\sigma\left(1+\frac{1}{\sqrt{T_0}}\right) + \frac{\sigma}{\sqrt{T_0}}\left(2C\sqrt{\log2N_{0}}\right) 
\end{align*}
\end{lemma}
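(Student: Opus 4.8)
The plan is to split the de-meaned post-treatment error into a genuine period-$T$ term and a pre-treatment averaging term, and to bound each with the concentration lemma that fits its dependence structure. Writing $\dot\varepsilon_{iTj} = \varepsilon_{iTj} - \bar\varepsilon_{i\cdot j}$, we have
\[
\dot\varepsilon_{1Tj} - \sum_{W_i=0}\hat\gamma_i\dot\varepsilon_{iTj} = \underbrace{\left(\varepsilon_{1Tj} - \sum_{W_i=0}\hat\gamma_i\varepsilon_{iTj}\right)}_{(A)} - \underbrace{\left(\bar\varepsilon_{1\cdot j} - \sum_{W_i=0}\hat\gamma_i\bar\varepsilon_{i\cdot j}\right)}_{(B)},
\]
so that the triangle inequality reduces the task to bounding $|(A)|$ and $|(B)|$ separately and then taking a union bound.

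For term $(A)$: because the idiosyncratic errors are independent across time (Assumption~\ref{assumption:SG}) and $\hat\gamma$ is a function of pre-treatment data only, $\hat\gamma$ is independent of $\{\varepsilon_{iTj}\}_i$, so Lemma~\ref{lemma:post-treat} applies verbatim and yields, with probability at least $1-2\exp(-\delta^2/2)$, the bound $|(A)| \leq \delta\sigma(1+\|\hat\gamma\|_2)$. For term $(B)$: here the weights are \emph{not} independent of the quantities being averaged, since $\bar\varepsilon_{i\cdot j}$ is constructed from the same pre-treatment errors that determine $\hat\gamma$. This is precisely the situation handled by Lemma~\ref{lem:approx_bdd}: each $\bar\varepsilon_{i\cdot j}$ is a mean-zero average of $T_0$ independent sub-Gaussians of scale $\sigma$, hence itself sub-Gaussian with scale $\bar\omega = \sigma/\sqrt{T_0}$. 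Applying that lemma with $\xi_i = \bar\varepsilon_{i\cdot j}$ gives, with probability at least $1-4\exp(-\delta^2/2)$,
\[
|(B)| \leq \delta\frac{\sigma}{\sqrt{T_0}} + 2\|\hat\gamma\|_1\frac{\sigma}{\sqrt{T_0}}\left(\sqrt{\log2N_0} + \frac{\delta}{2}\right).
\]
Combining the two bounds (failure probability at most $2\exp(-\delta^2/2)+4\exp(-\delta^2/2) = 6\exp(-\delta^2/2)$) produces the first displayed inequality of the lemma.

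The second displayed inequality is then bookkeeping: substitute $\|\hat\gamma\|_1 \leq C$ (since by construction $\hat\gamma \in \mathcal{C}$) and $\|\hat\gamma\|_2 \leq \|\hat\gamma\|_1 \leq C$, collect the three terms proportional to $\delta$ into $\delta\sigma(1+C) + (1+C)\delta\sigma/\sqrt{T_0} = (1+C)\delta\sigma(1+1/\sqrt{T_0})$, and leave the remaining term $2C(\sigma/\sqrt{T_0})\sqrt{\log2N_0}$ unchanged. I do not expect a genuine obstacle here; the only points requiring care are recognizing that the period-$T$ error term and the pre-treatment averaging term call for \emph{different} lemmas — the sharper independence-based bound of Lemma~\ref{lemma:post-treat} for the former, and the donor-union-bound argument of Lemma~\ref{lem:approx_bdd} for the latter — and correctly propagating the $1/\sqrt{T_0}$ reduction in the sub-Gaussian scale for the time-averaged errors.
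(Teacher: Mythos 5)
Your proposal is correct and follows essentially the same route as the paper: the identical decomposition $\dot\varepsilon_{iTj}=\varepsilon_{iTj}-\bar\varepsilon_{i\cdot j}$, Lemma~\ref{lemma:post-treat} for the period-$T$ contrast (exploiting independence of $\hat\gamma$ from the post-treatment errors), Lemma~\ref{lem:approx_bdd} with scale $\sigma/\sqrt{T_0}$ for the correlated pre-treatment-average term, and a union bound giving failure probability $6\exp(-\delta^2/2)$. The final bookkeeping with $\|\hat\gamma\|_2\leq\|\hat\gamma\|_1\leq C$ also matches the paper's argument.
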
  

\begin{proof}[Proof of Lemma \ref{lemma:post-treat-demeaned}]
For the post-treatment noise, we have 
\begin{align*}
\left|\dot{\varepsilon}_{1Tj}-\sum_{W_{i}=0}\hat{\gamma}_{i}\dot{\varepsilon}_{iTj}\right| & =\left|\varepsilon_{1Tj}-\sum_{W_{i}=0}\hat{\gamma}_{i}\varepsilon_{iTj}+\sum_{W_{i}=0}\hat{\gamma}_{i}\bar{\varepsilon}_{i\cdot j}-\bar{\varepsilon}_{1\cdot j}\right|\\
 & \leq\left|\varepsilon_{1Tj}-\sum_{W_{i}=0}\hat{\gamma}_{i}\varepsilon_{iTj}\right|+\left|\sum_{W_{i}=0}\hat{\gamma}_{i}\bar{\varepsilon}_{i\cdot j}-\bar{\varepsilon}_{1\cdot j}\right|
\end{align*}
Lemma~\ref{lemma:post-treat} applies to the first term. However, for the second term, we note that $\bar{\varepsilon}_{i\cdot j}$ and $\hat{\gamma}_{i}$ are correlated, and Lemma~\ref{lem:approx_bdd} applies with a scale parameter of $\sigma/\sqrt{T_0}$. Applying  a union bound to the two terms, and note that $\left\Vert \hat{\gamma} \right\Vert _{2}\leq\left\Vert \hat{\gamma} \right\Vert _{1}=C$ by construction, we obtained the claimed bound.
\end{proof}
\section{Proofs of main results}\label{sec:proofs}
\begin{proof}[Proof for Proposition~\ref{claim:factor}]
For the system of linear  equations \eqref{eq:oracle} to have a solution, a necessary condition is that the matrix $\left(\begin{array}{cc}L &\mathbf{1}_N\end{array}\right)$ must have a reduced rank  less than $N$.  Furthermore, since all time effects are removed from $L$, the columns of $L$ are linearly independent from the one vector $\mathbf{1}_N$. Therefore, a necessary condition is for the rank of $L$ to be less than $N-1$. Since there exists a solution $\gamma^\ast$ to the system of linear  equations $L_{-1}'\gamma = L_1$, the Rouch{\'e}-Capelli theorem requires $rank(L_{-1})=rank(L).$
For the sufficiency, observe that appending the one vector to both $L_{-1}$ and $L$ increases their ranks by exactly one. Therefore, we still maintain $rank\left(\begin{array}{cc}L_{-1} &\mathbf{1}_{N-1}\end{array}\right)= rank\left(\begin{array}{cc}L &\mathbf{1}_N\end{array}\right)\leq N-1.$ The Rouch{\'e}-Capelli theorem then guarantees the existence of solutions to the system of linear  equations \eqref{eq:oracle}.
\end{proof}
\paragraph{Proof of Theorem  \ref{thm:error_bounds}.} The proof follows from Theorem~\ref{thm:sep_bdd}, ~\ref{thm:cat_bdd} and ~\ref{thm:avg_bdd}, separately proved below.

\begin{theorem}[Bound for separate weights]\label{thm:sep_bdd}
Suppose Assumptions~\ref{assumption:noiseless},~\ref{assumption:oracle},~\ref{assumption:SG} and ~\ref{assumption:singularity} hold. Then for any $\delta>0$, we have the following bound

\begin{align*}
  \left\vert Bias(\hat{\gamma_j}^{sep})\right\vert&\leq  \frac{r_jM^{2}}{\underbar{\ensuremath{\xi}}^{sep}}\left( (4\sigma(1+\|\gamma^\ast\|_2)+2\delta) +  \frac{\sigma\cdot(1+1/\sqrt{T_0})}{\sqrt{T_{0}}}\left( 2C\sqrt{\log2N_{0}} + (1+C)\delta\right) \right)
 \end{align*}
 with probability at least $1-8\exp\left(-\frac{\delta^{2}}{2}\right)-4\exp\left(-\frac{T_0\delta^2}{2\sigma^2(1+C^{2})}\right)$.
\end{theorem}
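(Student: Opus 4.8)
The plan is to bound $\text{Bias}(\hat\gamma_j^{sep})=L_{1Tj}-\sum_{W_i=0}\hat\gamma_i^{sep}L_{iTj}$ via the split $\text{Bias}=R_0-R_1$ of \eqref{eq:bias_r0}--\eqref{eq:bias_r1} and the triangle inequality $|\text{Bias}|\le|R_0|+|R_1|$, bounding the two pieces separately. The first step is to pin down the weights $\omega_{tj}$ and factor out the common prefactor $r_jM^2/\underbar{\ensuremath{\xi}}^{sep}$. Writing $L_{itj}=\phi_i\cdot\mu_{tj}$ and $\Delta\phi=\phi_1-\sum_{W_i=0}\hat\gamma_i^{sep}\phi_i$, the target is (essentially) $\mu_{Tj}\cdot\Delta\phi$, which I would recover from the pre-treatment periods through the Gram-matrix identity $\mu_{Tj}\cdot\Delta\phi=\sum_{t=1}^{T_0}\omega_{tj}(\mu_{tj}\cdot\Delta\phi)$ with $\omega_{tj}=\frac1{T_0}\mu_{Tj}'\big(\frac1{T_0}\sum_s\mu_{sj}\mu_{sj}'\big)^{-1}\mu_{tj}$ (only the $r_j$ factors active for outcome $j$ appear), up to a de-meaning remainder discussed at the end. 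Assumption~\ref{assumption:singularity}(i) bounds the inverted Gram matrix in operator norm by $1/\underbar{\ensuremath{\xi}}^{sep}$ and the factor bound $M$ gives $\|\mu_{tj}\|\le\sqrt{r_j}M$, so $\sum_t|\omega_{tj}|\le r_jM^2/\underbar{\ensuremath{\xi}}^{sep}$ and $\big(\sum_t\omega_{tj}^2\big)^{1/2}\le \frac1{\sqrt{T_0}}\sqrt{r_jM^2/\underbar{\ensuremath{\xi}}^{sep}}$; and since $\underbar{\ensuremath{\xi}}^{sep}\le M^2$ forces $r_jM^2/\underbar{\ensuremath{\xi}}^{sep}\ge1$, every $\ell_2$-type prefactor $(r_jM^2/\underbar{\ensuremath{\xi}}^{sep})^{1/2}$ can be replaced by $r_jM^2/\underbar{\ensuremath{\xi}}^{sep}$, giving a single uniform leading constant.

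For $R_0$ (bias from imperfect fit): since $\hat\gamma_j^{sep}$ minimizes $q_j^{sep}(\cdot)^2$ over $\mathcal C$ and $\gamma^\ast\in\mathcal C$ by Assumption~\ref{assumption:oracle}, $q_j^{sep}(\hat\gamma_j^{sep})\le q_j^{sep}(\gamma^\ast)$; because $\gamma^\ast$ balances the model components exactly, $q_j^{sep}(\gamma^\ast)$ is a pure oracle imbalance, bounded by \eqref{eq:oracle_qsep} of Lemma~\ref{claim:better_qavg} by $4\sigma\sqrt{1+\|\gamma^\ast\|_2^2}+2\delta\le 4\sigma(1+\|\gamma^\ast\|_2)+2\delta$ with probability at least $1-4\exp\big(-\tfrac{T_0\delta^2}{2\sigma^2(1+C^2)}\big)$ (using $\|\gamma^\ast\|_2\le C$). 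Cauchy--Schwarz then gives $|R_0|\le\big(\sum_t\omega_{tj}^2\big)^{1/2}\sqrt{T_0}\,q_j^{sep}(\hat\gamma_j^{sep})\le\big(r_jM^2/\underbar{\ensuremath{\xi}}^{sep}\big)^{1/2}q_j^{sep}(\gamma^\ast)\le\frac{r_jM^2}{\underbar{\ensuremath{\xi}}^{sep}}\big(4\sigma(1+\|\gamma^\ast\|_2)+2\delta\big)$, the first bracketed term.

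For $R_1$ (bias from overfitting): the difficulty is that $\hat\gamma_j^{sep}$ is correlated with the pre-treatment errors, so I would move the $\omega_{tj}$ inside, writing $\sum_t\omega_{tj}\dot\varepsilon_{itj}=\sum_t(\omega_{tj}-\bar\omega_{\cdot j})\varepsilon_{itj}=:\zeta_i$, which are mean zero, independent across $i$, and sub-Gaussian with scale $\sigma\|\omega_{\cdot j}\|_2$, so that $R_1=\zeta_1-\sum_{W_i=0}\hat\gamma_i^{sep}\zeta_i$ is exactly of the form handled by Lemma~\ref{lem:approx_bdd} (valid even when the weights depend on the $\zeta_i$). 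Applying Lemma~\ref{lem:approx_bdd} with $\|\hat\gamma_j^{sep}\|_1\le C$ and $\|\omega_{\cdot j}\|_2\le \sqrt{r_j}M/\sqrt{T_0\underbar{\ensuremath{\xi}}^{sep}}$, together with a second application to the de-meaning remainder (whose errors carry the smaller scale $\sigma/\sqrt{T_0}$), yields $|R_1|\le\frac{r_jM^2}{\underbar{\ensuremath{\xi}}^{sep}}\cdot\frac{\tilde\sigma}{\sqrt{T_0}}\big(2C\sqrt{\log2N_0}+(1+C)\delta\big)$, with $\tilde\sigma=(1+1/\sqrt{T_0})\sigma$ collecting the two scales; the two invocations of Lemma~\ref{lem:approx_bdd} each fail with probability $4\exp(-\delta^2/2)$, producing the $8\exp(-\delta^2/2)$ term. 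A union bound over the three failure events and summing the $R_0$ and $R_1$ bounds then gives the stated inequality.

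The step I expect to be the main obstacle is the de-meaning bookkeeping with finite $T_0$. The Gram identity is exact for the raw factors $\mu_{tj}$, but the objective $q_j^{sep}$ and the estimator act on period-demeaned outcomes, so $\sum_t\omega_{tj}(\dot\mu_{tj}\cdot\Delta\phi)$ differs from $\mu_{Tj}\cdot\Delta\phi$ by a term proportional to $\bar\mu_{\cdot j}\cdot\Delta\phi$; this residual (the finite-$T_0$, ``Nickell'', correction), together with the extra variability from de-meaning the errors --- whose pre-treatment average has scale $\sigma/\sqrt{T_0}$ --- must be controlled using Assumption~\ref{assumption:singularity}(i), and it is precisely what replaces $\sigma$ by $\tilde\sigma=(1+1/\sqrt{T_0})\sigma$ and fixes the probability exponents, all while avoiding any extraneous union bound over $t$. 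Once this is handled, the remaining steps --- oracle-imbalance control via Lemma~\ref{claim:better_qavg} and approximation-error control via Lemma~\ref{lem:approx_bdd} --- are routine.
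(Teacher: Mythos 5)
Your proposal is correct and follows essentially the same route as the paper's proof: the same projection weights $\omega_{tj}$ built from the Gram matrix of the outcome-$j$ factors, the same Cauchy--Schwarz plus optimality argument reducing $|R_0|$ to the oracle imbalance bounded by Lemma~\ref{claim:better_qavg}, and the same two applications of Lemma~\ref{lem:approx_bdd} (one for the raw errors, one for the de-meaning term at scale $\sigma/\sqrt{T_0}$) for $|R_1|$, combined by a union bound. The de-meaning bookkeeping you flag as the main obstacle is handled in the paper exactly as you anticipate, via the split $\xi_i - \bar\xi_i$ that produces the $\tilde\sigma = (1+1/\sqrt{T_0})\sigma$ factor and the $8\exp(-\delta^2/2)$ probability term.
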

 
\begin{proof}[Proof of Theorem \ref{thm:sep_bdd}] As discussed in the main text, denote the projected factor value by $\omega_{tj}=\mu_{Tj}\cdot\left(\sum_{t=1}^{T_{0}}\mu_{tj}\mu'_{tj}\right)^{-1} \mu_{tj} $,  we can decompose the bias into the following two terms:

$$
\dot{L}_{1Tj}(0)-\sum_{W_{i}=0}\hat{\gamma}^{sep}_{i} \dot{L}_{iTj} =  \sum_{t=1}^{T_{0}}\omega_{tj}(\dot{Y}_{1tj}-\sum_{W_{i}=0}\hat{\gamma}^{sep}_{i}\dot{Y}_{itj})  -\sum_{t=1}^{T_{0}}\omega_{tj}(\dot{\varepsilon}_{1tj}-\sum_{W_{i}=0}\hat{\gamma}^{sep}_{i}\dot{\varepsilon}_{itj}) $$
 Next we derive the upper bound for the absolute value of each term.By Assumption~\ref{assumption:singularity}, for all $t$ we have $ (\omega_{tj})^{2}\leq \left(\frac{r_jM^{2}}{\underbar{\ensuremath{\xi}}^{sep}T_{0}}\right)^2$. Next we derive the upper bound for the absolute value of each term.

 To bound the bias due to imbalance, we apply the Cauchy-Schwarz
inequality:
\begin{align*}
 (R_{0}^{sep}) = & \sum_{t=1}^{T_{0}}\omega_{tj}(\dot{Y}_{1tj}-\sum_{W_{i}=0}\hat{\gamma}^{sep}_{i}\dot{Y}_{itj})
\leq\sqrt{\sum_{t=1}^{T_{0}}\omega_{tj}^{2}}\sqrt{\sum_{t=1}^{T_{0}}(\dot{Y}_{1tj}-\sum_{W_{i}=0}\hat{\gamma}^{sep}_{i}\dot{Y}_{itj})^{2}}\\
= & \sqrt{T_{0}}\sqrt{\sum_{t=1}^{T_{0}}\omega_{tj}^{2}}\sqrt{\frac{1}{T_{0}}\sum_{t=1}^{T_{0}}(\dot{Y}_{1tj}-\sum_{W_{i}=0}\hat{\gamma}^{sep}_{i}\dot{Y}_{itj})^{2}}\\
\leq & \sqrt{T_{0}}\sqrt{T_{0}\cdot\left(\frac{r_jM^{2}}{\underbar{\ensuremath{\xi}}^{sep}T_{0}}\right)^{2}}q^{sep}(\hat{\gamma}^{sep})
= (\underbar{\ensuremath{\xi}})^{-1} r_jM^{2}q^{sep}(\hat{\gamma}^{sep})\leq(\underbar{\ensuremath{\xi}}^{sep})^{-1} r_jM^{2}q^{sep}(\gamma^{\ast}).
\end{align*}
 Lemma~\ref{claim:better_qavg} derives a  high-probability upper bound for $q^{sep}(\gamma^{\ast})$, which gives an upper bound for $|R_{0}^{sep}|$.

For $|R_{1}^{sep}|$, set $\xi_{i}=\sum_{t=1}^{T_{0}}\omega_{tj}\varepsilon_{itj}$ and $\bar{\xi}_{i}=\bar \varepsilon_{i\cdot j}\sum_{t=1}^{T_{0}}\omega_{tj}$. We therefore have the upper bound $$|R_{1}^{sep}|=\left|\xi_{1}-\sum_{W_{i}=0}\hat{\gamma}_{i}\xi_{i} - \bar \xi_{1} + \sum_{W_{i}=0}\hat{\gamma}_{i}\bar \xi_{i}\right|\leq \left|\xi_{1}-\sum_{W_{i}=0}\hat{\gamma}_{i}\xi_{i}\right| +  \left|\bar \xi_{1}-\sum_{W_{i}=0}\hat{\gamma}_{i}\bar \xi_{i}\right|$$Furthermore,   the weighted sum $\xi_{i}$ is sub-Gaussian with a
scale parameter $\frac{\sigma}{\sqrt{T_{0}}}\frac{r_jM^{2}}{\underbar{\ensuremath{\xi}}^{sep}}$, and $\bar{\xi}_{i}$ is sub-Gaussian with a
scale parameter $\frac{\sigma}{T_{0}}\frac{r_jM^{2}}{\underbar{\ensuremath{\xi}}^{sep}}$.  We apply Lemma~\ref{lem:approx_bdd} to both terms with the union bound.

Combining the  probabilities with the union bound gives the result with probability at least $1-8\exp\left(-\frac{\delta^{2}}{2}\right)-4\exp\left(-\frac{T_0\delta^2}{2\sigma^2(1+\left\Vert \gamma^\ast\right\Vert _{2}^{2})}\right)$, the bias is upper bounded by $$ \frac{r_jM^{2}}{\underbar{\ensuremath{\xi}}^{sep}}\left( 4\sigma\sqrt{1+\left\Vert \gamma^\ast\right\Vert _{2}^{2}}+2\delta +  \frac{\sigma\cdot(1+1/\sqrt{T_0})}{\sqrt{T_{0}}}\left( 2\left\Vert \hat{\gamma}^{sep}\right\Vert _{1}\sqrt{\log2N_{0}} + \delta(1+\left\Vert \hat{\gamma}^{sep}\right\Vert _{1})\right) \right).
 $$

 We then note that $\left\Vert \hat{\gamma}^{sep}\right\Vert _{1}=C$ by construction and
 $\sqrt{1+\left\Vert \gamma^\ast\right\Vert _{2}^{2}}\leq  1+\left\Vert \gamma^\ast\right\Vert  _{2}$.
 \end{proof}

\begin{theorem}[Bound for concatenated weights]\label{thm:cat_bdd}
Suppose Assumptions~\ref{assumption:noiseless},~\ref{assumption:oracle},  and~\ref{assumption:SG}  and ~\ref{assumption:singularity}. 
Then for any $\delta>0$, we have the following bound
\begin{align*}
 \left\vert Bias(\hat{\gamma}^{cat})\right\vert &\leq  \frac{rM^{2}}{\underbar{\ensuremath{\xi}}^{cat}}\left( (4\sigma(1+1/\sqrt{T_0})(1+\|\gamma^\ast\|_2)+2\delta ) +  \frac{\sigma\cdot(1+1/\sqrt{T_0})}{\sqrt{T_{0}K}}\left( 2C\sqrt{\log2N_{0}} + (1+C)\delta\right) \right)
 \end{align*}
 with probability at least $1-8\exp\left(-\frac{\delta^{2}}{2}\right)-4\exp\left(-\frac{T_0K\delta^2}{2\sigma^2(1+C^{2})}\right)$.
\end{theorem}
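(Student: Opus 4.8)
The plan is to follow the proof of Theorem~\ref{thm:sep_bdd}, but projecting onto all $T_0K$ pre-treatment outcome observations at once rather than onto a single outcome series. Fix a target outcome $k$. Apply the decomposition~\eqref{eq:bias_r0}--\eqref{eq:bias_r1} with the projected factor values for the concatenated estimator,
$$\omega_{tj}\equiv\mu_{Tk}\cdot\Big(\sum_{s=1}^{T_0}\sum_{l=1}^{K}\mu_{sl}\mu_{sl}'\Big)^{-1}\mu_{tj},$$
which are well defined by Assumption~\ref{assumption:singularity}(ii); since these satisfy $\sum_{t,j}\omega_{tj}\mu_{tj}=\mu_{Tk}$, they give the split $\text{Bias}(\hat\gamma^{\cat})=R_0^{\cat}-R_1^{\cat}$, and hence $|\text{Bias}(\hat\gamma^{\cat})|\le|R_0^{\cat}|+|R_1^{\cat}|$.

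First I would control $R_0^{\cat}$. Using $\|\mu_{tk}\|\le\sqrt{r}M$ and Assumption~\ref{assumption:singularity}(ii), $|\omega_{tj}|\le\frac{rM^2}{\underline{\xi}^{\cat}T_0K}$, so $\sum_{t,j}\omega_{tj}^2\le\frac{1}{T_0K}\big(\frac{rM^2}{\underline{\xi}^{\cat}}\big)^2$. By Cauchy--Schwarz and the definition of $q^{\cat}$, $|R_0^{\cat}|\le\sqrt{\sum_{t,j}\omega_{tj}^2}\cdot\sqrt{T_0K}\cdot q^{\cat}(\hat\gamma^{\cat})\le\frac{rM^2}{\underline{\xi}^{\cat}}q^{\cat}(\hat\gamma^{\cat})$, and since $\hat\gamma^{\cat}$ minimizes $q^{\cat}$ over $\mathcal C$ while $\gamma^\ast\in\mathcal C$ by Assumption~\ref{assumption:oracle}, this is at most $\frac{rM^2}{\underline{\xi}^{\cat}}q^{\cat}(\gamma^\ast)$. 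Plugging in the high-probability bound~\eqref{eq:oracle_qcat} of Lemma~\ref{claim:better_qavg} and using $\sqrt{1+\|\gamma^\ast\|_2^2}\le1+\|\gamma^\ast\|_2$ then gives the leading term $\frac{rM^2}{\underline{\xi}^{\cat}}\big(4\sigma(1+1/\sqrt{T_0})(1+\|\gamma^\ast\|_2)+2\delta\big)$; the extra $1/\sqrt{T_0}$ here is inherited directly from the pre-period demeaning correction already present in~\eqref{eq:oracle_qcat}.

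Second, for $R_1^{\cat}$ I would write $\dot\varepsilon_{itj}=\varepsilon_{itj}-\bar\varepsilon_{i\cdot j}$ and split $R_1^{\cat}=\big(\xi_1-\sum\hat\gamma^{\cat}_i\xi_i\big)-\big(\bar\xi_1-\sum\hat\gamma^{\cat}_i\bar\xi_i\big)$ with $\xi_i=\sum_{t,j}\omega_{tj}\varepsilon_{itj}$ and $\bar\xi_i=\sum_{t,j}\omega_{tj}\bar\varepsilon_{i\cdot j}$. By independence of $\varepsilon_{itj}$ across $t,j,i$ (Assumption~\ref{assumption:SG}), $\xi_i$ is sub-Gaussian with scale $\sigma\sqrt{\sum_{t,j}\omega_{tj}^2}\le\frac{\sigma}{\sqrt{T_0K}}\frac{rM^2}{\underline{\xi}^{\cat}}$, and $\bar\xi_i$ has a strictly smaller scale (smaller by a power of $T_0$, using the cancellation $\sum_{t=1}^{T_0}\mu_{tj}=-\mu_{Tj}$ that follows from $\sum_{t=1}^{T}L_{itk}=0$ with $T=T_0+1$). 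Since both $\xi_i$ and $\bar\xi_i$ are correlated with $\hat\gamma^{\cat}$, I would apply Lemma~\ref{lem:approx_bdd} to each with a union bound; combining the two and using $\|\hat\gamma^{\cat}\|_1=C$ produces the $\frac{\sigma(1+1/\sqrt{T_0})}{\sqrt{T_0K}}\big(2C\sqrt{\log2N_0}+(1+C)\delta\big)$ contribution.

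Finally I would collect the pieces with a union bound over the failure events: $8\exp(-\delta^2/2)$ from the two uses of Lemma~\ref{lem:approx_bdd} and the tail bounds feeding $q^{\cat}(\gamma^\ast)$ through Lemma~\ref{claim:better_qavg}, plus $4\exp\big(-T_0K\delta^2/(2\sigma^2(1+C^2))\big)$ after bounding $\|\gamma^\ast\|_2\le C$, and simplify using $\|\hat\gamma^{\cat}\|_2\le\|\hat\gamma^{\cat}\|_1=C$. The step I expect to be the main obstacle is the bookkeeping on the sub-Gaussian scales of $\xi_i$ and $\bar\xi_i$: extracting the $\tfrac{1}{\sqrt{T_0K}}$ rate (instead of the $\tfrac{1}{\sqrt{T_0}}$ rate of the separate weights) requires carefully tracking how the extra sum over the $K$ outcomes interacts with the concatenated Gram-matrix normalization $\sum_{s,l}\mu_{sl}\mu_{sl}'$ and with the demeaning, and this is precisely the place where the $K$-fold improvement in the overfitting term materializes.
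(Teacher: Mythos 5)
Your proposal is correct and follows essentially the same route as the paper, whose own proof of Theorem~\ref{thm:cat_bdd} simply defines the concatenated projection weights $\omega_{tk}=\mu_{Tj}\cdot\left(\sum_{k}\sum_{t}\mu_{tk}\mu_{tk}'\right)^{-1}\mu_{tk}$ and states that the rest mimics Theorem~\ref{thm:sep_bdd}: Cauchy--Schwarz plus optimality of $\hat{\gamma}^{cat}$ to reduce $|R_0^{cat}|$ to $q^{cat}(\gamma^\ast)$ controlled by Lemma~\ref{claim:better_qavg}, and Lemma~\ref{lem:approx_bdd} with a union bound for the two pieces of $R_1^{cat}$. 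Your accounting of the sub-Gaussian scales, the $1/\sqrt{T_0K}$ rate, and the failure probabilities matches the paper's (and is in fact more explicit than what the paper writes).
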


\begin{proof}[Proof of Theorem \ref{thm:cat_bdd}]
 As discussed in the main text, denote the projected factor value to be $\omega_{tk}=\mu_{Tj}\cdot\left(\sum_{k=1}^{K}\sum_{t=1}^{T_{0}}\mu_{tk}\mu_{tk}'\right)^{-1}\mu_{tk}$, we can decompose the bias  into the following two terms $R_{0}^{cat}$ and $R_{1}^{cat}$:

$$ L_{1Tj}(0)-\sum_{W_{i}=0}\hat{\gamma}^{cat}_{i} L_{iTj} =  \sum_{k=1}^{K}\sum_{t=1}^{T_{0}}\omega_{tk}(Y_{1tk}-\sum_{W_{i}=0}\hat{\gamma}^{cat}_{i}Y_{itk})  - \sum_{k=1}^{K}\sum_{t=1}^{T_{0}}\omega_{tk}(\varepsilon_{1tk}-\sum_{W_{i}=0}\hat{\gamma}^{cat}_{i}\varepsilon_{itk}). 
$$  
The rest of the proof therefore mimics the proof of Theorem~\ref{thm:sep_bdd}, or can be found in the prior version of this paper \citep{sun2023usingmultipleoutcomesimprove}.
\end{proof}

\begin{theorem}[Bound for average weights]\label{thm:avg_bdd}
Suppose Assumptions~\ref{assumption:noiseless},~\ref{assumption:oracle}, ~\ref{assumption:SG} and~\ref{assumption:singularity} hold. Then for any $\delta>0$, we have the following bound
\begin{align*}
 \left\vert Bias(\hat{\gamma}^{avg})\right\vert&\leq  \frac{rM^{2}}{\underbar{\ensuremath{\xi}}^{avg}}\left( (\frac{4\sigma}{\sqrt{K}}(1+\|\gamma^\ast\|_2)+2\delta) +  \frac{\sigma\cdot(1+1/\sqrt{T_0})}{\sqrt{T_{0}K}}\left( 2C\sqrt{\log2N_{0}} + (1+C)\delta\right) \right)
 \end{align*}
 with probability at least $1-8\exp\left(-\frac{\delta^{2}}{2}\right)-4\exp\left(-\frac{T_0K\delta^2}{2\sigma^2(1+C^{2})}\right).$ 
\end{theorem}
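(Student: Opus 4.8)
The plan is to follow the proof of Theorem~\ref{thm:sep_bdd} essentially verbatim, replacing every per-outcome quantity by its average over the $K$ outcomes; the only genuinely new ingredient sits in the first step. Fixing an outcome $k$, I would start from $\text{Bias}(\hat\gamma^{\avg}) = \mu_{Tk}\cdot\big(\phi_1 - \sum_{W_i=0}\hat\gamma^{\avg}_i\phi_i\big)$ via the factor representation in Equation~\eqref{eq:factor-deterministic}, set $\bar\mu_t = \tfrac1K\sum_{j=1}^K\mu_{tj}$, and use Assumption~\ref{assumption:singularity}(iii) to guarantee that $\sum_{t=1}^{T_0}\bar\mu_t\bar\mu_t'$ is invertible. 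Defining the projected factor weights $\omega_t \equiv \mu_{Tk}\cdot\big(\sum_{t'=1}^{T_0}\bar\mu_{t'}\bar\mu_{t'}'\big)^{-1}\bar\mu_t$ and inserting the identity $v = \sum_t\big(\sum_{t'}\bar\mu_{t'}\bar\mu_{t'}'\big)^{-1}\bar\mu_t(\bar\mu_t\cdot v)$ with $v = \phi_1 - \sum_{W_i=0}\hat\gamma^{\avg}_i\phi_i$, and writing $\dot{\bar L}_{it\cdot}$ for the (de-meaned) averaged model component $\bar\mu_t\cdot\phi_i$, I get $\text{Bias}(\hat\gamma^{\avg}) = \sum_{t=1}^{T_0}\omega_t\big(\dot{\bar L}_{1t\cdot} - \sum_{W_i=0}\hat\gamma^{\avg}_i\dot{\bar L}_{it\cdot}\big)$. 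Substituting the decomposition of the de-meaned averaged outcome $\dot{\bar Y}_{it\cdot} = \tfrac1K\sum_j\dot Y_{itj}$ into model component plus noise then splits this into $R_0^{\avg} - R_1^{\avg}$ exactly as in Equations~\eqref{eq:bias_r0}--\eqref{eq:bias_r1}, and Assumption~\ref{assumption:singularity}(iii) together with the bound $M$ on the factors gives $\omega_t^2 \le \big(rM^2/(\underbar{\ensuremath{\xi}}^{\avg}T_0)\big)^2$, just as in Theorem~\ref{thm:sep_bdd}.

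Next I would bound the imbalance term. Cauchy--Schwarz gives $|R_0^{\avg}| \le \sqrt{\sum_t\omega_t^2}\,\sqrt{\sum_t\big(\dot{\bar Y}_{1t\cdot} - \sum_{W_i=0}\hat\gamma^{\avg}_i\dot{\bar Y}_{it\cdot}\big)^2} \le \tfrac{rM^2}{\underbar{\ensuremath{\xi}}^{\avg}}\,q^{\avg}(\hat\gamma^{\avg})$. Since $\hat\gamma^{\avg}$ minimizes $q^{\avg}$ over $\mathcal C$ and Assumption~\ref{assumption:oracle} puts $\gamma^\ast$ in $\mathcal C$, we have $q^{\avg}(\hat\gamma^{\avg}) \le q^{\avg}(\gamma^\ast)$, and bound~\eqref{eq:oracle_qavg} of Lemma~\ref{claim:better_qavg} controls $q^{\avg}(\gamma^\ast)$ by $\tfrac{4\sigma\sqrt{1+\|\gamma^\ast\|_2^2}}{\sqrt K}+2\delta$ with probability at least $1-4\exp\big(-\tfrac{T_0K\delta^2}{2\sigma^2(1+\|\gamma^\ast\|_2^2)}\big)$; using $\sqrt{1+\|\gamma^\ast\|_2^2}\le 1+\|\gamma^\ast\|_2$ produces the first summand of the claimed bound. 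This is the step carrying the $1/\sqrt K$ gain over Theorem~\ref{thm:sep_bdd}: averaging across the independent outcomes shrinks the idiosyncratic noise in the objective, hence the oracle's unavoidable imbalance, by a factor $1/\sqrt K$.

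For the overfitting term $R_1^{\avg}$, I would write $\dot{\bar\varepsilon}_{it\cdot} = \bar\varepsilon_{it\cdot} - \tfrac1{T_0}\sum_{t'}\bar\varepsilon_{it'\cdot}$ with $\bar\varepsilon_{it\cdot} = \tfrac1K\sum_j\varepsilon_{itj}$, so that $R_1^{\avg} = \big(\xi_1 - \sum_{W_i=0}\hat\gamma^{\avg}_i\xi_i\big) - \big(\bar\xi_1 - \sum_{W_i=0}\hat\gamma^{\avg}_i\bar\xi_i\big)$ with $\xi_i = \sum_t\omega_t\bar\varepsilon_{it\cdot}$ and $\bar\xi_i = \big(\sum_t\omega_t\big)\tfrac1{T_0}\sum_{t'}\bar\varepsilon_{it'\cdot}$. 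Under Assumptions~\ref{assumption:noiseless} and~\ref{assumption:SG}, $\bar\varepsilon_{it\cdot}$ is mean-zero sub-Gaussian with scale $\sigma/\sqrt K$ and independent across $t$, so $\xi_i$ is sub-Gaussian with scale at most $\sqrt{\sum_t\omega_t^2}\cdot\sigma/\sqrt K \le \tfrac{rM^2}{\underbar{\ensuremath{\xi}}^{\avg}}\tfrac{\sigma}{\sqrt{T_0K}}$, while $\bar\xi_i$ has scale smaller by a further factor $1/\sqrt{T_0}$ (the finite-$T_0$ de-meaning, i.e.\ ``Nickell,'' contribution). Applying Lemma~\ref{lem:approx_bdd} to each of the two terms with $\|\hat\gamma^{\avg}\|_1 = C$ and $\|\gamma^\ast\|_2\le\|\gamma^\ast\|_1\le C$, and combining all the high-probability events by a union bound, delivers the second summand and the stated overall probability $1-8\exp(-\delta^2/2) - 4\exp\big(-\tfrac{T_0K\delta^2}{2\sigma^2(1+C^2)}\big)$.

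The part I expect to require the most care is the very first step --- re-expressing the per-outcome bias as imbalance in the \emph{averaged} pre-treatment series by projecting onto the averaged Gram matrix $\sum_t\bar\mu_t\bar\mu_t'$ rather than the per-outcome Gram matrix $\sum_t\mu_{tk}\mu_{tk}'$ used for the separate weights. This is precisely where Assumption~\ref{assumption:singularity}(iii) is indispensable: if averaging over outcomes cancelled out the time variation in the latent factors, that Gram matrix would be singular and no analogue of the $\omega_t^2$ bound would survive. Everything after that step is bookkeeping parallel to the proof of Theorem~\ref{thm:sep_bdd}, with $\underbar{\ensuremath{\xi}}^{\sep}$ and $r_k$ replaced by $\underbar{\ensuremath{\xi}}^{\avg}$ and $r$, and $\sigma$ replaced by $\sigma/\sqrt K$ wherever the averaged noise enters.
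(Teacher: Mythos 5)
Your proposal is correct and follows essentially the same route as the paper: the paper's own proof of Theorem~\ref{thm:avg_bdd} defines exactly the projected weights $\omega_{tj}=\mu_{Tj}\cdot\left(\sum_{t}\bar{\mu}_{t}\bar{\mu}_{t}'\right)^{-1}\bar{\mu}_{t}$, splits the bias into $R_0^{avg}$ and $R_1^{avg}$, and then ``mimics'' Theorem~\ref{thm:sep_bdd} with averaged quantities --- i.e., Cauchy--Schwarz plus optimality of $\hat\gamma^{avg}$ and the oracle bound~\eqref{eq:oracle_qavg} for $R_0^{avg}$, and Lemma~\ref{lem:approx_bdd} with the $\sigma/\sqrt{K}$ scale of $\bar\varepsilon_{it\cdot}$ for $R_1^{avg}$, combined by a union bound. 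Your identification of Assumption~\ref{assumption:singularity}(iii) as the step that makes the averaged Gram matrix usable is exactly where the paper locates the key requirement as well.
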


 \begin{proof}[Proof of Theorem \ref{thm:avg_bdd}] 
 Denote the average outcome $\bar{Y}_{it}=\frac{1}{K}\sum_{k=1}^{K}Y_{itk}$
and similarly $\bar{\mu}_{t}=\frac{1}{K}\sum_{k=1}^{K}\mu_{tk}$. Denote the projected average factor value to be $\omega_{tj}=\mu_{Tj}\cdot\left(\sum_{t=1}^{T_0}\left(\bar{\mu}_{t}\right)\left(\bar{\mu}_{t}\right)'\right)^{-1}\bar{\mu}_{t}$ we can decompose the bias into the following two terms $R_{0}^{avg}$ and $R_{1}^{avg}$:
 
$$L_{1Tj}(0)-\sum_{W_{i}=0}\hat{\gamma}^{avg}_{i} L_{iTj} = \sum_{t=1}^{T_{0}}\omega_{tj} (\bar{Y}_{1t}-\sum_{W_{i}=0}\hat{\gamma}^{avg}_{i}\bar{Y}_{it})  -\sum_{t=1}^{T_{0}}\omega_{tj}(\bar{\varepsilon}_{1t}-\sum_{W_{i}=0}\hat{\gamma}^{avg}_{i}\bar{\varepsilon}_{it}).$$

The rest of the proof therefore mimics the proof of Theorem~\ref{thm:sep_bdd}, or can be found in the prior version of this paper \citep{sun2023usingmultipleoutcomesimprove}.

\end{proof}

\begin{lemma}[combined weights]\label{lem:combined imbalance}
   Suppose there exists $\nu^\ast \in [0,1]$ such that for $\hat\gamma^{\com} \in\arg\min_{\gamma\in \mathcal C} \nu^\ast q^{avg}(\gamma) + (1-\nu^\ast) q^{cat}(\gamma)$, we have $q^{avg}(\gamma^{\com}) \leq q^{avg}(\gamma^\ast) $ and $q^{cat}(\gamma^{\com}) \leq q^{cat}(\gamma^\ast) $ almost surely. For any $\delta>0$, let $\tilde \sigma=\left( 2C\sqrt{\log2N_{0}} + (1+C)\delta\right)(1+1/\sqrt{T_0})\sigma$,  with probability at least $1-8\exp\left(-\frac{\delta^{2}}{2}\right)-4\exp\left(-\frac{T_0K\delta^2}{2\sigma^2(1+C^2)}\right)$, the absolute bias satisfies the bound, 
 
\begin{align*}
\left| \text{Bias}(\hat{\gamma}^{\com}) \right| \leq \min & \left\{ \frac{rM^{2}}{\underline{\xi}^{cat}} \left( 4(1+C)\sigma + 2\delta + \frac{\tilde{\sigma}}{\sqrt{T_{0}K}} \right),     \frac{rM^{2}}{\underline{\xi}^{cat}} \left( \frac{4(1+C)\sigma}{\sqrt{K}} + 2\delta + \frac{\tilde{\sigma}}{\sqrt{T_{0}K}} \right) \right\}.  
\end{align*}  
Furthermore, under Assumptions~\ref{a:conformal} and~\ref{a:conformal_avg}, the conformal inference procedure outlined in Section~\ref{sec:inference} is valid for $\gamma^\com$.
\end{lemma}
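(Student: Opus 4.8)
The plan is to prove the finite-sample bias bound by running the argument in the proofs of Theorems~\ref{thm:cat_bdd} and~\ref{thm:avg_bdd} twice, for the \emph{same} weights $\hat\gamma^{\com}$: once with the concatenated factor projection used in the proof of Theorem~\ref{thm:cat_bdd} and once with the averaged factor projection used in the proof of Theorem~\ref{thm:avg_bdd}, and then taking the smaller of the two resulting bounds. The decomposition $\text{Bias}(\hat\gamma)=R_0-R_1$ of~\eqref{eq:bias_r0}--\eqref{eq:bias_r1} holds for \emph{any} feasible weights, so both versions apply verbatim to $\hat\gamma^{\com}$ with no new identity needed; the sole role of the lemma's hypothesis --- that $q^{\avg}(\hat\gamma^{\com})\le q^{\avg}(\gamma^\ast)$ and $q^{\cat}(\hat\gamma^{\com})\le q^{\cat}(\gamma^\ast)$ almost surely --- is to substitute for the ``$\hat\gamma$ minimizes the corresponding objective'' step those proofs invoke.

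Concretely, for each projection I would bound $|R_0|$ by Cauchy--Schwarz and the singular-value lower bounds of Assumption~\ref{assumption:singularity}, obtaining $|R_0^{\cat}|\le(\underline{\xi}^{\cat})^{-1}rM^2\,q^{\cat}(\hat\gamma^{\com})$ and $|R_0^{\avg}|\le(\underline{\xi}^{\avg})^{-1}rM^2\,q^{\avg}(\hat\gamma^{\com})$ exactly as there; the hypothesis then lets me pass to $q^{\cat}(\gamma^\ast)$ and $q^{\avg}(\gamma^\ast)$, which Lemma~\ref{claim:better_qavg}(ii)--(iii) control, and $\sqrt{1+\|\gamma^\ast\|_2^2}\le 1+\|\gamma^\ast\|_2\le 1+C$ gives the $4(1+C)\sigma$ and $4(1+C)\sigma/\sqrt{K}$ leading terms (the $(1+1/\sqrt{T_0})$ factors absorbed as in Theorem~\ref{thm:cat_bdd}). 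For $|R_1|$, in each case I would write the term as a difference of weighted sub-Gaussian sums, use $\|\hat\gamma^{\com}\|_1\le C$ by construction, and apply Lemma~\ref{lem:approx_bdd} as in Theorems~\ref{thm:cat_bdd}/\ref{thm:avg_bdd}, producing the common $\tilde\sigma/\sqrt{T_0K}$ term. Taking the minimum of the two bounds and a union bound over the concentration events finishes the inequality; here I would reuse the error-norm events of Lemma~\ref{claim:better_qavg_errors} across the two projections and recombine the de-meaned-noise events --- e.g.\ bounding $|\frac{1}{K}\sum_k\bar\varepsilon_{1\cdot k}-\cdots|$ by $(\frac{1}{K}\sum_k(\bar\varepsilon_{1\cdot k}-\cdots)^2)^{1/2}$ --- so that the failure probability does not exceed the one quoted in the statement.

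For the conformal-validity claim, the plan is to re-run the proof of Lemma~\ref{lem:conformal} (under Assumptions~\ref{a:conformal} and~\ref{a:conformal_avg}) with $\hat\gamma^{\com}$ in place of $\hat\gamma^{\avg}$: the only property of $\hat\gamma^{\avg}$ used there is that it attains an averaged objective (with the post-treatment period included) no larger than the feasible oracle's, which the lemma's hypothesis grants for $\hat\gamma^{\com}$. Hence the claim from the proof of Lemma 1 and Lemma H.8 of \citet{chernozhukov2021exact} still yields $\frac{1}{T}\sum_{t}(\dot{\bar{Y}}_{it\cdot}(\hat\gamma^{\com}_i-\gamma^\ast_i))^2=o_p(1)$ uniformly in $i$, and Assumption~\ref{a:conformal_avg}\ref{a:new} transfers this to the per-outcome statements $\frac{1}{T}\sum_{t}(\dot Y_{itk}(\hat\gamma^{\com}_i-\gamma^\ast_i))^2=o_p(1)$ and $\dot Y_{iTk}(\hat\gamma^{\com}_i-\gamma^\ast_i)=o_p(1)$ required by Theorem 1 of \citet{chernozhukov2021exact}. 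I expect the main obstacle to be bookkeeping rather than any hard estimate: verifying that the conditional hypothesis is invoked in exactly the two places where unconditional optimality of $\hat\gamma^{\cat}$ or $\hat\gamma^{\avg}$ entered the earlier proofs --- the $R_0$ Cauchy--Schwarz step and the \citet{chernozhukov2021exact} consistency step --- and that merging the concatenated and averaged branches does not inflate the stated probability.
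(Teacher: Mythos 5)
Your proposal is correct and follows essentially the same route as the paper: the paper's proof simply observes that the hypothesis $q^{\avg}(\hat\gamma^{\com})\le q^{\avg}(\gamma^\ast)$ and $q^{\cat}(\hat\gamma^{\com})\le q^{\cat}(\gamma^\ast)$ is exactly the only place where optimality of $\hat\gamma^{\cat}$ or $\hat\gamma^{\avg}$ entered Theorems~\ref{thm:cat_bdd} and~\ref{thm:avg_bdd}, so both bounds apply to $\hat\gamma^{\com}$ and one takes their minimum, and likewise the averaged-objective inequality is the only ingredient needed to rerun Lemma~\ref{lem:conformal}. Your extra care about merging the concentration events so the failure probability matches the stated one is a reasonable bookkeeping refinement that the paper's (terser) proof does not spell out.
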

\begin{proof}
    Since  $q^{avg}(\gamma^{\com}) \leq q^{avg}(\gamma^\ast) $ and $q^{cat}(\gamma^{\com}) \leq q^{cat}(\gamma^\ast) $ almost surely, either of the two bias bounds  stated in Theorem~\ref{thm:cat_bdd} and~\ref{thm:avg_bdd} is a valid upper bound for the estimate based on the combined weights 
 $\hat\gamma^{\com}$. We may therefore take the minimum of the two bounds to bound $|Bias(\hat\gamma^{\com})|$.

 Furthermore, by the assumption of $q^{avg}(\gamma^{\com}) \leq q^{avg}(\gamma^\ast) $, we have
    \[
        \frac{1}{T}\sum_{t=1}^T\left(\dot{\bar{Y}}_{1t\cdot} - \sum_{W_i=0} \dot{\bar{Y}}_{it\cdot}\hat{\gamma}_i^{\com}\right)^2\leq \frac{1}{T}\sum_{t=1}^T\left(\dot{\bar{Y}}_{1t\cdot} - \sum_{W_i=0} \dot{\bar{Y}}_{it\cdot} \gamma^\ast_i\right)^2.  
    \]
    The proof of Lemma~\ref{lem:conformal}, which is based on the same inequality for $\gamma^{avg}$, proceeds in a similar fashion, thereby establishing the validity of the conformal inference procedure outlined in Section~\ref{sec:inference} for $\gamma^\com$ as well.
\end{proof}

\section{Simulations}
\label{sec:sims}
We conduct a Monte Carlo study to further inspect the behavior of SCM estimators based on separate, concatenated, and average weights.

First, to focus on key ideas, we consider a simple model of the $k$\textsuperscript{th} outcome under control, 
\begin{equation}
Y_{itk}(0)=\phi_{i}\mu_{t}+\varepsilon_{itk},\label{eq:common}
\end{equation}
where $\phi_{i}$ is a scalar and $\varepsilon_{itk}\sim\mathcal{N}(0,1)$. Here multiple outcomes are repeated independent measurements of the same underlying model component that consists of a single latent factor.
We consider four settings for the number of pre-treatment time periods $T_0$ and outcomes $K$: (i) $T_0 = 10, K = 4$; (ii) $T_0 = 10, K = 10$; (iii) $T_0 = 40, K = 4$; (iv) $T_0 = 40, K = 10$.

 The factor loadings $\phi_i$ are evenly spaced over the interval $[1,5]$ for $i=1,\dots,50$. Similar to \cite{Ferman_Properties_2021}, we set the treated unit to be the unit with the second largest factor loading.
This choice injects selection of the treated unit based on the factor loadings, so that a simple difference in means would be biased.
It also guarantees the existence of oracle weights that solve  $\phi_{1}-\sum_{W_{i}=0}\gamma_{i}^{\ast}\phi_{i}=0$. We set the factor values $\mu_t$ to be evenly spaced over the interval $[0.5,1]$ for $t=1,\dots,T_0+1$, reflecting an upward time trend.
\begin{figure}
    \centering
    \begin{subfigure}{0.48\textwidth}
        \centering
        \includegraphics[width=\textwidth, trim=0cm 2cm 0cm 2cm]{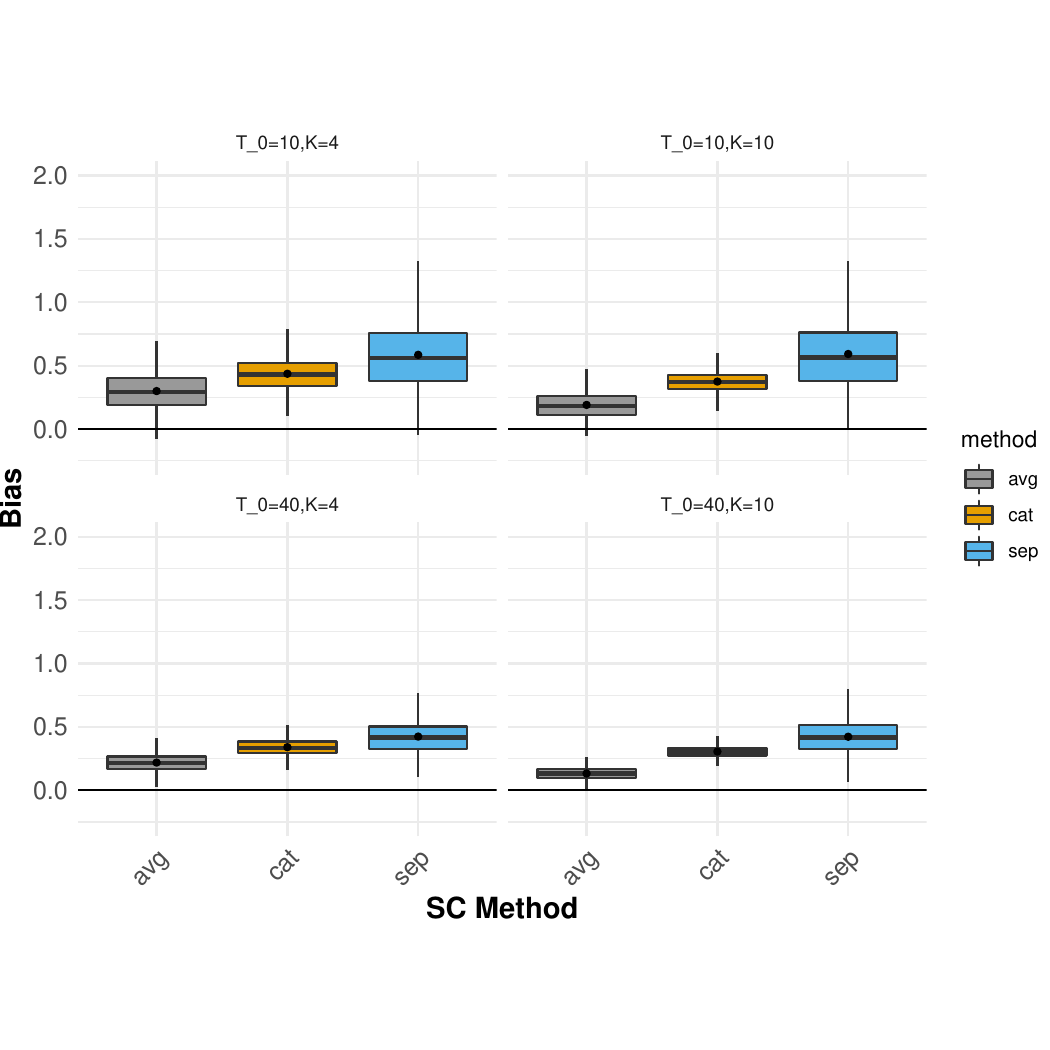}
        \caption{}
        \label{fig:bias_sims}
    \end{subfigure}
    \hfill
    \begin{subfigure}{0.48\textwidth}
        \centering
        \includegraphics[width=\textwidth, trim=0cm 2cm 0cm 2cm]{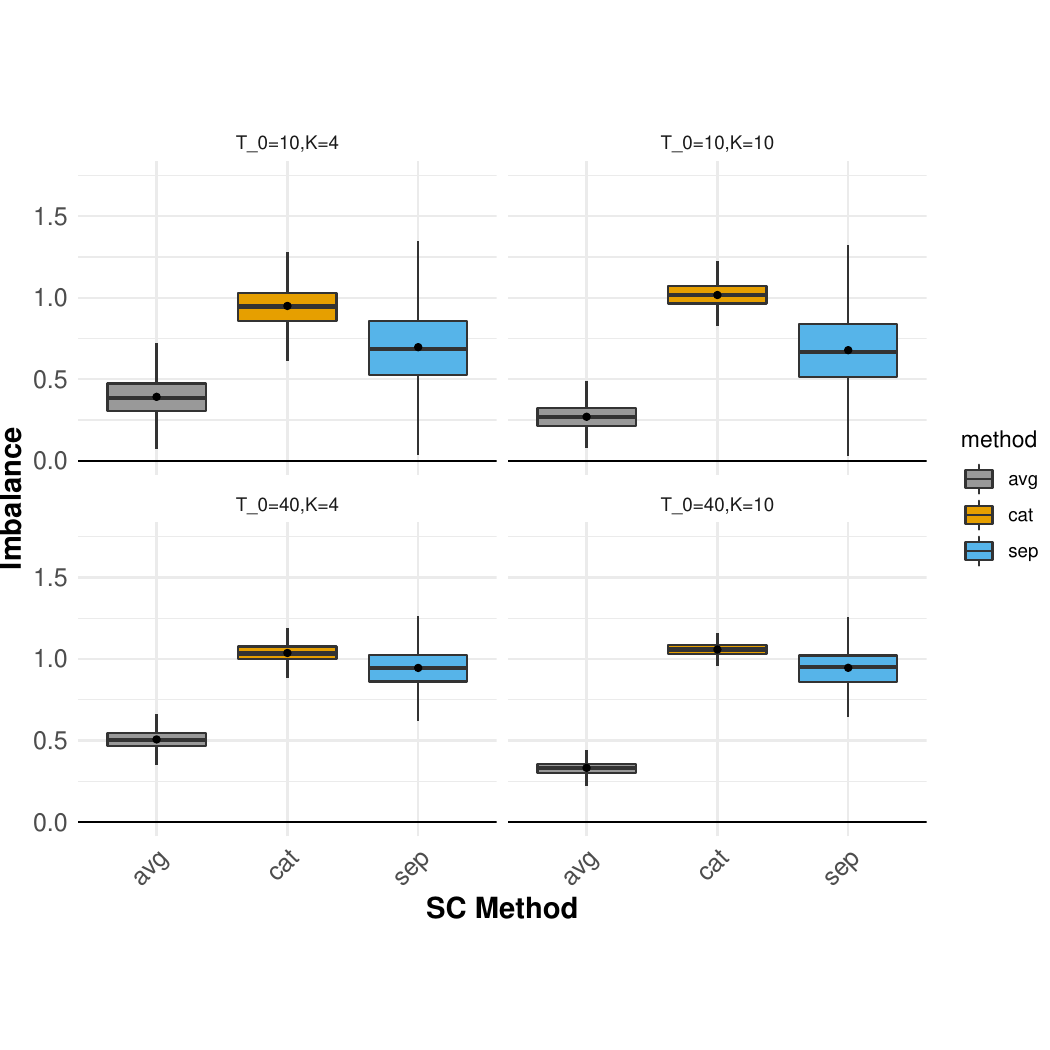}
        \caption{}
        \label{fig:imbalance_sims}
    \end{subfigure}
    \caption{Box plots of bias and imbalance using separate SCM, concatenated SCM, and average SCM over 1000 simulations.}
    \label{fig:bias_and_imbalance}
\end{figure}

Figure~\ref{fig:bias_sims} compares the distribution of the bias for estimating the treatment effect on the first outcome under different weighting estimators.
Consistent with Theorem~\ref{thm:error_bounds}, Figure~\ref{fig:bias_sims} illustrates that, relative to separate weights, the concatenated and average weights reduce bias in settings with multiple outcomes.
We also see that, as expected, the average weights have smaller average bias than the concatenated weights.

To further inspect this, Figure~\ref{fig:imbalance_sims} contrasts the imbalance for each type of weight with the corresponding objective functions.
First, the concatenated weights have slightly greater imbalance than the separate weights, highlighting the difficulty in achieving good pre-treatment fit on all outcomes simultaneously relative to good pre-treatment fit for a single outcome alone.
However, the average bias for the concatenated weights is still smaller than for the separate weights, showing that the reduction in overfitting by concatenating  outweighs the slight reduction in pre-treatment fit.
Second, the average weights have much better pre-treatment fit than either alternative, with the fit improving as $K$ increases.
As Figure~\ref{fig:bias_sims} shows, this leads to further bias reduction, 
consistent with Theorem~\ref{thm:error_bounds} and the intuition from Table~\ref{tab:bound_rates}.

Second, we examine how the presence of idiosyncratic factors influences the performance of various estimators. For $\rho\in[0,1]$ where $\rho$ adjusts the importance of the common model component relative to the idiosyncratic model components, we generate control outcomes  from
\[
Y_{itk}(0)=\rho \phi_{i}\mu_{t}+(1-\rho) \phi_{ik}\mu_{tk}+\varepsilon_{itk}.
\]
Here, $\phi_{i}$ and $\mu_{t}$ represent the common model component, as previously defined in~\eqref{eq:common}. We set $\phi_{i1}=\phi_{i}$ and $\mu_{t1}=\mu_{t}$, ensuring that the first outcome is generated exactly as previously and that the performance of the separate SCM is held fixed. To introduce the idiosyncratic model components, for each $k=2,\dots,K$, we independently generate $\phi_{ik}$ from a standard normal and $\mu_{tk}$ from an autoregressive process with an autoregressive coefficient of 0.5. We then rescale $\phi_{ik}$ and $\mu_{tk}$ to match the range of $\phi_{i1}$ and $\mu_{t1}$, respectively. We set $\phi_{1k} = \sum_{W_i=0}\gamma_i^\ast
\phi_{ik}$ using the same oracle weights $\gamma_i^\ast$ as before to maintain the existence of the oracle weights. Importantly, these outcome-specific factors and loadings are  generated independently of each other to  reflect the idiosyncratic components of the overall factor structure.

\begin{figure}[htb]
    \centering
    \begin{subfigure}{0.48\textwidth}
        \centering
        \includegraphics[width=\textwidth, trim=0cm 2cm 0cm 2cm]{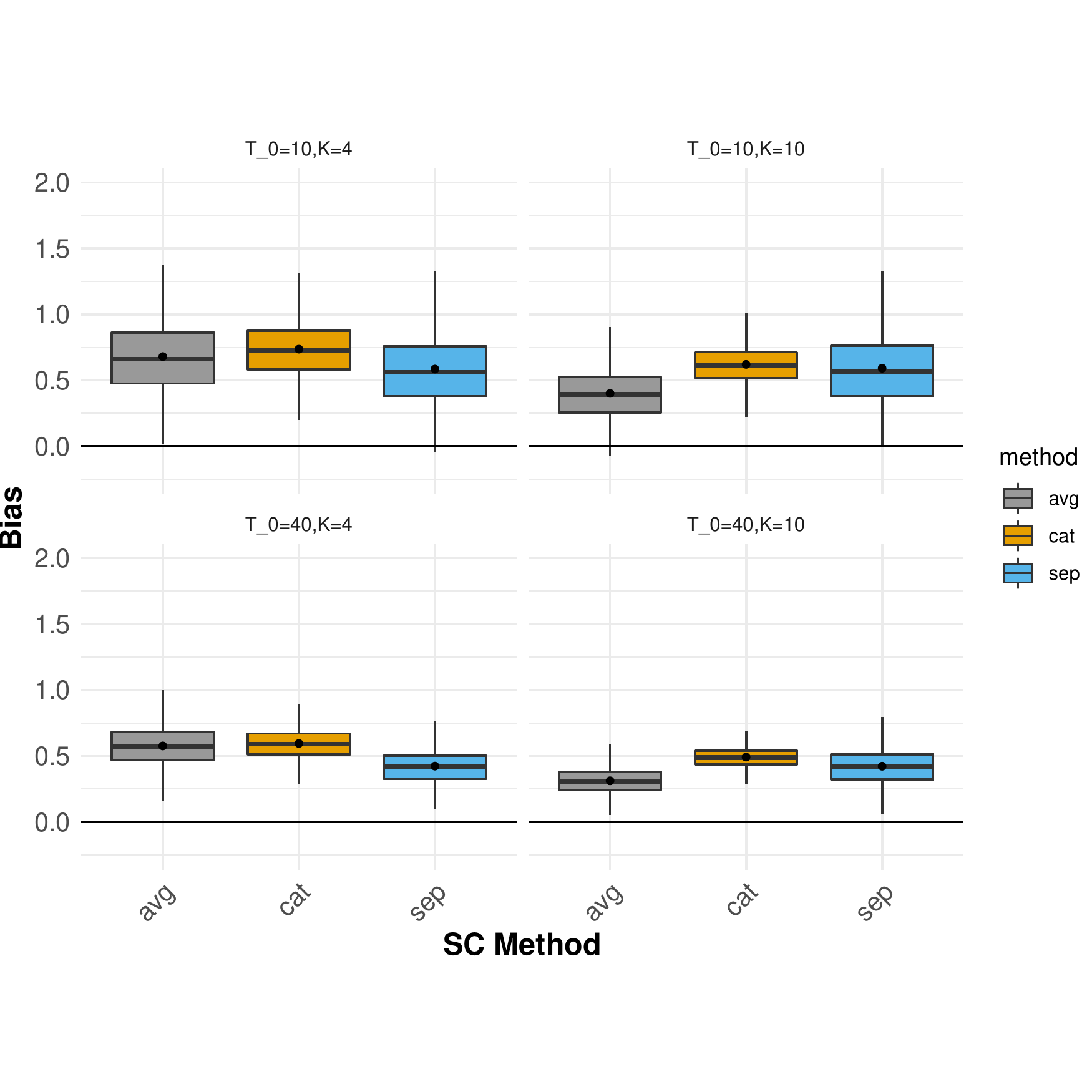}
        \caption{Common + idiosyncratic factors; $\rho=0.5$}
        \label{fig:bias_sims_rho05}
    \end{subfigure}
    \hfill
    \begin{subfigure}{0.48\textwidth}
        \centering
        \includegraphics[width=\textwidth, trim=0cm 2cm 0cm 2cm]{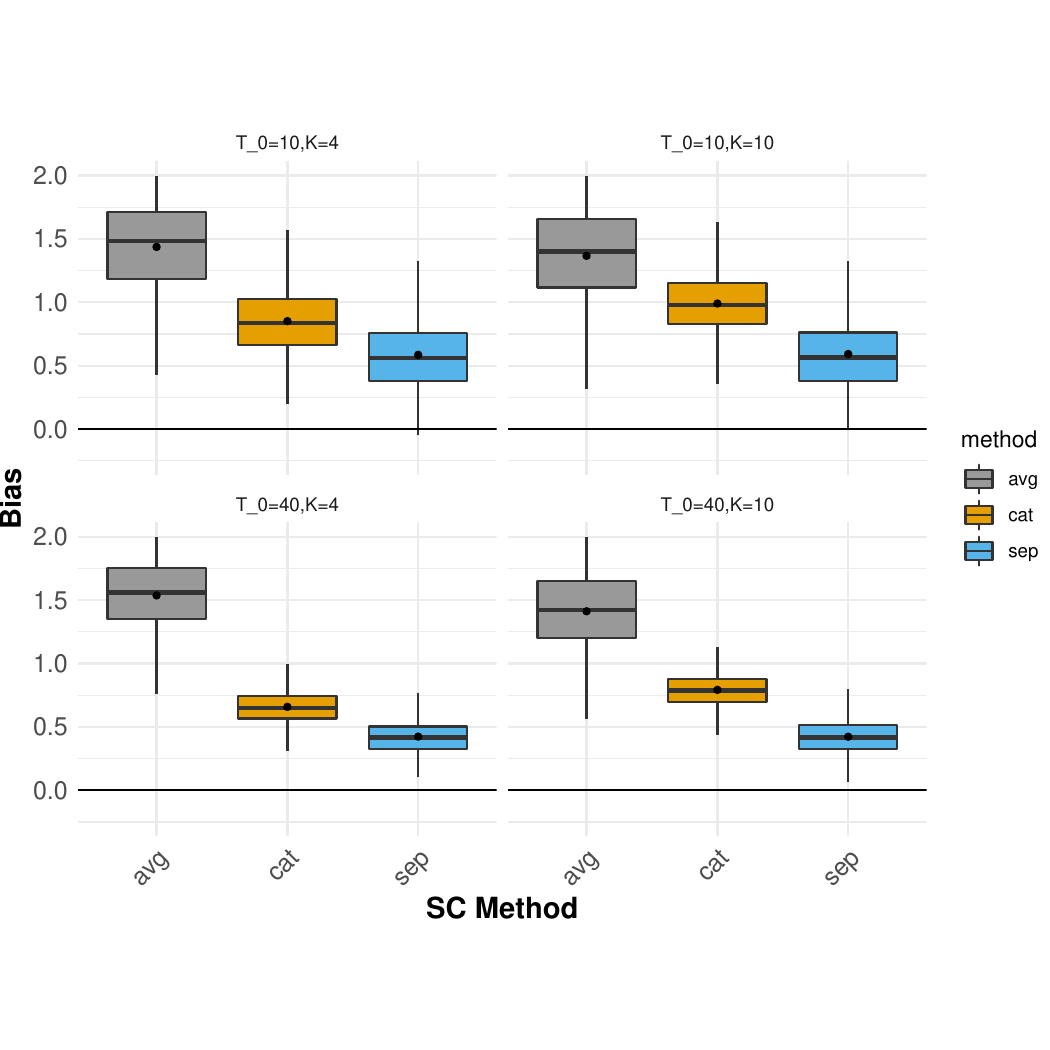}
        \caption{Only idiosyncratic factors; $\rho=0$}
        \label{fig:bias_sims_rho0}
    \end{subfigure}
    \caption{Box plots of bias using separate SCM, concatenated SCM, and average SCM over 1000 simulations.}
    \label{fig:bias_idiosyncratic}
\end{figure}

To assess the impact of idiosyncratic factors on performance, we vary $\rho\in[0,1]$. When $\rho=1$, we revert to the common model in~\eqref{eq:common}.  For $\rho\in(0,1)$,  the model combines common and idiosyncratic components as described in~\eqref{eq:common and idiosyncratic}, where the  components are  correlated across $k=1,\dots,K$ through the common component $\phi_i\mu_t$.  In this case, averaging outcomes still reduces bias in the synthetic control as illustrated in Figure~\ref{fig:bias_sims_rho05}. Though the bias reduction requires a larger number of outcomes for the common component to dominate. However, when $\rho=0$, the outcomes are generated by idiosyncratic model components, creating an adversarial DGP for concatenated and average SCM as illustrated in Figure~\ref{fig:bias_sims_rho0}. A significant presence of idiosyncratic factors can be partly assessed in practice by spectral analysis of the observed outcome. For separate SCM, the top singular vector on average captures 86\% of the total variation in the $N\times {T_0}$ pre-treatment data matrix $Y_{it1}$. However, for concatenated SCM, the top singular vector captures only 45\% of the total variation in the $N\times (T_0  K)$ pre-treatment data matrix $Y_{itk}$. This suggests the presence of idiosyncratic factors as different outcomes are captured by different components. Average SCM exacerbates the bias  because averaging idiosyncratic factors across outcomes reduces overall variation, thereby violating Assumption~\ref{assumption:singularity}. 

A loss of signal-to-noise can also be  assessed in practice based on the condition number of the averaged outcome, which is the ratio of the largest  to the smallest singular value. Since averaging reduces the variation of the noise, average SCM should increase the condition number compared to separate SCM if there is a strong signal.  With only common factors ($\rho=1$), the condition number of average SCM is on average 227\% larger than that of separate SCM, indicating strong signal. 
With only idiosyncratic factors ($\rho=0$), this increase reduces to only 25\%, suggesting that average SCM offers little advantage over separate SCM in such cases.

\section{Additional Details for Flint Water Crisis Study}
\label{appendix:flint}

\paragraph{Data processing.} Math and reading achievement are measured via the annual state-administered educational assessments for grades 3-8, and are standardized at the grade-subject-year level. Special needs status is measured as the percent of students with a qualified special educational need. Attendance is in percent of days attended. The math, reading, and special needs series begin in 2007; daily attendance begins in 2009. Note that \citet{trejo_psychosocial_2021} also use 2006 data for special needs; we start our data series in 2007 to have multiple outcomes available for averaging, dropping attendance from the average for 2007 and 2008.
Finally, when averaging, we further standardize each outcome series using the series pre-treatment standard deviation.

\begin{table}[ht]
\centering
\begin{tabular}{rlr}
  \hline
 & District Name & Combined SCM Weight \\ 
  \hline
1 & Dowagiac Union School District & 0.19 \\ 
  2 & Oak Park School District & 0.18 \\ 
  3 & Lincoln Consolidated School District & 0.15 \\ 
  4 & Hamtramck School District & 0.14 \\ 
  5 & Houghton Lake Community Schools & 0.12 \\ 
  6 & Whittemore-Prescott Area Schools & 0.08 \\ 
  7 & River Rouge School District & 0.04 \\ 
  8 & Van Buren Public Schools & 0.04 \\ 
  9 & Beecher Community School District & 0.04 \\ 
  10 & Bloomingdale Public School District & 0.01 \\ 
   \hline
\end{tabular}
\caption{Synthetic control weights on Michigan districts combining Student Attendance, Special Needs, Reading Achievement, and Math Achievement outcomes using the combined objective with the heuristic choice of $\nu$. All districts not included recieved a weight of less than 0.001.}
\label{tab:combined_weights}
\end{table}

\begin{figure}[ht]
  \begin{centering}
        \includegraphics[width=\textwidth]{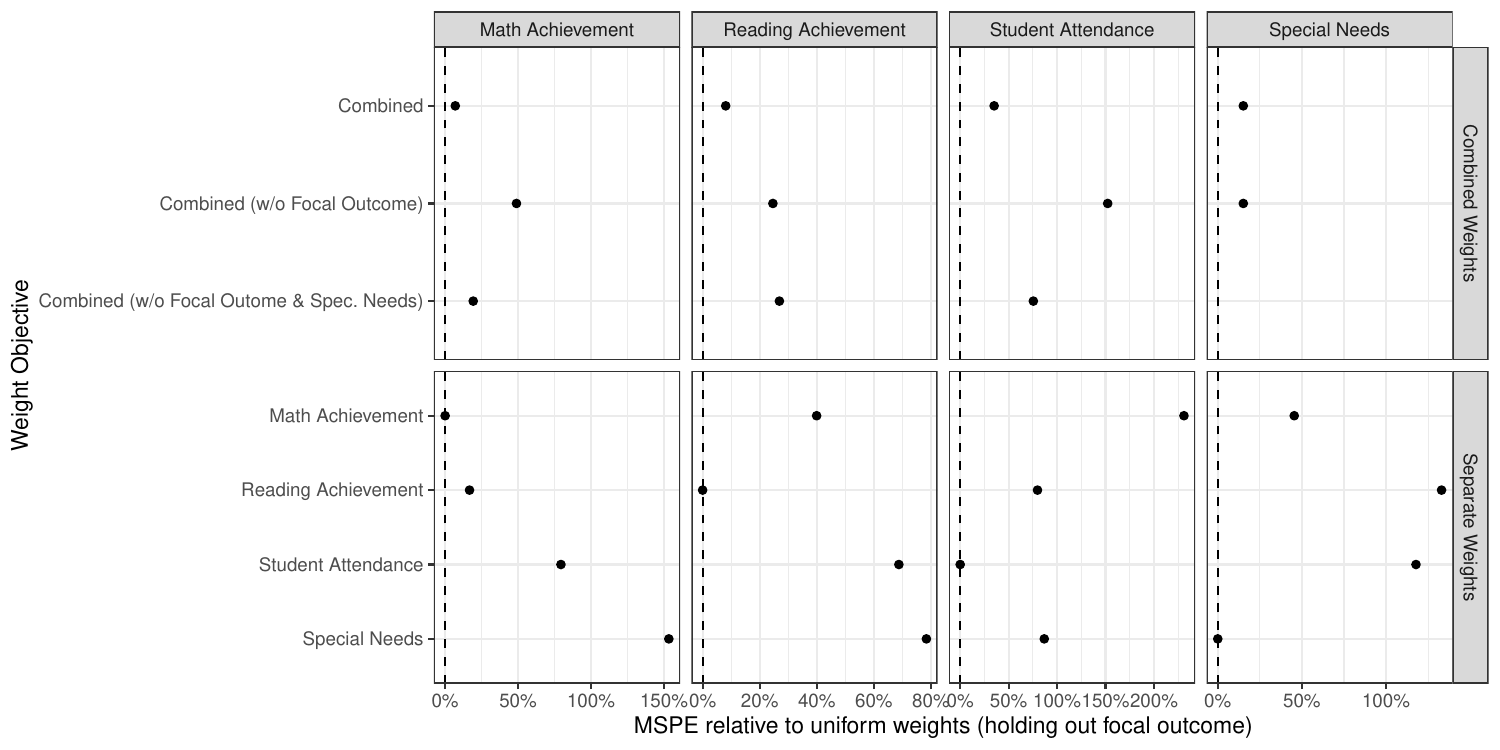}
  \par\end{centering}
  \caption{Mean Square Prediction Error (MSPE) of synthetic control, relative to the MSPE using uniform weights, for each outcome, fitting a synthetic control (i) separately on each outcome, (ii) combining all outcomes, (iii) combining all outcomes except for the focal outcome, and (iv) combining all outcomes except for the focal outcome and special needs. All combined objectives use an equal weight of $\nu = 0.5$ between the averaged and concatenated objectives. }\label{fig:pre_fit}

  \end{figure}

\begin{figure}[ht]
\begin{centering}
     \includegraphics[width=0.9\textwidth]{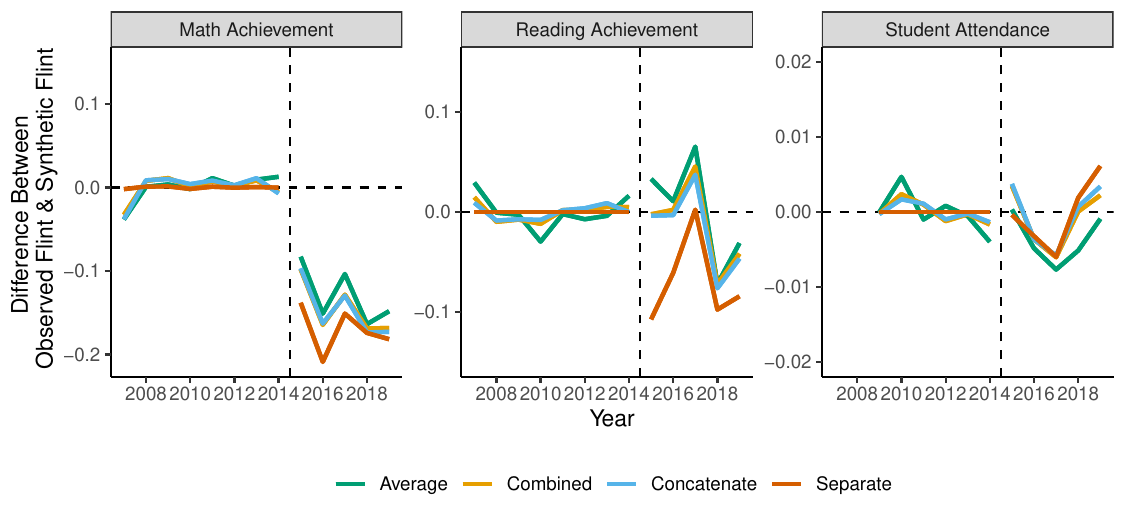}
\par\end{centering}
\caption{Point estimates for the effect of the Flint water crisis using SCM, concatenate SCM, and average SCM, without including special needs. Each outcome is standardized using the series pre-treatment standard deviation.}\label{fig:flint_nosped}
\end{figure}

\paragraph{Inference.} 
We use the conformal inference procedure outlined in Appendix \ref{sec:inference} to assess uncertainty in our reanalysis of \cite{trejo_psychosocial_2021}, with the caveat that the number of pre-treatment periods is only slightly larger than the number of post-treatment periods in this application.
We first test the null hypothesis of no effect on any outcomes in each time period, using the combined approach and i.i.d. permutations; this yields $p$-values for of 0.55, 0.11, 0.1, 0.24, 0.22 for 2015 to 2019. We then test the joint null hypothesis of no effect on any outcomes in any time period via a conformal inference procedure using all post-treatment time periods; here we find strong evidence against the null of no effect whatsoever, with $p = 0.035$. 
We also explore the sensitivity of the estimates by varying the combination between the concatenate and the average objective. Figure~\ref{fig:p_val_plot} illustrates that the results remain statistically significant at the 10\% level across a diverse range of combined weights $\nu$ --- including for the concatenated and average weights --- highlighting the robustness of the estimated negative impact.

\begin{figure}[ht]
\begin{centering}
     \includegraphics[width=0.5\textwidth]{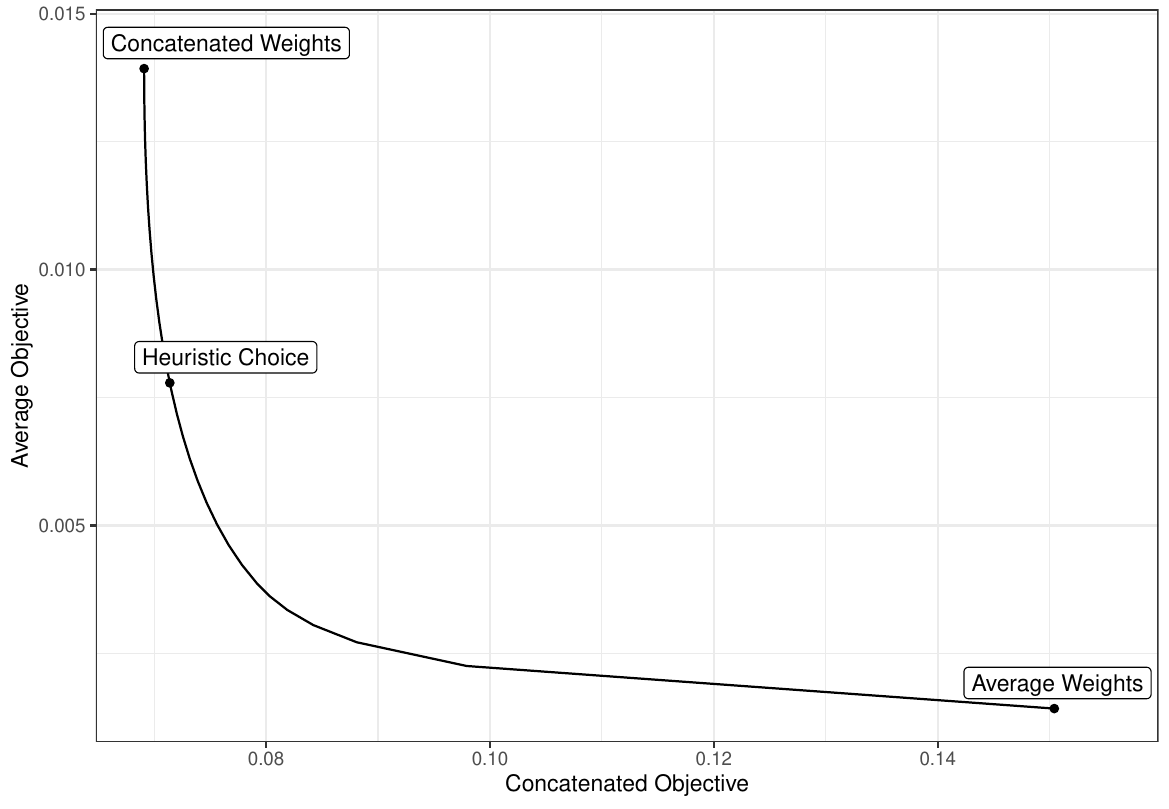}
\par\end{centering}
\caption{Frontier plot.}\label{fig:frontier}
\end{figure}

\begin{figure}[ht]
  \begin{centering}
       \includegraphics[width=0.5\textwidth]{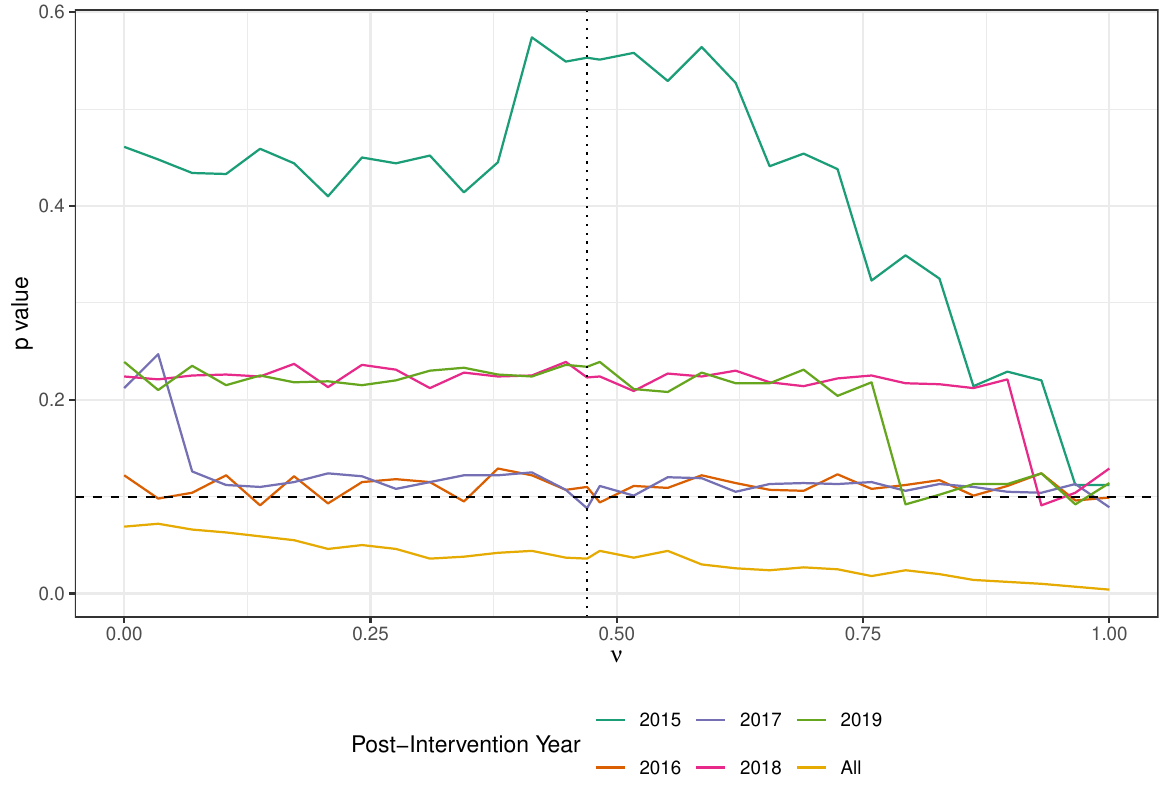}
  \par\end{centering}
  \caption{$p$ value of null of no effect on any outcome vs hyper-parameter $\nu$.}\label{fig:p_val_plot}
  \end{figure}

\end{document}